\newcommand{\blind}{0}
\newtheorem{thm}{Theorem}
\newtheorem{prop}[thm]{Proposition}
\newtheorem{rk}{Remark}
\newtheorem{theorem}[thm]{Theorem}
\newtheorem{assumption}[]{Assumption}
\newtheorem{lemma}{Lemma}
\newcommand{\nc}{\newcommand}
\nc{\dps}{\displaystyle}
\nc{\tr}{\text{tr}}
\def\boxit#1{\vbox{\hrule\hbox{\vrule\kern6pt
			\vbox{\kern6pt#1\kern6pt}\kern6pt\vrule}\hrule}}
\def\zzcomment#1{\vskip 2mm\boxit{\vskip 2mm{\color{blue}\bf#1} {\color{blue}\bf -- ZZ\vskip 2mm}}\vskip 2mm}
\numberwithin{equation}{section}
\begin{document}
		\def\spacingset#1{\renewcommand{\baselinestretch}%
		{#1}\small\normalsize} \spacingset{1}

	
	\if0\blind
	{
		\title{\bf Testing Conditional Independence via Density Ratio Regression \thanks{
				The authors are alphabetically ordered. Zixuan Xu is a PhD student. The present work constitutes part of her research study leading to
			her doctoral thesis at Renmin University of China. Zheng Zhang is the corresponding author. E-mail: zhengzhang@ruc.edu.cn.}}
		\author{Chunrong Ai\\School of Management and Economics\\Chinese University of Hong Kong, Shenzhen\\and\\Zixuan Xu\\
			 Institute of Statistics \& Big Data, Renmin University of China\\and\\Zheng Zhang\\
	Center for Applied Statistics, Institute of Statistics \& Big Data\\Renmin University of China}
		\maketitle
	} \fi
	
	\if1\blind
	{
		\bigskip
		\bigskip
		\bigskip
		\begin{center}
			{\LARGE\bf Testing Conditional Independence via Density Ratio Regression}
		\end{center}
		\medskip
	} \fi  
	
	\bigskip

	\begin{abstract}
	 This paper develops a conditional independence (CI) test from a conditional density ratio (CDR) for weakly dependent data. The main contribution is presenting a closed-form expression for the estimated conditional density ratio function with good finite-sample performance. The key idea is exploiting the linear sieve combined with the quadratic norm. \cite{matsushita2022estimating} exploited the linear sieve to estimate the unconditional density ratio. We must exploit the linear sieve twice to estimate the conditional density ratio. First, we estimate an unconditional density ratio with an unweighted sieve least-squares regression, as done in \cite{matsushita2022estimating}, and then the conditional density ratio with a weighted sieve least-squares regression, where the weights are the estimated unconditional density ratio. 
     The proposed test has several advantages over existing alternatives. First, the test statistic is invariant to the monotone transformation of the data distribution and has a closed-form expression that enhances computational speed and efficiency. Second, the conditional density ratio satisfies the moment restrictions. The estimated ratio satisfies the empirical analog of those moment restrictions. As a result, the estimated density ratio is unlikely to have extreme values. Third, the proposed test can detect all deviations from conditional independence at rates arbitrarily close to $n^{-1/2}$ , and the local power loss is independent of the data dimension.  A small-scale simulation study indicates that the proposed test outperforms the alternatives in various dependence structures. 
	\end{abstract}
	
		\noindent
	{\it Keywords:} Conditional density ratio, conditional
		independence, weighted least-squares.
	\vfill
	
	\newpage
	\spacingset{1.8} 

	\section{Introduction}
	Conditional independence (hereafter CI) is vital in econometrics, statistics, and machine learning studies. For example, in time-series regressions, researchers usually assume that the far-lagged dependent variables and the error terms are independent conditional on the most recent information. In the literature of program evaluation, the independence between program participation and latent outcomes conditional on the confounders is a key condition for identifying treatment effects \citep{linton2014testing,huang2016flexible}. CI is also a key condition for exploring queries in statistical inference, such as sufficiency, parameter identification, adequacy, and ancillary \citep{dawid1979conditional,dawid1980conditional}. Furthermore, CI is a widely used feature selection and screening measure in high-dimensional data learning \citep{xing2001feature,tong2022model}.   
    
	Because of the critical role of CI in the studies above, testing its validity has received increasing attention from the econometric literature. Existing studies proposed several tests based on conditional densities, distributions, or characteristic functions, with distance measures like $L^2$ distance, Hellinger distance, or (generalized) mutual information (also termed Kullback–Leibler (hereafter KL) information or entropy). These tests are called density-based, characteristics-based, and distribution-based tests respectively. Most of these tests use a kernel-smoothed nonparametric estimation. For example, \cite{su2008nonparametric} proposed a density-based test with kernel smoothing; 
 \cite{su2007consistent} and \cite{wang2018characteristic} proposed characteristics-based tests with kernel smoothing; \cite{su2014testing} proposed a distribution-based test with a kernel-smoothed conditional empirical likelihood. All of these tests are asymptotically pivotal with a standard limiting distribution and are consistent in detecting any deviation from CI. However, a key limitation is their significant loss of local power, and the power loss increases with the dimension of the data. 
    \cite{linton2014testing}, on the other hand, proposed a moment-restrictions test without kernel smoothing. However, their test needs a complicated bias correction, is not asymptotically pivotal, and is inconsistent against all deviations from CI. \cite{huang2016flexible} proposed an integrated moment test without kernel smoothing. Though consistent against all deviations from CI and achieving  $n^{-1/2}$ local power, their test needs numerical integration. \cite{ai2021testing} proposed a mutual information test for CI. Though consistent against all departures from CI, their test suffers a slight loss of local power and needs to numerically solve an optimization. Moreover, \cite{linton2014testing}, \cite{huang2016flexible} , and \cite{ai2021testing} only considered independent and identically distributed (i.i.d.) data, so their tests do not apply to dependent data. For an example of a CI test for dependent data, see testing Granger non-causality for two stationary ergodic time series \citep{florens1982noncausality,florens1996noncausality}.  
    
	The literature lacks a CI test that is asymptotically pivotal, consistent against all departures from CI, suffers a slight loss of power, and is easy to compute. This paper fills this gap with a weighted $L^2$-test for CI that applies to both independent and weakly dependent data. The main idea is to estimate a conditional density ratio via a weighted sieve least-squares regression, with an unconditional density ratio as weight. We estimate the unconditional density ratio by an unweighted sieve least-squares regression. \cite{kanamori2009least} used the unweighted sieve least-squares regression to calculate the ratio of two marginal probability densities, and \cite{matsushita2022estimating} applied the same idea to estimate the unconditional density ratio (i.e. the product of two marginal densities over the joint density). We extend their approach to estimate the conditional density ratio. The estimated conditional density ratio has a closed-form expression, satisfies the empirical moment restrictions, and has good finite-sample performance. Under sufficient conditions, we show that the proposed test is asymptotically pivotal with a standard limiting distribution and is consistent against all departures from CI. The test suffers only a slight loss of local power and the loss is independent of the data dimension. 
    
    The rest of the paper is organized as follows. Section \ref{sec:framework} provides a basic framework and describes the least-squares approach for estimating density ratios. Section \ref{sec:asymptotics} studies the convergence rates of the estimated density ratio and establishes the limiting distribution and the power against local and fixed alternatives, respectively, for the test statistic. We report a simulation analysis of the size and power of the proposed test in Section \ref{sec:simulation} and an application in Section \ref{sec:application}. The final section summarizes the main results. All technical proofs are contained in the appendix.
		
	\section{Basic Framework}\label{sec:framework}
Let $X\in  \mathbb{R}^{d_{X}}$, $Y\in 
	\mathbb{R}^{d_{Y}}$, and  $Z\in \mathbb{R}^{d_{Z}}$ denote three random vectors whose supports are denoted by $\mathcal{X}$, $\mathcal{Y}$ and $\mathcal{Z}$, where $%
	d_{X}$, $d_{Y}$  and $d_{Z}$ are integers signifying the dimensions of $X$, $Y$, $Z$ respectively. The goal of this paper is to develop a consistent test for the null hypothesis that $X$ and $Y$ are conditionally independent given $Z$, i.e.
	\begin{align}
		H_0: X \perp Y|Z;\ \text{against} \  H_1: X \not \perp Y|Z, \label{H0:primal}
	\end{align}
Let $\{X_t,Y_t,Z_t\}_{t=1}^n$ denote a weakly dependent and identically distributed sample of $(X,Y,Z)$. Let $f_{X,Y|Z}(x,y|z)$, $f_{X|Z}(x|z)$, and $f_{Y|Z}(y|z)$ denote the conditional probability density functions. The primal null hypothesis \eqref{H0:primal} is equivalent to 
	\begin{align}
		H_0:\pi_0(x,y,z)=1\ \text{for almost all} \ (x,y,z) ;\ \text{against} \  H_1: H_0\ \text{fails}. \label{H0:primal2}
	\end{align}
where 
\begin{equation*}
	\pi_0(x,y,z):= \frac{f_{X|Z}(x|z)f_{Y|Z}(y|z)}{f_{X,Y|Z}(x,y|z)}
\end{equation*}
is a \emph{conditional density ratio function} (CDR). Let $Q(\pi)$ be the weighted $L_2(X,Y,Z)$ distance between an arbitrary function  $\pi:\mathcal{X}\times\mathcal{Y}\times\mathcal{Z}\to \mathbb{R}$ and  $\pi_0(x,y,z)$:  
\begin{align}\label{def:Q}
&Q(\pi):=	\frac{1}{2}\int_{\mathcal{X}\times\mathcal{Y}\times\mathcal{Z}}\left\{\pi(x,y,z)-\pi_0(x,y,z)\right\}^2r_0(y,z)f_{X,Y,Z}(x,y,z)dxdydz,
\end{align}
where $$r_0(y,z):=\frac{f_Y(y)f_Z(z)}{f_{Y,Z}(y,z)}$$ is an unconditional density ratio function. Obviously, $Q(\pi)\geq 0$ for any function $\pi(\cdot)$, and $Q(\pi)=0$ if and only if $\pi=\pi_0$ a.s.. 
    Define the pseudo-likelihood ratio test statistic by
	\begin{align}
	I_0:=&\min_{\{\pi:\pi(\cdot)\equiv 1\}}Q(\pi)-\min_{\{\pi:\ \text{no restriction} \}}Q(\pi)=Q(1) \notag\\
	=&\frac{1}{2}\int_{\mathcal{X}\times\mathcal{Y}\times\mathcal{Z}}\left\{1-\pi_0(x,y,z)\right\}^2r_0(y,z)f_{X,Y,Z}(x,y,z)dxdydz.	\label{def:I0}
\end{align}
Then, $X$ and $Y$ are conditionally independent given $Z$ if and only if $I_0=0$. Thus, $I_0$ is an indicator of the deviation from the null.   

\begin{rk}\label{rk:invariance}
It is easy to show that $I_0$ is invariant to continuous monotonic transformations of $(X,Y,Z)\to (\widetilde{X},\widetilde{Y},\widetilde{Z})=(\phi_X(X),\phi_Y(Y),\phi_Z(Z))$: 
\begin{align}\label{eq:invariance}
	&\int\left\{1-\frac{f_{X|Z}(x|z)f_{Y|Z}(y|z)}{f_{X,Y|Z}(x,y|z)}\right\}^2\frac{f_Y(y)f_Z(z)}{f_{Y,Z}(y,z)}f_{X,Y,Z}(x,y,z)dxdydz\\
	=&\int\left\{1-\frac{f_{\widetilde{X}|\widetilde{Z}}(x|z)f_{\widetilde{Y}|\widetilde{Z}}(y|z)}{f_{\widetilde{X},\widetilde{Y}|\widetilde{Z}}(x,y|z)}\right\}^2\frac{f_{\widetilde{Y}}(y)f_{\widetilde{Z}}(z)}{f_{\widetilde{Y},\widetilde{Z}}(y,z)}f_{\widetilde{X},\widetilde{Y},\widetilde{Z}}(x,y,z)dxdydz. \notag
\end{align}
The proof of \eqref{eq:invariance} is given in Appendix \ref{app:invariance}. This invariance property allows for $(X,Y,Z)$ unbounded or having no finite moments since we can equivalently test the hypothesis $\widetilde{H}_0:F_X(X) \perp F_Y(Y)|F_Z(Z)$, where $F_X,F_Y,F_Z$ are cumulative distribution functions (CDFs) of $X,Y,Z$ respectively. Hence, we can assume that supports, $\mathcal{X},\mathcal{Y}$, and $\mathcal{Z}$, are compact without loss of generality. 
\end{rk}  
    To construct a test statistic for \eqref{def:I0}, we must estimate the conditional density ratio $\pi_{0}(x,y,z)$. There are two ways of calculating it in the literature. The first one estimates the conditional density functions $f_{X|Z}(x|z)$, $f_{Y|Z}(y|z)$, $f_{X,Y|Z}(x,y|z)$ and then forms the ratio estimator. However, this approach is sensitive to small denominator values. The second one estimates the conditional density ratio function $\pi_0(x,y,z)$ by maximizing entropy \citep{ai2021testing}. However, this approach does not produce a closed-form solution, and the estimator requires solving a numerical optimization problem, which is computationally costly. We present a third, easy-to-compute estimator for the conditional density ratio with a good finite sample performance.
	Our approach approximates the conditional density ratio with a linear sieve \citep{chen2007large}:
	\begin{equation*}
		\pi_0(x,y,z)\approx\pi_K(x,y,z;\boldsymbol{\beta})=\boldsymbol{\beta}^\top v_{K}(x,y,z) = \sum_{k=1}^{K}\beta_kv_{K,k}(x,y,z),
	\end{equation*} 
where $v_{K}(x,y,z) = (v_{K,1}(x,y,z),...,v_{K,K}(x,y,z))^\top$ is a $K$-dimensional sieve basis and $\boldsymbol{\beta} = (\beta_1,...,\beta_{K})^\top\in \mathbb{R}^K$ is a vector of sieve coefficients.  The sieve coefficients minimize the following weighted $L^2(X,Y,Z)$-norm:
	\begin{align}\label{def:QK}
			Q_K(\boldsymbol{\beta}):=&Q\left(\pi_K(\boldsymbol{\beta})\right)=\frac{1}{2} \int\{	\pi_K(x,y,z;\boldsymbol{\beta})-\pi_0(x,y,z)\}^2r_0(y,z)f_{X,Y,Z}(x,y,z)dxdydz.	
	\end{align}
By manipulation, we obtain: 
	\begin{align*}
		Q_K(\boldsymbol{\beta})=&	\frac{1}{2}\int\pi_K(x,y,z;\boldsymbol{\beta})^2r_0(y,z)f_{X,Y,Z}(x,y,z)dxdydz\\
		&-\int\pi_K(x,y,z;\boldsymbol{\beta})\pi_0(x,y,z)r_0(y,z)f_{X,Y,Z}(x,y,z)dxdydz\notag\\
		&+\frac{1}{2}\int\pi_0^2(x,y,z)r_0(y,z)f_{X,Y,Z}(x,y,z)dxdydz\notag \\
		=&\frac{1}{2}\cdot \boldsymbol{\beta}^\top \mathbb{E}[r_0(Y,Z)\cdot v_K(X,Y,Z)v^\top_K(X,Y,Z)]\boldsymbol{\beta}-\boldsymbol{\beta}^\top\int \mathbb{E}[v_K(X,y,Z)]f_Y(y)dy\\
		&+\frac{1}{2}\mathbb{E}\left[\pi_0^2(X,Y,Z)r_0(Y,Z)\right]\notag\\
		=& \frac{1}{2}\cdot \boldsymbol{\beta}^\top H_{ K}\boldsymbol{\beta}-\boldsymbol{\beta}^\top h_K+\frac{1}{2}\mathbb{E}\left[\pi_0^2(X,Y,Z)r_0(Y,Z)\right],
	\end{align*}
	where 
	\begin{align*}
		H_{K}:=\mathbb{E}[r_0(Y,Z)\cdot v_K(X,Y,Z)v^\top_K(X,Y,Z)]\ \text{and} \ h_K:=\int \mathbb{E}[v_K(X,y,Z)]f_Y(y)dy.
	\end{align*}
	The minimizer of $Q_K(\boldsymbol{\beta})$ is  
	\begin{equation*}
		\boldsymbol{\beta}^*_K:=\arg\min_{\boldsymbol{\beta}\in\mathbb{R}^K}	Q_K(\boldsymbol{\beta})=H_{K}^{-1}h_K
	\end{equation*}
and the weighted $L^2(X,Y,Z)$ approximation of $\pi_0(x,y,z)$ is 
	$$\pi_K^*(x,y,z):=\{\boldsymbol{\beta}^*_K\}^\top v_K(x,y,z).$$
	
To estimate $\pi_K^*(x,y,z)$, we need an estimate of $r_0(y,z)$. We approximate $r_0(y,z)$ using the sieve basis $u_{K_0}(y,z)$:
	\begin{equation*}
		r_{K_0}(y,z;\boldsymbol{\gamma})=\boldsymbol{\gamma}^\top u_{K_0}(y,z) = \sum_{k=1}^{K_0}\gamma_ku_{K_0,k}(y,z),
	\end{equation*} 
where $u_{K_0}(y,z) = (u_{K_0,1}(y,z),...,u_{K_0,K_0}(y,z))^\top$ is a $K_0$-dimensional sieve basis and $\boldsymbol{\gamma} = (\gamma_1,...,\gamma_{K_0})^\top\in \mathbb{R}^{K_0}$ is a vector of sieve coefficients. By manipulation, we obtain: 
	\begin{align*}
		G_{K_0}(\boldsymbol{\gamma}):=&\frac{1}{2} \int\{	r_{K_0}(y,z;\boldsymbol{\gamma})-r_0(y,z)\}^2f_{Y,Z}(y,z)dydz\\
		=&\frac{1}{2}\int r_{K_0}(y,z;\boldsymbol{\gamma})^2f_{Y,Z}(y,z)dydz-\int r_{K_0}(y,z;\boldsymbol{\gamma})r_0(y,z)f_{Y,Z}(y,z)dydz\notag\\
		&+\frac{1}{2}\int r_0^2(y,z)f_{Y,Z}(y,z)dydz\\
	=&\frac{1}{2}\cdot \boldsymbol{\gamma}^\top \mathbb{E}[ u_{K_0}(Y,Z)u^\top_{K_0}(Y,Z)]\boldsymbol{\gamma}-\boldsymbol{\gamma}^\top\int \mathbb{E}[u_{K_0}(y,Z)]f_Y(y)dy\\&+\frac{1}{2}\int r_0^2(y,z)f_{Y,Z}(y,z)dydz\notag\\
		=& \frac{1}{2}\cdot \boldsymbol{\gamma}^\top \Sigma_{K_0}\boldsymbol{\gamma}-\boldsymbol{\gamma}^\top b_{K_0} +\frac{1}{2}\cdot\mathbb{E}[r_0^2(Y,Z)],
	\end{align*}
where 
	\begin{align*}
		\Sigma_{K_0}:=\mathbb{E}[ u_{K_0}(Y,Z)u^\top_{K_0}(Y,Z)]\ \text{and} \ b_{K_0}:=\int \mathbb{E}[u_{K_0}(y,Z)]f_Y(y)dy.
	\end{align*}
The minimizer of $G_{K_0}(\boldsymbol{\gamma})$ is 
	\begin{equation*}
		\boldsymbol{\gamma}^*_{K_0}:=\arg\min_{\boldsymbol{\gamma}\in\mathbb{R}^{K_0}}	G_{K_0}(\boldsymbol{\gamma})=	\Sigma_{K_0}^{-1}b_{K_0}
	\end{equation*}
and the  $L^2(Y,Z)$ approximation of $r_0(y,z)$ is 
	$$r_{K_0}^*(y,z):=\{\boldsymbol{\gamma}^*_{K_0}\}^\top u_{K_0}(y,z).$$
We estimate $r_0(y,z)$ by
	\begin{align}\label{def:rhat}
		\widehat{r}_{K_0}(y,z):=	\widehat{\boldsymbol{\gamma}}^{\top}_{K_0}u_{K_0}(y,z),
	\end{align}
where $\widehat{\boldsymbol{\gamma}}_{K_0}:=	\widehat{\Sigma}_{K_0}^{-1}\widehat{b}_{K_0}
	$, $\ \widehat{b}_{K_0}:=\left\{n(n-1)\right\}^{-1}\sum_{j=1,j\neq i}^n\sum_{i=1}^nu_{K_0}(Y_i,Z_j)$, and
	\begin{align*}
		\widehat{\Sigma}_{K_0}:= \frac{1}{n}\sum_{t=1}^nu_{K_0}(Y_t,Z_t)u^\top_{K_0}(Y_t,Z_t).
	\end{align*}
	
With the estimated $\widehat{r}_{K_0}(y,z)$, we estimate $\pi_0(x,y,z)$ by 
	\begin{align}\label{def:pihat}
		\widehat{\pi}_{K,K_0}(x,y,z):=	\widehat{\boldsymbol{\beta}}^{\top}_{K,K_0}v_{K}(x,y,z),
	\end{align}
where $\widehat{\boldsymbol{\beta}}_{K,K_0}:=	\widehat{H}_{K,K_0}^{-1}\widehat{h}_{K}
	$, $\ \widehat{h}_{K}:=\left\{n(n-1)\right\}^{-1}\sum_{j=1,j\neq i}^n\sum_{i=1}^nv_{K}(X_i,Y_j,Z_i)$, and 
	\begin{align*}
		\widehat{H}_{K,K_0}:= \frac{1}{n}\sum_{t=1}^n\widehat{r}_{K_0}(Y_t,Z_t)v_{K}(X_t,Y_t,Z_t)v^\top_{K}(X_t,Y_t,Z_t).
	\end{align*}
The \emph{conditional density ratio test statistic} is now given by 
	\begin{align}\label{def:Ihat}
		\widehat{I}_{K,K_0}:=\frac{1}{2n}\sum_{t=1}^n\left\{	\widehat{\pi}_{K,K_0}(X_t,Y_t,Z_t)-1\right\}^2\widehat{r}_{K_0}(Y_t,Z_t).
	\end{align}

\begin{rk}
	By definition, the estimators $\widehat{r}_{K_0}(Y,Z)$ and  $\widehat{\pi}_{K,K_0}(X,Y,Z)$ satisfy the following  constraints:
\begin{align}\label{eq:balance_r}
	&\frac{1}{n}\sum_{t=1}^{n}\widehat{r}_{K_0}(Y_t,Z_t)u_{K_0}(Y_t,Z_t)=\frac{1}{n(n-1)}\sum_{j=1,j\neq i}^n\sum_{i=1}^nu_{K_0}(Y_i,Z_j);
\end{align} 
\begin{align}\label{eq:balance_pi}
	\frac{1}{n}\sum_{t=1}^{n}\widehat\pi_{K,K_0}(X_t,Y_t,Z_t)\widehat r_{K_0}(Y_t,Z_t)v_{K}(X_t,Y_t,Z_t)=\frac{1}{n(n-1)}\sum_{j=1,j\neq i}^n\sum_{i=1}^nv_{K}(X_i,Y_j,Z_i).
\end{align}
Since the linear space spanned by $v_K(x,y,z)$ contains the linear space spanned by $u_{K_0}(y,z)$, i.e. $span\{u_{K_0}(y,z)\}\subset span\{v_K(x,y,z)\}$, the empirical moment restrictions \eqref{eq:balance_r}  and \eqref{eq:balance_pi}  imply that $\widehat{\pi}_{K,K_0}(x,y,z)$ and $\widehat{r}_{K_0}(y,z)$ satisfy the following three-way balancing empirical moments:
\begin{align*}
	&\frac{1}{n}\sum_{t=1}^{n}\widehat\pi_{K,K_0}(X_t,Y_t,Z_t)\widehat r_{K_0}(Y_t,Z_t)u_{K_0}(Y_t,Z_t)\\ &=\frac{1}{n(n-1)}\sum_{j=1,j\neq i}^n\sum_{i=1}^nu_{K_0}(Y_i,Z_j)
	=\frac{1}{n}\sum_{t=1}^{n}\widehat{r}_{K_0}(Y_t,Z_t)u_{K_0}(Y_t,Z_t).
\end{align*} 
These empirical moments prevent the density ratio estimators from having extreme values. With \eqref{eq:balance_r} and \eqref{eq:balance_pi},  $\widehat{I}_{K,K_0}$ can be written as
\begin{align}\label{def:Ihat_equiv}
	\widehat{I}_{K,K_0}=\frac{1}{2n}\sum_{t=1}^n	\widehat{\pi}_{K,K_0}(X_t,Y_t,Z_t)^2\widehat{r}_{K_0}(Y_t,Z_t)-\frac{1}{2}.
\end{align}
\end{rk}	
\begin{rk} For independent and identically distributed (i.i.d.) data, \cite{kanamori2009least}  used the sieve least-squares regression to estimate  the ratio of two probability densities, and \cite{matsushita2022estimating} used the sieve least-squares regression to estimate the unconditional density ratio.  Our approach is an extension of theirs. 
	\end{rk}	\begin{rk}
	\cite{ai2021testing} estimated the density ratios $r_0(y,z)$ and $\pi_0(x,y,z)$ maximizing the entropy subject to moment restrictions. Their approach does not produce a closed-form solution and is computationally expensive.  Moreover, the lack of closed-form expressions complicates extending their results to dependent data.  
	\end{rk}
	\begin{rk}
		The inverse probability-weighted estimator is known to be sensitive to the small value of the estimated denominator. The proposed density ratio estimators do not have such problems. Moreover, the empirical moment balancing property of the calculated ratios is vital for efficiently estimating semiparametric models, see \cite{ai2021unified} and \cite{matsushita2022estimating}.
	\end{rk}

\section{Asymptotic Distribution of The Test Statistic}\label{sec:asymptotics} 	
\subsection{Useful Convergence Rates}  
  This section  establishes the convergence rates for $\widehat{r}_{K_0}$ and $\widehat{\pi}
	_{K,K_0}$, which will help establish the large sample properties of $
	\widehat{I}_{K,K_0}$. 	Let $\Vert\cdot\Vert$ be the Euclidean norm. The following conditions are maintained throughout the remainder of the paper:
	\begin{assumption}\label{ass:ciid} (i)
	$\{X_t, Y_t, Z_t\}_{t\in\mathbb{N}}$	is a strictly stationary $\beta$-mixing process with coefficients $\beta_m = O(\rho^m)$ for some $0<\rho<1$; (ii) the support $\mathcal{X}\times \mathcal{Y}\times\mathcal{Z}$ of $(X,Y,Z)$ is compact.
	\end{assumption}
	
	\begin{assumption}\label{ass:cbounded}
		Suppose that the density ratios are bounded away from zero and from above: (i)
		$
		0<\inf_{(y,z)\in\mathcal{Y}\times\mathcal{Z}} r_0(y,z)\leq\sup_{(y,z)\in\mathcal{Y}\times\mathcal{Z}} r_0(y,z)<\infty$; (ii) $
		0<\inf_{(x,y,z)\in\mathcal{X}\times\mathcal{Y}\times \mathcal{Z}}\pi_0(x,y,z)\leq\sup_{(x,y,z)\in\mathcal{X}\times\mathcal{Y}\times \mathcal{Z}}\pi_0(x,y,z)<\infty
		$.
	\end{assumption}
	
	\begin{assumption}\label{ass:cpi0-pi*} Suppose that 
		(i)	there exists  $\boldsymbol{\gamma}_{K_0} \in \mathbb{R}^{K_0}$ and  a positive constant $\omega_r>0$ such that $\sup_{(y,z)\in\mathcal{Y}\times\mathcal{Z}}| r_0(y,z)-\boldsymbol{\gamma}_{K_0}^\top u_{K_0}(y,z) |=O(K_0^{-\omega_{r}})$; (ii) there exists   $\boldsymbol{\beta}_K \in \mathbb{R}^{K} $ and a positive constant $\omega_{\pi}>0$ such that $\sup_{(x,y,z)\in\mathcal{X}\times\mathcal{Y}\times \mathcal{Z}}|\pi_0(x,y,z)-\boldsymbol{\beta}^\top_Kv_{K}(x,y,z)|=O(K^{-\omega_{\pi}})$.
	\end{assumption} 
	\begin{assumption}\label{ass:eigen}
		(i) $K_0$ is a sequence depending on the sample size $n$ such that $K_0\to \infty$ as $n\to\infty$ and the eigenvalues of  $\Sigma_{K_0}$ are  bounded from above and away from zero uniformly in $K_0$;
		(ii)  $K\in\mathbb{N}$ is a sequence depending on the sample size $n$ such that $K\rightarrow\infty$ as $n\to \infty$ and the eigenvalues of  $H_K$ are  bounded from above and away from zero uniformly in $K$; (iii) There exists $\eta > 0$ and a constant $C$ such that  $\sup_{1\leq k\leq K_0}\mathbb{E}\left[|u_{K_0,k}(Y,Z)|^{4+\eta}\right]<C$ uniformly in $K_0$ and  $\sup_{1\leq k\leq K}\mathbb{E}\left[|v_{K,k}(X,Y,Z)|^{4+\eta}\right]<C$ uniformly in $K$. 
	\end{assumption}
		\begin{assumption}\label{ass:cK} Denote $\zeta_{K_0}:=\sup_{(y,z)\in\mathcal{Y}\times\mathcal{Z}}\Vert u_{K_0}(y,z)\Vert$ and $\xi_{K}:=\sup_{(x,y,z)\in\mathcal{X}\times\mathcal{Y}\times \mathcal{Z}}\Vert v_{K}(x,y,z)\Vert$. Suppose that the following conditions hold: (i)  $\zeta_{K_0}K_0^{-w_{r}}\rightarrow 0$ and $\zeta_{K_0}\sqrt{K_0/n}\rightarrow 0$;  (ii)  $\xi_{K}\{K^{-\omega_{\pi}}+K_0^{-\omega_{r}}\}\rightarrow 0$ and $\xi_{K}\sqrt{K/n}\rightarrow 0$.
	\end{assumption}
	Assumption \ref{ass:ciid} allows for weakly dependent data. The compact support assumption is not as restrictive as it appears because the proposed test statistic is invariant to monotonic and bounded data transformations; see Remark \ref{rk:invariance}.  
	Assumption \ref{ass:cbounded} requires the density ratio functions bounded and bounded away from zero. Again, this condition is not restrictive because of the invariant property. Assumption	\ref{ass:cpi0-pi*} requires the sieve approximation errors to shrink at a polynomial rate, which is positively related to the smoothness of $r_0(y,z)$ and $\pi_0(x,y,z)$ and negatively related to the number of the continuous cvariates. For power series and B-splines, $\omega_r=s_{r}/(d_Y+d_Z)$ and $\omega_{\pi}=s_{\pi}/(d_X+d_Y+d_Z)$, where $s_r$ and $s_{\pi}$ are the smoothness of $r_0(y,z)$ and $\pi_0(x,y,z)$, respectively. Assumption \ref{ass:eigen} ensures  the sieve estimator to be non-degenerate. Assumption \ref{ass:cK}  restricts the growth rate of the smoothing parameters to achieve consistent estimation. \cite{newey1994asymptotic,Newey97} showed that $\zeta_{K_0}=O(K_0)$ and $\xi_K=O(K)$  for power series, $\zeta_{K_0}=O(\sqrt{K_0})$ and $\xi_K=O(\sqrt{K})$ for  B-splines.
    The following theorem establishes the convergence rates for $\widehat{r}_{K_0}$ and $\widehat{\pi}_{K,K_0}$whose proof is relegated to Appendix  \ref{sec:th_crate}.
		\begin{theorem}\label{th_crate}
		(i)	Suppose that Assumptions  \ref{ass:ciid}, \ref{ass:cbounded}(i), \ref{ass:cpi0-pi*}(i), \ref{ass:eigen}(i) and \ref{ass:cK}(i) hold. Then, as $n\rightarrow \infty$, we obtain 
        
	\begin{equation}
	\Vert\widehat{\boldsymbol{\gamma}}_{K_0}-\boldsymbol{\gamma}^*_{K_0}\Vert  =O_P\left(\sqrt{\frac{K_0}{n}}\right);	\label{prop_gamma}
	\end{equation}	
		\begin{equation*}
			\label{pi_sup}
			\sup_{(y,z)\in\mathcal{Y}\times\mathcal{Z}}|\widehat{r}_{K_0}(y,z)-r_0(y,z)|=O\left(\zeta_{K_0}\left\{K_0^{-\omega_{r}}+\sqrt{\frac{K_0}{n}}\right\} \right);
		\end{equation*}
\begin{equation*}
			\label{pi_norm}
			\int|\widehat{r}_{K_0}(y,z)-r_0(y,z)|^2dF_{Y,Z}(y,z)=O\left(K_0^{-2\omega_r}+\frac{K_0}{n}\right);
		\end{equation*}
\begin{equation*}
			\frac{1}{n}\sum_{t=1}^{n}|\widehat{r}_{K_0}(Y_t,Z_t)-r_0(Y_t,Z_t)|^2=O_P\left(K_0^{-2\omega_r}+\frac{K_0}{n}\right).
		\end{equation*}
(ii) Suppose that Assumptions \ref{ass:ciid}-\ref{ass:cK} hold. Then, as $n\rightarrow \infty$, we obtain 
\begin{equation}
				\Vert\widehat{\boldsymbol{\beta}}_{K,K_0}-\boldsymbol{\beta}^*_K\Vert=O_P\left(\left\{\sqrt{\frac{K_0}{n}}+K_0^{-\omega_{r}}\right\}+\sqrt{\frac{K}{n}}\right);\label{prop_beta}
			\end{equation}
\begin{equation*}
				\label{pi_csup}
				\sup_{(x,y,z)\in\mathcal{X}\times\mathcal{Y}\times\mathcal{Z}}|\widehat{\pi}_{K,K_0}(x,y,z)-\pi_0(x,y,z)|=O_P\left(\xi_{K}\left\{\sqrt{\frac{K_0}{n}}+K_0^{- \omega_r} \right\}+\xi_K\left\{\sqrt{\frac{K}{n}}+K^{-\omega_\pi}\right\} \right);
			\end{equation*}
			\begin{equation*}
				\label{pi_cnorm}
				\int|\widehat{\pi}_{K,K_0}(x,y,z)-\pi_0(x,y,z)|^2dF_{X,Y,Z}(x,y,z)=	O_P\left(\left\{\frac{K_0}{n}+K_0^{-2 \omega_r}\right\}+\left\{\frac{K}{n}+K^{-2\omega_\pi}\right\}\right);
			\end{equation*}
\begin{equation*}
				\frac{1}{n}\sum_{t=1}^{n}|\widehat{\pi}_{K,K_0}(X_t, Y_t, Z_t)-\pi_0(X_t, Y_t, Z_t)|^2=O_P\left(\left\{\frac{K_0}{n}+K_0^{- 2\omega_r}\right\}+\left\{\frac{K}{n}+K^{-2\omega_\pi}\right\}\right).
			\end{equation*}
		\end{theorem}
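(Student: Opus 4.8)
The plan is to prove both parts by the standard sieve decomposition into a stochastic part and a deterministic approximation part, treating the two-stage plug-in structure sequentially and absorbing the weak dependence through the geometric $\beta$-mixing coefficients of Assumption~\ref{ass:ciid}. For part (i) I would begin from the identity
\[
\widehat{\boldsymbol{\gamma}}_{K_0}-\boldsymbol{\gamma}^*_{K_0}=\widehat{\Sigma}_{K_0}^{-1}(\widehat{b}_{K_0}-b_{K_0})+\widehat{\Sigma}_{K_0}^{-1}(\Sigma_{K_0}-\widehat{\Sigma}_{K_0})\boldsymbol{\gamma}^*_{K_0},
\]
which uses $b_{K_0}=\Sigma_{K_0}\boldsymbol{\gamma}^*_{K_0}$. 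First I would show $\|\widehat{\Sigma}_{K_0}-\Sigma_{K_0}\|_{op}=O_P(\zeta_{K_0}\sqrt{K_0/n})=o_P(1)$ from the $(4+\eta)$-moment bound in Assumption~\ref{ass:eigen}(iii) and $\zeta_{K_0}\sqrt{K_0/n}\to0$ in Assumption~\ref{ass:cK}(i); together with the eigenvalue bound in Assumption~\ref{ass:eigen}(i) this keeps the smallest eigenvalue of $\widehat{\Sigma}_{K_0}$ bounded away from zero with probability approaching one, so $\|\widehat{\Sigma}_{K_0}^{-1}\|_{op}=O_P(1)$. I would then bound the two driving vectors \emph{directly} rather than through the operator norm: $(\Sigma_{K_0}-\widehat{\Sigma}_{K_0})\boldsymbol{\gamma}^*_{K_0}$ is the centered sample mean of $u_{K_0}(Y_t,Z_t)r^*_{K_0}(Y_t,Z_t)$, whose traced variance is $O(\mathrm{tr}(\Sigma_{K_0})/n)=O(K_0/n)$ because $r^*_{K_0}$ is uniformly bounded, while $\widehat{b}_{K_0}$ is a second-order U-statistic whose Hoeffding projection gives the same $O_P(\sqrt{K_0/n})$ rate, the dependence-induced bias $\|\mathbb{E}\widehat{b}_{K_0}-b_{K_0}\|=O(\zeta_{K_0}/n)$ being negligible after summing the geometric mixing coefficients. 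This yields \eqref{prop_gamma}, and the three norm bounds then follow routinely from $\sup|\widehat{r}_{K_0}-r_0|\le\zeta_{K_0}\|\widehat{\boldsymbol{\gamma}}_{K_0}-\boldsymbol{\gamma}^*_{K_0}\|+\sup|r^*_{K_0}-r_0|$, the quadratic-form identities $\|\widehat{r}_{K_0}-r^*_{K_0}\|^2_{L^2}\le\lambda_{\max}(\Sigma_{K_0})\|\widehat{\boldsymbol{\gamma}}_{K_0}-\boldsymbol{\gamma}^*_{K_0}\|^2$ (and its empirical analog with $\widehat{\Sigma}_{K_0}$), the projection-bias bound $\sup|r^*_{K_0}-r_0|=O(\zeta_{K_0}K_0^{-\omega_r})$ obtained via Assumption~\ref{ass:cpi0-pi*}(i), and a law of large numbers for $\frac1n\sum_t|r^*_{K_0}-r_0|^2$.

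For part (ii) I would run the parallel decomposition
\[
\widehat{\boldsymbol{\beta}}_{K,K_0}-\boldsymbol{\beta}^*_K=\widehat{H}_{K,K_0}^{-1}(\widehat{h}_K-h_K)+\widehat{H}_{K,K_0}^{-1}(H_K-\widehat{H}_{K,K_0})\boldsymbol{\beta}^*_K,
\]
the new feature being that $\widehat{H}_{K,K_0}$ carries the first-stage error. I would therefore split $\widehat{H}_{K,K_0}-H_K$ into a known-weight sampling part $\frac1n\sum_t r_0(Y_t,Z_t)v_Kv_K^\top-H_K$ and a plug-in part $\frac1n\sum_t(\widehat{r}_{K_0}-r_0)(Y_t,Z_t)v_Kv_K^\top$. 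The sampling part is handled exactly as in part (i): its operator norm is $O_P(\xi_K\sqrt{K/n})=o_P(1)$, and the specific vector $(\frac1n\sum_t r_0 v_Kv_K^\top-H_K)\boldsymbol{\beta}^*_K$ has traced variance $O(K/n)$ since $r_0$ and $\pi^*_K$ are bounded; likewise $\widehat{h}_K$ is a U-statistic with $\|\widehat{h}_K-h_K\|=O_P(\sqrt{K/n})$.

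The crux is the plug-in part. For invertibility it suffices that its operator norm is bounded by $\sup|\widehat{r}_{K_0}-r_0|\cdot\lambda_{\max}(n^{-1}\sum_t v_Kv_K^\top)=o_P(1)$, giving $\|\widehat{H}_{K,K_0}^{-1}\|_{op}=O_P(1)$. The step I expect to be the main obstacle is the \emph{rate}: a crude operator-norm bound on $(\widehat{H}_{K,K_0}-H_K)\boldsymbol{\beta}^*_K$ would lose a spurious $\sqrt{K}$ (or $\xi_K$) factor, so I must show the plug-in contribution $n^{-1}\sum_t(\widehat{r}_{K_0}-r_0)(Y_t,Z_t)v_K\pi^*_K$ is only $O_P(K_0^{-\omega_r}+\sqrt{K_0/n})$, i.e.\ it inherits the first-stage $L^2$ rate with no inflation. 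I would obtain this by writing its norm as $\sup_{\|c\|=1}$ of the scalar $n^{-1}\sum_t(\widehat{r}_{K_0}-r_0)(Y_t,Z_t)\,[c^\top v_K]\,\pi^*_K$ and applying Cauchy--Schwarz so that the sieve factor enters only through $c^\top(n^{-1}\sum_t v_Kv_K^\top)c\le\lambda_{\max}(n^{-1}\sum_t v_Kv_K^\top)=O_P(1)$, uniformly in $c$, while the remaining factor is $(n^{-1}\sum_t|\widehat{r}_{K_0}-r_0|^2)^{1/2}=O_P(K_0^{-\omega_r}+\sqrt{K_0/n})$ from part (i).

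Collecting the three pieces and multiplying by $\|\widehat{H}_{K,K_0}^{-1}\|_{op}=O_P(1)$ delivers \eqref{prop_beta}. The sup, population-$L^2$, and empirical-$L^2$ bounds for $\widehat{\pi}_{K,K_0}-\pi_0$ then follow from \eqref{prop_beta} exactly as in part (i), now with $\xi_K$ in place of $\zeta_{K_0}$, the projection bias $O(\xi_K K^{-\omega_\pi})$ supplied by Assumption~\ref{ass:cpi0-pi*}(ii), and $\lambda_{\max}(H_K)=O(1)$ from Assumption~\ref{ass:eigen}(ii). Throughout, the geometric $\beta$-mixing enters only to validate the variance orders of the U-statistics and sample second-moment matrices (via standard blocking or covariance-summation arguments) and to render the dependence-induced U-statistic biases asymptotically negligible; since $\sum_m\beta_m<\infty$, these contribute only constant inflation factors and leave the stated rates unchanged.
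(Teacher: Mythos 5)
Your proposal is correct and follows essentially the same route as the paper: the same (algebraically equivalent) decompositions $\widehat{\boldsymbol{\gamma}}_{K_0}-\boldsymbol{\gamma}^*_{K_0}=\widehat{\Sigma}_{K_0}^{-1}(\widehat b_{K_0}-\widehat\Sigma_{K_0}\boldsymbol{\gamma}^*_{K_0})$ and $\widehat{\boldsymbol{\beta}}_{K,K_0}-\boldsymbol{\beta}^*_K=\widehat H_{K,K_0}^{-1}(\widehat h_K-\widehat H_{K,K_0}\boldsymbol{\beta}^*_K)$, the same U-statistic/Hoeffding treatment of $\widehat b_{K_0}$ and $\widehat h_K$ under geometric $\beta$-mixing with Davydov-type covariance bounds, and the same eigenvalue control for invertibility. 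Your key Cauchy--Schwarz step --- bounding the plug-in term $\bigl\Vert n^{-1}\sum_t(\widehat r_{K_0}-r_0)v_K\pi_K^*\bigr\Vert$ through $\lambda_{\max}\bigl(n^{-1}\sum_t v_Kv_K^\top\bigr)$ and $n^{-1}\sum_t|\widehat r_{K_0}-r_0|^2$ so that it inherits the first-stage rate with no $\xi_K$ or $\sqrt{K}$ inflation --- is exactly the inequality \eqref{vkphi} of Lemma \ref{lemma:proj}, which the paper proves via the least-squares projection property.
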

Theorem \ref{th_crate} (i) extends the result of \cite{matsushita2022estimating} for i.i.d. data to weakly dependent data. Theorem \ref{th_crate} (ii) gives the convergence rate of $\widehat{\pi}_{K,K_0}(x,y,z)$ under $L^2$ and $L^{\infty}$. It has two components: the estimation error of $r_0(y,z)$ (i.e. $K_0/n+K_0^{- 2\omega_r}$) and the regression error of $\pi_0(x,y,z)$ on $v_{K}(x,y,z)$ (i.e. $K/n+K^{-2\omega_\pi}$).

\subsection{Asymptotic Distribution under $H_0$}
Notice that, under the null hypothesis $H_0:X\perp Y|Z$, we have $\pi_0(x,y,z)\equiv 1$ for almost all $(x,y,z)\in \mathcal{X}%
		\times \mathcal{Y}\times \mathcal{Z}$ and Assumption \ref{ass:cpi0-pi*} holds with $\omega_{\pi} =\infty $.  Since the sieve basis $v_K(x,y,z)$ contains the constant ``1" for any fixed $K$, we have $\pi_{K}^*(x,y,z)=\pi_0(x,y,z)\equiv 1$ for all $(x,y,z)\in\mathcal{X}\times\mathcal{Y}\times\mathcal{Z}$ under $H_0$. We write 
		\begin{align}
			\widehat{I}_{K,K_0}&=\frac{1}{2n}\sum_{t=1}^{n}\{\widehat{\pi}_{K,K_0}(X_t, Y_t, Z_t)-1\}^2\widehat{r}_{K_0}(Y_t,Z_t) \notag\\
			&=\frac{1}{2n}\sum_{t=1}^{n}\{\widehat{\pi}_{K,K_0}(X_t, Y_t, Z_t)-\pi_{K}^*(X_t, Y_t, Z_t)\}^2\widehat{r}_{K_0}(Y_t,Z_t) \notag\\
&=\frac{1}{2}\left( \widehat{\boldsymbol{\beta}}_{K,K_0}-\boldsymbol{\beta}^*_K\right)^{\top} \widehat{H}_{K,K_0} \left( \widehat{\boldsymbol{\beta}}_{K,K_0}-\boldsymbol{\beta}^*_K\right). \label{eq:Ihat_quad}
		\end{align}
	
The key step in deriving the asymptotic distribution of $\widehat{I}_{K,K_0}$ is finding the influence function (Bahadur representation) of $\widehat{\boldsymbol{\beta}}_{K,K_0}-\boldsymbol{\beta}^*_K$, which is given in Lemma \ref{prop:vtilde}. To establish this result, we need the following conditions.
		\begin{assumption}\label{ass:H0} We assume 	(i) $K_0^{-2 \omega_r}=o(\sqrt{K}/n)$, (ii) $K_0^2=o(\sqrt{n})$,  and (iii) $\xi_KK_0=o(\sqrt{n})$ and $\xi_KK=o(\sqrt{n})$.
		\end{assumption}  
        
        \begin{assumption}\label{ass:proj_error}
 Let   $\text{proj}_{u_{K_0}}v_K(x,y,z):=\mathbb{E}[v_K(X,Y,Z)u_{K_0}^\top(Y,Z)]\mathbb{E}[u_{K_0}(Y,Z)u_{K_0}^\top(Y,Z)]^{-1}$ $\times u_{K_0}(y,z)$ be the least-square projection of $v_K(X,Y,Z)$ on the linear space spanned by $u_{K_0}(y,z)$. Suppose there exists a positive constant $\omega_0>0$ such that
            $$\sup_{(y,z)\in\mathcal{Y}\times\mathcal{Z}}\left\Vert\mathbb{E}\left[v_K(X,Y,Z)|Y=y,Z=z\right]-\text{proj}_{u_{K_0}}v_K(x,y,z)\right\Vert=O\left(\sqrt{K}K_0^{-\omega_0}\right)$$
             and we assume $KK_0^{-2\omega_0}=o(\sqrt{K}/n)$.
        \end{assumption}
      Assumptions  \ref{ass:H0} and \ref{ass:proj_error}  impose conditions on the growth rate of the smoothing parameters and the sieve approximation error to ensure the asymptotic negligibility of the remainder in Lemma \ref{prop:vtilde}. Similar assumptions are also imposed in \cite{ai2021testing}.
\begin{lemma}\label{prop:vtilde}
	Suppose that Assumptions \ref{ass:ciid}-\ref{ass:proj_error} hold. We obtain 
\begin{equation*}
	 \widehat{\boldsymbol{\beta}}_{K,K_0}-\boldsymbol{\beta}^*_K=\frac{1}{n}\sum_{t=1}^{n}H_{K}^{-1}\widetilde v_{K}(X_t,Y_t,Z_t)+o_P\left(\frac{K^{1/4}}{\sqrt{n}}\right),
	\end{equation*}
	where
	\begin{align*}
		\widetilde v_{K}(X_t,Y_t,Z_t):=&\mathbb{E}\left[v_K(X_t,Y_t,Z_t)r_0(Y_t,Z_t)\pi_{0}(X_t,Y_t,Z_t)|X_t,Z_t\right]\\
  &+\mathbb{E}\left[v_K(X_t,Y_t,Z_t)r_0(Y_t,Z_t)\pi_{0}(X_t,Y_t,Z_t)|Y_t\right]\\	&-\mathbb{E}\left[v_K(X_t,Y_t,Z_t)r_0(Y_t,Z_t)\pi_K^*(X_t,Y_t,Z_t)|Y_t\right]\\
  &-\mathbb{E}\left[v_K(X_t,Y_t,Z_t)r_0(Y_t,Z_t)\pi_K^*(X_t,Y_t,Z_t)|Z_t\right]\\
		&+\mathbb{E}[v_{K}(X,Y,Z)r_0(Y,Z)\pi_K^*(X,Y,Z)]
		\\
		&-v_{K}(X_t,Y_t,Z_t)r_0(Y_t,Z_t)\pi_K^*(X_t,Y_t,Z_t).
	\end{align*}
In addition, if $H_0$ holds, then $\pi_K^*=\pi_{0}$ and $\widetilde v_{K}(X_t,Y_t,Z_t)$ reduces to
\begin{equation*}
\begin{aligned}
	\widetilde v_{K}(X_t,Y_t,Z_t)
=&\mathbb{E}\left[v_K(X_t,Y_t,Z_t)r_0(Y_t,Z_t)|X_t,Z_t\right]-\mathbb{E}\left[v_K(X_t,Y_t,Z_t)r_0(Y_t,Z_t)|Z_t\right]\\
&+\mathbb{E}\left[v_K(X,Y,Z)r_0(Y,Z)\right]-v_{K}(X_t,Y_t,Z_t)r_0(Y_t,Z_t).
\end{aligned}
\end{equation*}
\end{lemma}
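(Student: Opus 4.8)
The plan is to work from the normal-equation form of the estimator. Writing $\widehat{\boldsymbol{\beta}}_{K,K_0}-\boldsymbol{\beta}^*_K=\widehat{H}_{K,K_0}^{-1}\big(\widehat{h}_K-\widehat{H}_{K,K_0}\boldsymbol{\beta}^*_K\big)$ and using $h_K=H_K\boldsymbol{\beta}^*_K$, I would first replace $\widehat{H}_{K,K_0}^{-1}$ by $H_K^{-1}$ via $\widehat{H}_{K,K_0}^{-1}-H_K^{-1}=-H_K^{-1}(\widehat{H}_{K,K_0}-H_K)\widehat{H}_{K,K_0}^{-1}$, so that the replacement error is controlled by $\|H_K^{-1}\|\,\|\widehat{H}_{K,K_0}-H_K\|\,\|\widehat{\boldsymbol{\beta}}_{K,K_0}-\boldsymbol{\beta}^*_K\|$; the operator-norm bound on $\widehat{H}_{K,K_0}-H_K$ is already available from the proof of Theorem~\ref{th_crate}(ii), and the rate restrictions in Assumption~\ref{ass:H0} are exactly what make this product $o_P(K^{1/4}/\sqrt n)$. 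This reduces the problem to finding the leading term of $H_K^{-1}\{(\widehat{h}_K-h_K)-(\widehat{H}_{K,K_0}-H_K)\boldsymbol{\beta}^*_K\}$.

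Next I would treat the two pieces separately. The term $\widehat{h}_K-h_K$ is a generalized, non-symmetric U-statistic with kernel $v_K(X_i,Y_j,Z_i)$; I would apply a H\'ajek/Hoeffding projection — valid for $\beta$-mixing data after bounding the degenerate remainder through standard mixing covariance inequalities — to obtain $\tfrac1n\sum_t\{\int v_K(X_t,y,Z_t)f_Y(y)\,dy+\mathbb{E}[v_K(X,Y_t,Z)]-2h_K\}$. The two conditional expectations in $\widetilde v_K$ are then identified using the Radon--Nikodym simplifications $r_0(y,z)\pi_0(x,y,z)f_{Y|X,Z}(y|x,z)=f_Y(y)$ and $r_0(y,z)\pi_0(x,y,z)f_{X,Z|Y}(x,z|y)=f_{X,Z}(x,z)$, which give $\int v_K(X_t,y,Z_t)f_Y(y)\,dy=\mathbb{E}[v_Kr_0\pi_0|X_t,Z_t]$ and $\mathbb{E}[v_K(X,Y_t,Z)]=\mathbb{E}[v_Kr_0\pi_0|Y_t]$ — the first two lines of $\widetilde v_K$. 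For $(\widehat{H}_{K,K_0}-H_K)\boldsymbol{\beta}^*_K$ I would write $\widehat{H}_{K,K_0}\boldsymbol{\beta}^*_K=\tfrac1n\sum_t\widehat r_{K_0}(Y_t,Z_t)v_K\pi_K^*$ and split $\widehat r_{K_0}=r_0+(\widehat r_{K_0}-r_0)$. The ``true-weight'' part is the centered average $\tfrac1n\sum_t\{r_0v_K\pi_K^*-\mathbb{E}[r_0v_K\pi_K^*]\}$, which (since the first-order condition gives $\mathbb{E}[r_0v_K\pi_K^*]=h_K$) supplies the last two lines $\mathbb{E}[v_Kr_0\pi_K^*]-v_Kr_0\pi_K^*$ of $\widetilde v_K$.

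The main obstacle is the first-stage correction $B:=\tfrac1n\sum_t(\widehat r_{K_0}-r_0)(Y_t,Z_t)v_K\pi_K^*$. Here I would substitute the Bahadur representation for $\widehat{\boldsymbol{\gamma}}_{K_0}-\boldsymbol{\gamma}^*_{K_0}$ obtained in the proof of Theorem~\ref{th_crate}(i), peel off the sieve bias $r_{K_0}^*-r_0=O(K_0^{-\omega_r})$ (negligible by Assumption~\ref{ass:H0}(i)), and replace the empirical cross-moment $\tfrac1n\sum_t u_{K_0}v_K^\top\pi_K^*$ by its population counterpart $\mathbb{E}[u_{K_0}v_K^\top\pi_K^*]$. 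The resulting linear functional $\mathbb{E}[u_{K_0}v_K^\top\pi_K^*]^\top\Sigma_{K_0}^{-1}$ is the $L^2(Y,Z)$-projection coefficient of $\mathbb{E}[v_K\pi_K^*\mid Y,Z]$ onto $\mathrm{span}\{u_{K_0}\}$; Assumption~\ref{ass:proj_error} is what lets me replace this projection by the conditional expectation itself up to an error of order $\sqrt{K}K_0^{-\omega_0}$, and composing with the first-stage influence function (again via the Radon--Nikodym identities $r_0f_{Z|Y}=f_Z$ and $r_0f_{Y|Z}=f_Y$) yields precisely the middle terms $-\mathbb{E}[v_Kr_0\pi_K^*|Y_t]-\mathbb{E}[v_Kr_0\pi_K^*|Z_t]$. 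Collecting the floating constants ($-2h_K+h_K+2h_K=h_K=\mathbb{E}[v_Kr_0\pi_K^*]$) confirms that the six pieces assemble exactly into $\widetilde v_K$, and the specialization under $H_0$ follows by setting $\pi_0=\pi_K^*\equiv1$. Throughout, the delicate point is that $K$ and $K_0$ both grow with $n$ while the data are only $\beta$-mixing, so every ``replace empirical by population'' and ``drop the degenerate U-statistic part'' step must be quantified with dimension-dependent moment bounds and the mixing inequalities, and each remainder checked against the target $o_P(K^{1/4}/\sqrt n)$ using Assumptions~\ref{ass:H0} and \ref{ass:proj_error}.
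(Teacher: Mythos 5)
Your proposal is correct and follows essentially the same route as the paper's own proof in Appendix \ref{app:lemma_vtilde}: starting from the normal equations, replacing $\widehat{H}_{K,K_0}$ by $H_K$ using the rate bounds from Theorem \ref{th_crate}, applying a Hoeffding/H\'ajek projection to the U-statistics $\widehat h_K$ and $\widehat b_{K_0}$ with the degenerate remainder controlled by the $\beta$-mixing bound (the paper's Lemmas \ref{lemma:denker1983u} and \ref{uproj}), splitting $\widehat r_{K_0}=r_0+(\widehat r_{K_0}-r_0)$ so that the centered true-weight average and the first-stage correction emerge separately, and invoking Assumption \ref{ass:proj_error} to pass from the $u_{K_0}$-projection to the conditional expectation. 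Your term-by-term bookkeeping (including the constants $-2h_K+h_K+2h_K=h_K$ via the identity $\mathbb{E}[r_0v_K\pi_K^*]=\mathbb{E}[r_0v_K\pi_0]=h_K$) matches the paper's assembly of $\widetilde v_K$ exactly, as does the specialization under $H_0$.
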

The proof of Lemma \ref{prop:vtilde} is relegated to Appendix 	\ref{app:lemma_vtilde}.
With \eqref{eq:Ihat_quad} and Lemma \ref{prop:vtilde}, we establish the limiting distribution of $\widehat{I}_{K,K_0}$ under $H_0$,  whose proof is relegated to Appendix \ref{app:th_cH0_fixk}. 
		\begin{theorem}\label{th_cH0_fixk}
Suppose that Assumptions  \ref{ass:ciid}-\ref{ass:proj_error}  hold. Then, for every fixed $K$ and as $n\rightarrow \infty$, 
			\begin{equation*}
				2n\times \widehat{I}_{K,K_0}\xrightarrow{d}\int|\mathbb{G}_0(x,y,z)|^2dF_{X,Y,Z}(x,y,z)\overset{d}{=} \sum_{j=1}^\infty \lambda_j
				\chi^2_j(1), \text{ under } H_0.
			\end{equation*}
where $\mathbb{G}_0(\cdot)$ is a Gaussian process with mean zero and covariance function $\{V_{K}((x,y,z),\\(x',y',z')):(x,y,z),(x',y',z')\in \mathcal{X}\times\mathcal{Y}\times\mathcal{Z}\}$ defined in (\ref{cov_VK}),  and $\chi^2_j(1)$s are independent chi-squared random variables with one degree of freedom  and the nonnegative constants, $\{\lambda_j\}$, are eigenvalues of the following equation: 
			\begin{align*}  
				\int V_{K}((x,y,z),(x^{\prime },y^{\prime
				},z^{\prime}))\phi(x,y,z)dF_{X,Y,Z}(x,y,z)=\lambda\phi(x^{\prime },y^{\prime },z^{\prime}).
			\end{align*}
		\end{theorem}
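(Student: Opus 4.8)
The plan is to reduce the scalar statistic $2n\widehat{I}_{K,K_0}$ to a quadratic form in an asymptotically Gaussian vector, and then to identify that quadratic form with the integral of a squared Gaussian process through a Mercer/Karhunen--Lo\`eve expansion. I would start from the exact identity \eqref{eq:Ihat_quad}, which under $H_0$ reads
$$2n\widehat{I}_{K,K_0}=\big[\sqrt{n}(\widehat{\boldsymbol{\beta}}_{K,K_0}-\boldsymbol{\beta}^*_K)\big]^\top\widehat{H}_{K,K_0}\big[\sqrt{n}(\widehat{\boldsymbol{\beta}}_{K,K_0}-\boldsymbol{\beta}^*_K)\big].$$
Because $K$ is fixed, the Bahadur representation of Lemma \ref{prop:vtilde} gives $\sqrt{n}(\widehat{\boldsymbol{\beta}}_{K,K_0}-\boldsymbol{\beta}^*_K)=n^{-1/2}\sum_{t=1}^n H_K^{-1}\widetilde v_K(X_t,Y_t,Z_t)+o_P(1)$, where under $H_0$ the influence function $\widetilde v_K$ takes its simplified, zero-mean form. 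I would first verify $\mathbb{E}[\widetilde v_K(X_t,Y_t,Z_t)]=0$, which is immediate since the two conditional-expectation terms and the subtracted $v_K r_0$ all share the common mean $\mathbb{E}[v_K r_0]$.

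Next I would establish a central limit theorem for the fixed-dimensional, strictly stationary, geometrically $\beta$-mixing sequence $\{H_K^{-1}\widetilde v_K(X_t,Y_t,Z_t)\}$, noting that under $H_0$ each $\widetilde v_K(X_t,Y_t,Z_t)$ is a measurable function of the single observation at time $t$ and therefore inherits the mixing and stationarity properties of the data. The $(4+\eta)$-moment bounds of Assumption \ref{ass:eigen}(iii), combined with the geometric mixing rate of Assumption \ref{ass:ciid}(i), supply the summable autocovariances and the Lindeberg-type condition needed for a standard mixing CLT, yielding $\sqrt{n}(\widehat{\boldsymbol{\beta}}_{K,K_0}-\boldsymbol{\beta}^*_K)\xrightarrow{d}\mathcal N\sim N(0,\Omega_K)$, where $\Omega_K=H_K^{-1}\Xi_K H_K^{-1}$ and $\Xi_K=\sum_{\ell=-\infty}^\infty\mathrm{Cov}\big(\widetilde v_K(X_0,Y_0,Z_0),\widetilde v_K(X_\ell,Y_\ell,Z_\ell)\big)$ is the (finite) long-run variance.

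I would then show $\widehat{H}_{K,K_0}\xrightarrow{P}H_K$: splitting $\widehat{H}_{K,K_0}$ into the term carrying the true weight $r_0$, which converges to $H_K$ by the LLN for $\beta$-mixing sequences, plus a remainder driven by $\widehat r_{K_0}-r_0$, the latter is negligible by Cauchy--Schwarz using the in-sample rate $n^{-1}\sum_t|\widehat r_{K_0}-r_0|^2=o_P(1)$ from Theorem \ref{th_crate}(i) together with the boundedness of $v_Kv_K^\top$ on the compact support (Assumptions \ref{ass:ciid}(ii), \ref{ass:cbounded}, \ref{ass:cK}). Slutsky's theorem and the continuous mapping theorem applied to the quadratic form then deliver $2n\widehat{I}_{K,K_0}\xrightarrow{d}\mathcal N^\top H_K\mathcal N$.

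Finally I would identify this limit with the Gaussian-process functional. Defining $\mathbb{G}_0(x,y,z):=\sqrt{r_0(y,z)}\,v_K(x,y,z)^\top\mathcal N$ produces a mean-zero Gaussian process whose covariance is exactly $V_K$ of \eqref{cov_VK}, and since $H_K=\mathbb{E}[r_0\,v_Kv_K^\top]$ one gets $\int|\mathbb{G}_0|^2\,dF_{X,Y,Z}=\mathcal N^\top\mathbb{E}[r_0\,v_Kv_K^\top]\mathcal N=\mathcal N^\top H_K\mathcal N$, matching the displayed limit. The chi-square series then follows from Mercer's theorem: the covariance operator with kernel $V_K$ admits eigenpairs $(\lambda_j,\phi_j)$ solving the stated integral equation, and the Karhunen--Lo\`eve expansion $\mathbb{G}_0=\sum_j\sqrt{\lambda_j}\,\xi_j\phi_j$ with i.i.d. $\xi_j\sim N(0,1)$ gives $\int|\mathbb{G}_0|^2\,dF_{X,Y,Z}=\sum_j\lambda_j\xi_j^2\overset{d}{=}\sum_j\lambda_j\chi^2_j(1)$, a sum of rank at most $K$. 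The main obstacle is the dependent-data CLT of the second paragraph---correctly identifying the long-run covariance $\Xi_K$ and checking the mixing-CLT hypotheses---whereas, once the quadratic-form limit is secured, the spectral identification is standard.
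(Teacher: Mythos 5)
Your proposal is correct, and it differs from the paper's proof at the key distributional step in a genuinely meaningful way. Both arguments share the same skeleton: the exact identity \eqref{eq:Ihat_quad} (valid under $H_0$ because the sieve contains the constant, so $\pi_K^*\equiv 1$), the Bahadur representation of Lemma \ref{prop:vtilde}, and a final spectral/Karhunen--Lo\`eve identification. But the paper then rewrites the quadratic form as $\int\{n^{-1/2}\sum_t S_0(X_t,Y_t,Z_t;x,y,z)\}^2dF_{X,Y,Z}(x,y,z)$ with $S_0(X_t,Y_t,Z_t;x,y,z)=\sqrt{r_0(y,z)}\,v_K^\top(x,y,z)H_K^{-1}\widetilde v_K(X_t,Y_t,Z_t)$ and invokes the functional CLT of \cite{arcones1994central} for uniformly bounded VC-subgraph classes under $\beta$-mixing, so that weak convergence is established at the level of the empirical process, followed by continuous mapping and the representation in \cite{kuo1975gaussian}. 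You instead apply a finite-dimensional multivariate CLT for geometrically $\beta$-mixing sequences to $\sqrt{n}(\widehat{\boldsymbol{\beta}}_{K,K_0}-\boldsymbol{\beta}^*_K)$, combine it with $\widehat{H}_{K,K_0}\xrightarrow{P}H_K$ and Slutsky to get the limit $\mathcal{N}^\top H_K\mathcal{N}$, and only afterwards lift this to the process level by defining $\mathbb{G}_0=\sqrt{r_0}\,v_K^\top\mathcal{N}$ and checking algebraically that $\int|\mathbb{G}_0|^2dF_{X,Y,Z}=\mathcal{N}^\top H_K\mathcal{N}$ with covariance kernel $V_K$. Your route exploits exactly what the fixed-$K$ setting provides: the limit process has rank at most $K$, so no empirical-process theory is needed, the argument is more elementary, and your remark that the chi-square series has at most $K$ nonzero eigenvalues makes explicit something the paper leaves implicit. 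What the paper's heavier machinery buys is a uniform template: the same process-level convergence is reused essentially verbatim under local alternatives (Theorem \ref{th_cH1n_fixK}) and fixed alternatives (Theorem \ref{thm:cglobal_fixK}), where a deterministic drift must be added inside the squared process before integrating, a step that is more awkward in the purely finite-dimensional formulation. Two points of reconciliation, neither a gap: (i) your long-run covariance $\Xi_K=\Gamma_0+\sum_{k\geq 1}(\Gamma_k+\Gamma_k^\top)$, with $\Gamma_k=\mathbb{E}[\widetilde v_K(O_0)\widetilde v_K^\top(O_k)]$, is the rigorously correct limiting object, whereas \eqref{cov_VK} writes $\Gamma_0+2\sum_{k\geq 1}\Gamma_k$; the two coincide once the autocovariances are symmetrized, which is how the paper's formula should be read; (ii) the paper replaces $\widehat{H}_{K,K_0}$ by $\mathbb{E}[r_0v_Kv_K^\top]$ through explicit rate bounds before taking limits rather than by Slutsky afterwards --- equivalent bookkeeping, justified by the same estimates from Theorem \ref{th_crate}(i) that you cite.
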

Theorem \ref{th_cH0_fixk} shows that the proposed test statistic converges to a limiting distribution with a parametric rate $n$ under the null. Unfortunately, the proposed test has a nonstandard limiting distribution, which makes critical values challenging to compute. Although bootstrap methods can help calculate the critical values, researchers prefer tests with a standard limiting distribution. Next, we establish such a distribution when $K$ goes to infinity.  Denote $O_t:=(X_t^\top,Y_t^\top,Z_t^\top)^\top\in  \mathbb{R}^{d_{X}+d_Y+d_Z}$,  $o:=(x^\top,y^\top,z^\top)^\top$, $\mathcal{O}=\mathcal{X}\times\mathcal{Y}\times\mathcal{Z}$. We impose the following additional conditions.
		\begin{assumption}\label{ass:Kinf}
			 Suppose  $n^{\gamma_1}/\sqrt{K}=O(1)$ for an arbitrarily small constant $\gamma_1>0.$
		\end{assumption}
  \begin{assumption}\label{ass:fo_ratio}
     Suppose that $\max_{1\leq i\leq n}\sup_{(o_0,o_i)\in\mathcal{O}\times\mathcal{O}} f_{O_0,O_i}\left(o_0,o_i\right)/\left\{f_{O_0}(o_0)f_{O_i}(o_i)\right\}<\infty. $
  \end{assumption}
  \begin{assumption}\label{ass:covK}
      Suppose that there exists a universal constant C such that \begin{equation*}
          \max_{1\leq i<j\leq n}\sum_{k=1}^KCov\left(v_{K,k}(O_i),v_{K,k}(O_j)\right)<C,
      \end{equation*}
      where $C$ does not depend on $K$.
  \end{assumption}
 Assumptions \ref{ass:Kinf} and \ref{ass:fo_ratio} are used to establish the asymptotic normality of the standardized test statistic by applying the dependent $U$-statistic theory of Lemma \ref{T1997} in the Appendix \ref{appendix:pre}. Assumption \ref{ass:Kinf} enables $K$ to grow at an arbitrarily slow polynomial rate, which is critical to demonstrate that our proposed test suffers only a slight loss of power; see Theorems \ref{th_cH0} and \ref{th_cH1n}. Assumption \ref{ass:covK} requires that the linear correlation between $v_{K,k}(O_i)$ and $v_{K,k}(O_j)$ for $i\neq j$ decays sufficiently fast as the degree of the nonlinear transform, $K$, increases. This assumption can be easily verified. For example, let $O_t=(X_t,Y_t,Z_t)^{\top}\in\mathbb{R}^3$ and $O_{t}=\rho O_{t-1}+\varepsilon_t$ be the autoregressive model (AR),  where $0<\rho<1$ and $\varepsilon_t\sim N(\boldsymbol{0},I_{3\times 3})$. Let $v_{K}(O)=\left(1,\frac{X^1}{\sqrt{\mathbb{E}\left[X^{2}\right]}},...,\frac{X^{k_1}}{\sqrt{\mathbb{E}\left[X^{2k_1}\right]}},\frac{Y^1}{\sqrt{\mathbb{E}\left[Y^{2}\right]}} ,...,\frac{Y^{k_2}}{\sqrt{\mathbb{E}\left[Y^{2k_2}\right]}},\frac{Z^{1}}{\sqrt{\mathbb{E}\left[Z^{2}\right]}},...,\frac{Z^{k_3}}{\sqrt{\mathbb{E}\left[Z^{2k_3}\right]}}\right)$ be a normalized polynomial sieve, where $k_1+k_2+k_3=K-1$. Then it is easy to derive 
 \begin{align*}
     &\max_{1\leq i<j\leq n}\sum_{k=1}^KCov(v_{K,k}(O_i),v_{K,k}(O_j))=\sum_{k=1}^KCov(v_{K,k}(O_1),v_{K,k}(O_2))\\
     =&\sum_{k=1}^K\left\{\mathbb{E}\left[v_{K,k}(O_1)v_{K,k}(O_2)\right]-\left\{\mathbb{E}\left[v_{K,k}(O)\right]\right\}^2\right\}\\=&1+\sum_{k=1}^{k_1}\rho^k+\sum_{k=1}^{k_2}\rho^k+\sum_{k=1}^{k_3}\rho^k-\sum_{k=1}^K\left\{\mathbb{E}\left[v_{K,k}(O)\right]\right\}^2<1+\frac{3\rho}{1-\rho},
 \end{align*} 
 which implies the validity of Assumption \ref{ass:covK}.
 
 The following theorem is proved in Appendix \ref{app:th_cH0}.
		\begin{theorem}\label{th_cH0}
Suppose that Assumptions \ref{ass:ciid}-\ref{ass:covK} hold. Then, as $K\rightarrow \infty$ and $n\rightarrow \infty$, we obtain
		\begin{equation*}
		\frac{2n\{\widehat{I}_{K,K_0}-B_{K}\}}{\sigma_{K}}\xrightarrow{d}\mathcal{N}(0,1) \text{ under } H_0.
	\end{equation*}
where
	\begin{equation*}
B_{K}:=\frac{1}{2n}\mathbb{E}\left[\widetilde{v}_{K}^\top(O)H_K^{-1}\widetilde{v}_{K}(O)\right],
		\end{equation*}
and  $\overline{O}_0$ is an independent copy of $O_0$,
         \begin{equation*}
\sigma_{K}^2:=2\mathbb{E}\left[\widetilde{V}_{K}^2(O_0,\overline{O}_0)\right]=2\mathbb{E}\left[\{\widetilde v_{K}^\top(O_0)H_K^{-1}\widetilde v_{K}(\overline{O}_0)\}^2\right].
		\end{equation*}
		\end{theorem}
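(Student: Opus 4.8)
The plan is to convert the quadratic form \eqref{eq:Ihat_quad} into a second-order $U$-statistic in the influence terms $\widetilde v_K$, isolate its diagonal (which produces the centering $B_K$) from its off-diagonal (which is asymptotically normal at scale $\sigma_K$), and then invoke the degenerate $U$-statistic central limit theorem for $\beta$-mixing sequences. Writing $\Delta := \widehat{\boldsymbol\beta}_{K,K_0}-\boldsymbol\beta^*_K = S_n + R_n$ with $S_n := n^{-1}\sum_{t=1}^n H_K^{-1}\widetilde v_K(O_t)$ and $\|R_n\| = o_P(K^{1/4}/\sqrt n)$ from Lemma \ref{prop:vtilde}, I would first replace $\widehat H_{K,K_0}$ by $H_K$ in \eqref{eq:Ihat_quad} and expand, obtaining as the leading term
\begin{equation*}
n\,S_n^\top H_K S_n = \frac1n\sum_{t=1}^n \widetilde v_K^\top(O_t)H_K^{-1}\widetilde v_K(O_t) + \frac1n\sum_{t\ne s}\widetilde v_K^\top(O_t)H_K^{-1}\widetilde v_K(O_s).
\end{equation*}
Under $H_0$ we have $\pi_K^* = \pi_0 \equiv 1$, $\widetilde v_K$ reduces to the simpler form in Lemma \ref{prop:vtilde}, and $\mathbb E[\widetilde v_K(O)] = 0$, so the off-diagonal kernel $\widetilde V_K(O_t,O_s) = \widetilde v_K^\top(O_t)H_K^{-1}\widetilde v_K(O_s)$ is canonical (completely degenerate).

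Next I would treat the two pieces separately. The diagonal sum has expectation $\mathbb E[\widetilde v_K^\top(O)H_K^{-1}\widetilde v_K(O)] = 2nB_K$, so it contributes exactly the centering; its fluctuation about $2nB_K$ is a centered sample mean whose standard deviation is of order $K/\sqrt n$ by the fourth-moment bound of Assumption \ref{ass:eigen}(iii) and the exponential mixing of Assumption \ref{ass:ciid}, and this is $o(\sigma_K)=o(\sqrt K)$ under the growth restrictions of Assumptions \ref{ass:H0} and \ref{ass:Kinf}, hence negligible after division by $\sigma_K$. The off-diagonal sum is the only term surviving at scale $\sigma_K$; its independence-level variance is $2\,\mathbb E[\widetilde V_K^2(O_0,\overline O_0)] = \sigma_K^2$, and the additional deterministic drift and variance contributions created by the dependence are controlled by Assumptions \ref{ass:fo_ratio} and \ref{ass:covK}, which bound the joint-to-product density ratio and force the summed cross-covariances to stay bounded in $K$.

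I would then establish asymptotic normality of $\sigma_K^{-1}n^{-1}\sum_{t\ne s}\widetilde V_K(O_t,O_s)$ by invoking Lemma \ref{T1997}, the central limit theorem for degenerate second-order $U$-statistics of absolutely regular sequences whose kernel is allowed to depend on $K$. Because the kernel's eigenvalues $\{\lambda_j\}$ (cf.\ Theorem \ref{th_cH0_fixk}) spread out as $K\to\infty$, so that no single direction dominates, the weighted-$\chi^2$ behaviour of fixed $K$ passes to a Gaussian limit; verifying the Lyapunov/fourth-moment hypothesis of Lemma \ref{T1997} uniformly in $K$ is exactly where Assumption \ref{ass:Kinf} (allowing $K\to\infty$ arbitrarily slowly) and Assumptions \ref{ass:fo_ratio}--\ref{ass:covK} are used.

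Finally I would show the discarded remainders are $o_P(\sigma_K)$: the error $n\,\Delta^\top(\widehat H_{K,K_0}-H_K)\Delta$ from the Hessian replacement is handled by Theorem \ref{th_crate} and the eigenvalue bounds of Assumption \ref{ass:eigen}; the pure remainder $n\,R_n^\top H_K R_n = o_P(\sqrt K)$ by Lemma \ref{prop:vtilde}; and the cross term $2n\,S_n^\top H_K R_n$ must also be shown negligible. The main obstacle is this cross term together with the uniform verification of Lemma \ref{T1997}: a crude Cauchy--Schwarz bound on $2n\,S_n^\top H_K R_n$ is not sharp enough once the diagonal inflates $\|H_K^{1/2}S_n\|$ to order $\sqrt{K/n}$, so it must be controlled by exploiting the mean-zero, higher-order structure of $R_n$ rather than by norms alone, and matching the genuinely dependent variance of the off-diagonal $U$-statistic to the independence value $\sigma_K^2$ requires the decay supplied by Assumptions \ref{ass:fo_ratio}--\ref{ass:covK}. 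Collecting the centering $B_K$, the Gaussian off-diagonal term scaled by $\sigma_K$, and the negligible remainders yields $2n(\widehat I_{K,K_0}-B_K)/\sigma_K \xrightarrow{d} \mathcal N(0,1)$ under $H_0$.
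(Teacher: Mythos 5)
Your proposal follows essentially the same route as the paper's own proof: starting from the quadratic form and the Bahadur representation of Lemma \ref{prop:vtilde}, splitting $n\,S_n^\top H_K S_n$ into the diagonal part (whose expectation is $2nB_K$ and whose fluctuation is negligible after dividing by $\sigma_K$) and the off-diagonal degenerate $U$-statistic, verifying the conditions of Lemma \ref{T1997} using Assumptions \ref{ass:Kinf}--\ref{ass:covK}, and killing the dependence-induced drift $\sum_{i<j}\mathbb{E}[\widetilde V_K(O_i,O_j)]$ via the $\beta$-mixing / covariance-decay split. The cross term $2n\,S_n^\top H_K R_n$ that you flag as the main obstacle is precisely the step the paper absorbs without comment into the $o_P(\sqrt{K}/n)$ remainder of its display \eqref{Ichat}, so your caution there is warranted rather than a defect of your plan.
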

We estimate $B_{K}$ and $\sigma_{K}^2$ by their sample analogs:
			\begin{equation*}
		\begin{aligned}
			\widehat B_{K,K_0}=\frac{1}{2n^2}\sum_{i=1}^{n}\widehat{v}_{K,K_0}^\top(X_i,Y_i,Z_i)\widehat{H}_{K,K_0}^{-1}\widehat{v}_{K,K_0}(X_i,Y_i,Z_i),
		\end{aligned}
	\end{equation*}
		\begin{equation*}
			\widehat{\sigma}_{K,K_0}=\left[\frac{2}{ n(n-1)}\sum_{j=1}^{n}\sum_{i=1,i\neq j}^{n}\left\{\widehat v_{K,K_0}^\top(X_i,Y_i,Z_i)\widehat{H}_{K,K_0}^{-1}\widehat v_{K,K_0}(X_j,Y_j,Z_j)\right\}^2\right]^{1/2},
		\end{equation*}
		\begin{align*}
			\widehat v_{K,K_0}(X_i,Y_i,Z_i)
			=&\frac{1}{n}\sum_{j=1}^{n}v_K(X_i,Y_j,Z_i)-\left\{\frac{1}{n}\sum_{i=1}^{n}v_K(X_i,Y_i,Z_i)u_{K_0}^\top(Y_i,Z_i)\right\}\\
            &\left\{\frac{1}{n}\sum_{j=1}^nu_{K_0}(Y_j,Z_i)\right\}+\left\{\frac{1}{n}\sum_{i=1}^{n}v_K(X_i,Y_i,Z_i)\widehat r_{K_0}(Y_i,Z_i)\right\}
			\\&-v_K(X_i,Y_i,Z_i)\widehat r_{K_0}(Y_i,Z_i).
		\end{align*}
We reject the null hypothesis when $2n\times\{\widetilde{I}_{K,K_0}-\widehat{B}_{K,K_0}\}/\widehat{\sigma}_{K,K_0}>c_{\alpha}$, where $c_{\alpha}$ is the $(1-\alpha)$-quantile of the standard normal distribution. Note that $B_K\asymp K/n$	and $\sigma_K\asymp \sqrt{K}$, hence our proposed test is $n/\sqrt{K}$-consistent with a standard normal limiting distribution. Since we only require $n^{\gamma_1}/\sqrt{K}=O(1)$ in Assumption \ref{ass:Kinf} for an arbitrarily small $\gamma_1>0$ that does not depend on the dimension of the data, our proposed test suffers only a slight loss of the convergence rate.
		
		\subsection{Asymptotic Local Power}
		To analyze the power of the proposed test, we consider the following local alternatives:
		\begin{equation*}
			H_{1n}:  \pi_n(X,Y,Z)=f_{X|Z}(X|Z)f_{Y|Z}(Y|Z)/f_{X,Y|Z}(X,Y|Z) = 1+d_n\cdot\Delta(X,Y,Z),
		\end{equation*}
where $d_n \rightarrow 0$ as $n \rightarrow \infty$ and the function $\Delta(\cdot)$ is continuous. The continuity condition and Assumption \ref{ass:ciid} imply $\Delta(X,Y,Z)$ bounded. For sufficiently small $d_n$, we have 
		\begin{equation*}
			0<\eta_1< \pi_n(x,y,z)<\eta_2<\infty, \quad\forall(x,y,z)\in\mathcal{X} \times \mathcal{Y}\times \mathcal{Z}.
		\end{equation*}
Thus, Assumption \ref{ass:cbounded} with $\pi_0$ replaced by $ \pi_n$ still holds. 		
		Let 
		\begin{equation*}
			\varepsilon_{K}:=\min_\lambda	\sup_{(x,y,z)\in\mathcal{X}\times\mathcal{Y}\times\mathcal{Z}}|\Delta(x,y,z)-\lambda^\top v_{K}(x,y,z)|	
		\end{equation*}
be the $L^\infty$-approximation error of $\Delta(x,y,z)$ using the basis $v_{K}(x,y,z)$. Because $v_{K}(x,y,z)$  contains the constant, we have	
			\begin{align*}		&\min_\lambda\sup_{(x,y,z)\in\mathcal{X}\times\mathcal{Y}\times\mathcal{Z}}| \pi_n(x,y,z)-\lambda^\top v_{K}(x,y,z)|\\
				=&\min_\lambda\sup_{(x,y,z)\in\mathcal{X}\times\mathcal{Y}\times\mathcal{Z}}|1+d_n\Delta(x,y,z)-\lambda^\top v_{K}(x,y,z)|\\
				=&\min_\lambda\sup_{(x,y,z)\in\mathcal{X}\times\mathcal{Y}\times\mathcal{Z}}|d_n\Delta(x,y,z)-\lambda^\top v_{K}(x,y,z)|\\
				=&d_n\cdot \min_\lambda\sup_{(x,y,z)\in\mathcal{X}\times\mathcal{Y}\times\mathcal{Z}}|\Delta(x,y,z)-\lambda^\top v_{K}(x,y,z)/d_n|
				=d_n\varepsilon_{K}.\label{dnepsilon}
			\end{align*}
The approximation error in Assumption \ref{ass:cpi0-pi*} with $ \pi_0(x,y,z)$ replaced by $\pi_n(x,y,z)$ is $d_n\varepsilon_{K}$ , and $\xi_KK^{-\omega_\pi} \rightarrow 0$ in Assumption  \ref{ass:cK} (ii) is replaced by $\xi_K d_n\varepsilon_K\rightarrow 0$.
		
		Let $\text{proj}^{wls}_{v_K} \Delta(x,y,z)$ denote the weighted least-square projection of $\Delta(x,y,z)$ on the linear space spanned by $v_K(x,y,z)$ under the $L^2(r_0(y,z)dF_{X,Y,Z}(x,y,z))$-distance, i.e.
        \begin{align*}
		\text{proj}^{wls}_{v_K}\phi(x,y,z)
		:=&\mathbb{E}\left[r_0(Y,Z)\phi(X,Y,Z)v_K^\top(X,Y,Z)\right]\notag \\
  &\times \mathbb{E}\left[r_0(Y,Z)v_K(X,Y,Z)v_K^\top(X,Y,Z)\right]^{-1}v_K(x,y,z).
        \end{align*}
        The power of the proposed test for a fixed $K$ against the local alternative $H_{1n}$ is given in the following theorem, which is proved in Appendix \ref{sec:th_cH1n_fixK}.
		\begin{theorem}\label{th_cH1n_fixK}
Suppose that Assumptions \ref{ass:ciid}-\ref{ass:proj_error} hold. For every fixed $K$ and $d_n = n^{-1/2}$, we have 
			\begin{equation*}
			2n\times \widehat I_{K,K_0}\xrightarrow{d}\int\left\{\mathbb{G}(x,y,z)+\sqrt{r_0(y,z)}\ \text{proj}^{wls}_{v_K}\Delta(x,y,z)\right\}^2dF_{X,Y,Z}(x,y,z)\text{ under }  H_{1n}.
		\end{equation*}
where $\mathbb{G}(\cdot)$ is a Gaussian process with mean zero and covariance function defined in Appendix \ref{sec:th_cH1n_fixK}. 
		\end{theorem}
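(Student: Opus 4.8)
The plan is to reduce the analysis under $H_{1n}$ to a drift-shifted version of the null argument behind Theorem \ref{th_cH0_fixk}. Because the constant function $1$ lies in $\text{span}\{v_K\}$, writing $1=v_K(o)^\top\boldsymbol{\beta}^*_{K,0}$ for its coefficient vector $\boldsymbol{\beta}^*_{K,0}$, the definition \eqref{def:Ihat} gives immediately
\[
\widehat{I}_{K,K_0}=\frac{1}{2n}\sum_{t=1}^n\bigl\{v_K(O_t)^\top(\widehat{\boldsymbol{\beta}}_{K,K_0}-\boldsymbol{\beta}^*_{K,0})\bigr\}^2\widehat{r}_{K_0}(Y_t,Z_t)=\tfrac12(\widehat{\boldsymbol{\beta}}_{K,K_0}-\boldsymbol{\beta}^*_{K,0})^\top\widehat{H}_{K,K_0}(\widehat{\boldsymbol{\beta}}_{K,K_0}-\boldsymbol{\beta}^*_{K,0}),
\]
the exact analog of the quadratic form \eqref{eq:Ihat_quad}. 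Thus $2n\widehat{I}_{K,K_0}=n(\widehat{\boldsymbol{\beta}}_{K,K_0}-\boldsymbol{\beta}^*_{K,0})^\top\widehat{H}_{K,K_0}(\widehat{\boldsymbol{\beta}}_{K,K_0}-\boldsymbol{\beta}^*_{K,0})$, and everything reduces to finding the joint limit of $\sqrt{n}(\widehat{\boldsymbol{\beta}}_{K,K_0}-\boldsymbol{\beta}^*_{K,0})$ and $\widehat{H}_{K,K_0}$.

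Next I would locate the drift. Under $H_{1n}$ the true ratio is $\pi_n=1+d_n\Delta$, and since the weighted projection is linear with $\text{proj}^{wls}_{v_K}1=1$, the weighted-$L^2$ projection $\pi_{K,n}^*$ of $\pi_n$ (at which Lemma \ref{prop:vtilde} centers $\widehat{\boldsymbol{\beta}}_{K,K_0}$) has coefficient $\boldsymbol{\beta}^*_{K,n}=\boldsymbol{\beta}^*_{K,0}+d_n\boldsymbol{\delta}_{K}$, where $\boldsymbol{\delta}_K$ is the coefficient vector of $\text{proj}^{wls}_{v_K}\Delta$ (the weight under $H_{1n}$ differs from its limit by $O(d_n)$, negligible after the $\sqrt n$ scaling). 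Decomposing $\widehat{\boldsymbol{\beta}}_{K,K_0}-\boldsymbol{\beta}^*_{K,0}=(\widehat{\boldsymbol{\beta}}_{K,K_0}-\boldsymbol{\beta}^*_{K,n})+d_n\boldsymbol{\delta}_K$ and invoking Lemma \ref{prop:vtilde} — which remains valid under $H_{1n}$ since Assumption \ref{ass:cbounded} holds with $\pi_n$ and the sieve error is now $d_n\varepsilon_K$ — gives, with $d_n=n^{-1/2}$,
\[
\sqrt{n}(\widehat{\boldsymbol{\beta}}_{K,K_0}-\boldsymbol{\beta}^*_{K,0})=\frac{1}{\sqrt n}\sum_{t=1}^nH_K^{-1}\widetilde v_{K,n}(O_t)+\boldsymbol{\delta}_K+o_P(1),
\]
where $\widetilde v_{K,n}$ is the influence function of Lemma \ref{prop:vtilde} evaluated at $(\pi_n,\pi_{K,n}^*)$. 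For fixed $K$ the summands form a bounded, stationary $\beta$-mixing triangular array, and since $\widetilde v_{K,n}\to\widetilde v_K$ (the $H_0$ influence function) at rate $O(d_n)$, the long-run variance stabilizes to its null value $\Omega_K$; a $\beta$-mixing central limit theorem then yields $\tfrac{1}{\sqrt n}\sum_t H_K^{-1}\widetilde v_{K,n}(O_t)\xrightarrow{d}\mathcal{Z}\sim\mathcal{N}(0,H_K^{-1}\Omega_KH_K^{-1})$, so $\sqrt{n}(\widehat{\boldsymbol{\beta}}_{K,K_0}-\boldsymbol{\beta}^*_{K,0})\xrightarrow{d}\mathcal{Z}+\boldsymbol{\delta}_K$.

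Combining this with $\widehat{H}_{K,K_0}\xrightarrow{p}H_K$ (which holds under $H_{1n}$ by the same argument as under $H_0$, up to $O(d_n)$) and the continuous mapping theorem gives $2n\widehat{I}_{K,K_0}\xrightarrow{d}(\mathcal{Z}+\boldsymbol{\delta}_K)^\top H_K(\mathcal{Z}+\boldsymbol{\delta}_K)$. The last step is to rewrite this quadratic form as the stated integral: setting $\mathbb{G}(o):=\sqrt{r_0(y,z)}\,v_K(o)^\top\mathcal{Z}$, a mean-zero Gaussian process with the covariance recorded in Appendix \ref{sec:th_cH1n_fixK}, and using $H_K=\mathbb{E}[r_0(Y,Z)v_K(O)v_K(O)^\top]$ together with $\text{proj}^{wls}_{v_K}\Delta(o)=v_K(o)^\top\boldsymbol{\delta}_K$, one verifies
\[
\int\bigl\{\mathbb{G}(o)+\sqrt{r_0(y,z)}\,\text{proj}^{wls}_{v_K}\Delta(o)\bigr\}^2dF_{X,Y,Z}(o)=(\mathcal{Z}+\boldsymbol{\delta}_K)^\top H_K(\mathcal{Z}+\boldsymbol{\delta}_K),
\]
which is the claimed limit. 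I expect the main obstacle to be the second step: confirming that the expansion of Lemma \ref{prop:vtilde} keeps its $o_P(K^{1/4}/\sqrt n)$ remainder when the sieve error is $d_n\varepsilon_K$ rather than $K^{-\omega_\pi}$, and that the triangular-array long-run variance of $H_K^{-1}\widetilde v_{K,n}(O_t)$ converges to the null value $\Omega_K$ so that the stochastic part shares the Gaussian limit of Theorem \ref{th_cH0_fixk}; the algebraic reduction and the final representation are then routine.
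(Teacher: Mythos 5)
Your proposal is correct and follows essentially the same route as the paper's proof in Appendix \ref{sec:th_cH1n_fixK}: both reduce $2n\widehat{I}_{K,K_0}$ to a quadratic form centered at the weighted projection of $\pi_n$, invoke Lemma \ref{prop:vtilde} with $\pi_0$ replaced by $\pi_n$ for the stochastic part, and identify the drift through the exact identity $\sqrt{n}\left(\pi_K^*(X,Y,Z)-1\right)=\text{proj}^{wls}_{v_K}\Delta(X,Y,Z)$ when $d_n=n^{-1/2}$. The only difference is one of packaging: the paper applies Arcones' functional CLT to the empirical process $n^{-1/2}\sum_{t}S(X_t,Y_t,Z_t;\cdot,\cdot,\cdot)$ and then integrates, whereas you apply a finite-dimensional $\beta$-mixing CLT to $\sqrt{n}\left(\widehat{\boldsymbol{\beta}}_{K,K_0}-\boldsymbol{\beta}^*_{K,n}\right)$ and verify the quadratic-form-equals-integral identity at the end; for fixed $K$ these are the same argument.
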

		Theorem \ref{th_cH1n_fixK} shows that the proposed test does not have power against all local deviations. For example, the proposed test cannot detect any deviation $\Delta(x,y,z)$ satisfying $\text{proj}^{wls}_{v_K}\Delta(x,y,z)=0$ with probability one. Thus, to gain power against all local alternatives, we must have one that  $\text{proj}^{wls}_{v_K}\Delta(x,y,z)\neq 0$ holds with a positive probability for all deviations $\Delta(x,y,z)$. One way to ensure that is to allow $K$ to grow so that $\text{proj}^{wls}_{v_K}\Delta(x,y,z)$ converges to $\Delta(x,y,z)$. The local power is given in the theorem below for a slow growing, which is proved  in Appendix \ref{sec:th_cH1n}.
		\begin{theorem}\label{th_cH1n}
Suppose that Assumptions  \ref{ass:ciid}-\ref{ass:covK} hold, with $d_n=\sigma_{K}^{1/2}/n^{1/2}\asymp {K}^{1/4}/n^{1/2}$. As $K \rightarrow \infty$ and $n \rightarrow \infty$, we have
          		\begin{equation*}
			\begin{aligned}
				\frac{2n}{ \sigma_{K}}(\widehat{I}_{K,K_0}- B_{K})\xrightarrow{d}\mathbb{E}[\Delta^2(X,Y,Z)r_0(Y,Z)]+\mathcal N\left( 0,1\right)  \text{ under }  H_{1n}. 
			\end{aligned}
		\end{equation*}
   
		\end{theorem}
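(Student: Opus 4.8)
The plan is to start from the definition \eqref{def:Ihat} and separate the deterministic ``signal'' produced by the local alternative from the stochastic estimation error. Under $H_{1n}$ the weighted $L^2$ projection is linear and the constant $1$ lies in $\mathrm{span}\{v_K\}$, so the pseudo-true approximation satisfies $\pi_K^*(x,y,z)=1+d_n\,\mathrm{proj}^{wls}_{v_K}\Delta(x,y,z)$. Writing $a_t:=(\widehat{\boldsymbol\beta}_{K,K_0}-\boldsymbol\beta^*_K)^\top v_K(O_t)$ and $s_t:=\pi_K^*(O_t)-1=d_n\,\mathrm{proj}^{wls}_{v_K}\Delta(O_t)$, I would expand $\widehat\pi_{K,K_0}(O_t)-1=a_t+s_t$ and decompose
\begin{equation*}
\widehat I_{K,K_0}=T_1+T_2+T_3,\quad T_1:=\tfrac1{2n}\!\sum_t a_t^2\widehat r_{K_0}(Y_t,Z_t),\ \ T_2:=\tfrac1n\!\sum_t a_ts_t\widehat r_{K_0}(Y_t,Z_t),\ \ T_3:=\tfrac1{2n}\!\sum_t s_t^2\widehat r_{K_0}(Y_t,Z_t),
\end{equation*}
so that $\tfrac{2n}{\sigma_K}(\widehat I_{K,K_0}-B_K)=\tfrac{2n}{\sigma_K}(T_1-B_K)+\tfrac{2n}{\sigma_K}T_2+\tfrac{2n}{\sigma_K}T_3$, and treat the three pieces separately, closing with Slutsky's theorem.

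For $T_1$, observe that $T_1=\tfrac12(\widehat{\boldsymbol\beta}_{K,K_0}-\boldsymbol\beta^*_K)^\top\widehat H_{K,K_0}(\widehat{\boldsymbol\beta}_{K,K_0}-\boldsymbol\beta^*_K)$ is exactly the quadratic form \eqref{eq:Ihat_quad} treated under the null. I would insert the Bahadur representation of Lemma \ref{prop:vtilde} and invoke the dependent $U$-statistic central limit theorem (Lemma \ref{T1997}) precisely as in the proof of Theorem \ref{th_cH0}. The only genuinely new point is to verify that, since $d_n\to0$ forces $\pi_K^*\to1$ uniformly, the general-form influence function $\widetilde v_K$, the bias $B_K$, and the variance $\sigma_K^2$ agree asymptotically with their $H_0$ counterparts up to corrections that vanish after the $2n/\sigma_K$ scaling; this gives $\tfrac{2n}{\sigma_K}(T_1-B_K)\xrightarrow{d}\mathcal N(0,1)$.

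For the signal term $T_3$, the calibration $d_n^2=\sigma_K/n$ is chosen so that $\tfrac{2n}{\sigma_K}T_3=\tfrac{d_n^2 n}{\sigma_K}\cdot\tfrac1n\sum_t\big(\mathrm{proj}^{wls}_{v_K}\Delta(O_t)\big)^2\widehat r_{K_0}(Y_t,Z_t)=\tfrac1n\sum_t\big(\mathrm{proj}^{wls}_{v_K}\Delta(O_t)\big)^2\widehat r_{K_0}(Y_t,Z_t)$. I would first replace $\widehat r_{K_0}$ by $r_0$ using the uniform rate of Theorem \ref{th_crate}(i), then apply a $\beta$-mixing law of large numbers to obtain the limit $\mathbb E[(\mathrm{proj}^{wls}_{v_K}\Delta)^2 r_0]$, and finally use that the weighted projection converges to $\Delta$ in $L^2(r_0\,dF_{X,Y,Z})$ (because $v_K$ approximates $\Delta$ with error $\varepsilon_K\to0$), giving $\tfrac{2n}{\sigma_K}T_3\xrightarrow{P}\mathbb E[\Delta^2(X,Y,Z)r_0(Y,Z)]$.

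The main obstacle is the cross term $T_2$. A naive Cauchy--Schwarz bound $|T_2|\lesssim d_n\|\widehat{\boldsymbol\beta}_{K,K_0}-\boldsymbol\beta^*_K\|\,\|\cdot\|$ \emph{diverges} after scaling (it is of order $K^{1/4}$), so I would instead exploit the mean-zero structure of the estimation error. By Lemma \ref{prop:vtilde}, $\widehat{\boldsymbol\beta}_{K,K_0}-\boldsymbol\beta^*_K$ equals, up to $o_P(K^{1/4}/\sqrt n)$, an average of the centered influence functions $H_K^{-1}\widetilde v_K(O_t)$. Writing $w:=\mathbb E[r_0\Delta v_K]$ and using the projection identity $\mathbb E[r_0 v_K\,\mathrm{proj}^{wls}_{v_K}\Delta]=w$, the term $\tfrac{2n}{\sigma_K}T_2$ reduces to $\tfrac{2d_n}{\sigma_K}$ times a centered sum of $\psi_s:=(H_K^{-1}\widetilde v_K(O_s))^\top w$. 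Its long-run variance is $O(1)$ because $w^\top H_K^{-1}w=\mathbb E[(\mathrm{proj}^{wls}_{v_K}\Delta)^2 r_0]\le\mathbb E[\Delta^2 r_0]<\infty$ and $\mathrm{Var}(\widetilde v_K)\preceq C H_K$ (the dominant summand $-v_Kr_0\pi_K^*$ has second moment $\preceq C H_K$ since $r_0$ is bounded, and the conditional-expectation summands are projections of no larger norm), together with Assumption \ref{ass:eigen}. Hence $\sum_s\psi_s=O_P(\sqrt n)$ and $\tfrac{2n}{\sigma_K}T_2=O_P(d_n\sqrt n/\sigma_K)=O_P(\sigma_K^{-1/2})=o_P(1)$. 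Combining the three limits via Slutsky's theorem yields $\tfrac{2n}{\sigma_K}(\widehat I_{K,K_0}-B_K)\xrightarrow{d}\mathbb E[\Delta^2(X,Y,Z)r_0(Y,Z)]+\mathcal N(0,1)$, as claimed.
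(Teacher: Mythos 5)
Your proposal is correct and follows essentially the same route as the paper's proof: the paper decomposes $2\widehat I_{K,K_0}$ into $Q_{1n}+Q_{2n}+Q_{3n}$ (your $2T_1,2T_2,2T_3$ with $\widehat r_{K_0}$ replaced by $r_0$ at the outset), treats $Q_{1n}$ exactly as in Theorem \ref{th_cH0}, shows $Q_{3n}=\frac{\sigma_K}{n}\mathbb{E}[\Delta^2(X,Y,Z) r_0(Y,Z)]+o_P(\sqrt{K}/n)$ via the calibration $d_n^2=\sigma_K/n$, and handles the cross term through the centered influence-function average of Lemma \ref{lemma_inf} applied with $\phi=\pi_K^*-1$. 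Your direct argument for $T_2$ --- the Bahadur representation of Lemma \ref{prop:vtilde} contracted against $w=\mathbb{E}[r_0\Delta v_K]$, giving $d_n$ times a centered sum with $O(1)$ variance per term --- is precisely what that lemma delivers (the paper's $\phi_K$ coincides with your $\psi_s$ up to the factor $d_n$), so the two proofs agree in substance, including the key observation that a naive Cauchy--Schwarz bound on the cross term would diverge at rate $K^{1/4}$.
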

Since $K$ is only required to grow arbitrarily slowly according to Assumption \ref{ass:Kinf}, Theorem \ref{th_cH1n} shows that, with a slight loss of local power, the proposed test can detect all local deviations satisfying $\mathbb{E}[\Delta^2(X,Y,Z)r_0(Y,Z)]\neq 0$.

\subsection{Asymptotic Global Power}
Under a fixed alternative, the density ratio $\pi_0$ is no longer constant, and the approximation error is no longer zero. The error must be restricted so that it will not bias the proposed test. So Assumptions \ref{ass:H0} (i) and \ref{ass:proj_error}  are strengthened to Assumption \ref {ass:K0omega} with the following theorems  proved in Appendix \ref{app:cglobal_fixK} and  \ref{app:cglobal}.
  \begin{assumption}\label{ass:K0omega} Suppose that	$K_0^{- \omega_r}=o(1/\sqrt{n})$ and $KK_0^{-2\omega_0}=o\left(1/n\right)$.
	\end{assumption}
   
		\begin{theorem}
			\label{thm:cglobal_fixK} Suppose that Assumptions  \ref{ass:ciid}-\ref{ass:proj_error}, \ref{ass:K0omega} hold and $K$ is a fixed integer. Under the alternative hypothesis $H_{1}$, we show that
			
			\begin{enumerate}
				\item if $\mathbb{P}\left(\pi _{K}^{\ast }(X,Y,Z)=1\right) =1,$ then%
				\begin{equation*}
					2n\times \widehat{I}_{K,K_0}\xrightarrow{d}\int \left\{\mathbb{G}_1(x,y,z)\right\}
					^{2}dF_{X,Y,Z}(x,y,z),
				\end{equation*}
		 where $\mathbb{G}_1(\cdot)$ is a Gaussian process defined in Appendix \ref{app:cglobal_fixK}.
				
				\item if $\mathbb{P}\left( \pi _{K}^{\ast }(X,Y,Z)=1\right) <1,$ then 
				\begin{equation*}
					\frac{\sqrt{n}\left\{ 2\widehat{I}_{K,K_0}-\mathbb{E}\left[ \left\{\pi_K^*(X,Y,Z)-1\right\}^2r_0(Y,Z) \right] \right\} }{\sqrt{Var\left( \left\{\pi_K^*(X,Y,Z)-1\right\}^2 r_0(Y,Z)+\phi _{K}\left( X,Y,Z\right) \right) }}\xrightarrow{d}\mathcal N(0,1)
				\end{equation*}%
		where $\phi _{K}\left( X,Y,Z\right) $ is defined in Appendix \ref{app:cglobal_fixK}.
			\end{enumerate}
		\end{theorem}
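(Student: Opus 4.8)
The plan is to begin from an exact algebraic decomposition of $\widehat{I}_{K,K_0}$ that is valid under any fixed alternative, and then separate the two regimes according to whether the sieve projection $\pi_K^*$ equals the constant $1$. Writing $\widehat{\pi}_{K,K_0}-1=(\widehat{\pi}_{K,K_0}-\pi_K^*)+(\pi_K^*-1)$ and $\widehat{r}_{K_0}=r_0+(\widehat{r}_{K_0}-r_0)$ inside \eqref{def:Ihat} and expanding the square, I obtain
\begin{equation*}
2\widehat{I}_{K,K_0}=\frac{1}{n}\sum_{t=1}^{n}\{\pi_K^*(O_t)-1\}^2 r_0(Y_t,Z_t)+R_n,
\end{equation*}
where $R_n$ gathers the cross terms and the quadratic remainders in $\widehat{\pi}_{K,K_0}-\pi_K^*$ and $\widehat{r}_{K_0}-r_0$. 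Parts 1 and 2 correspond exactly to whether the leading term is identically zero.

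For Part 1, the hypothesis $\mathbb{P}(\pi_K^*=1)=1$ annihilates the leading term and every cross term carrying a factor $\pi_K^*-1$, so that $2n\widehat{I}_{K,K_0}$ reduces precisely to the quadratic form $n(\widehat{\boldsymbol{\beta}}_{K,K_0}-\boldsymbol{\beta}^*_K)^\top\widehat{H}_{K,K_0}(\widehat{\boldsymbol{\beta}}_{K,K_0}-\boldsymbol{\beta}^*_K)$ of \eqref{eq:Ihat_quad}. This is formally the same object analyzed in Theorem \ref{th_cH0_fixk}; the only difference is that the influence function $\widetilde{v}_K$ supplied by Lemma \ref{prop:vtilde} retains its general form with $\pi_0\neq 1$ (and still $\mathbb{E}[\widetilde{v}_K]=0$, since $\mathbb{E}[v_K r_0\pi_0]=\mathbb{E}[v_K r_0\pi_K^*]=h_K$). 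I would then substitute the Bahadur representation of Lemma \ref{prop:vtilde}, apply a $\beta$-mixing central limit theorem to $n^{-1/2}\sum_t H_K^{-1}\widetilde{v}_K(O_t)$, and use the consistency $\widehat{H}_{K,K_0}\xrightarrow{p}H_K$, to conclude that $2n\widehat{I}_{K,K_0}$ converges to the quadratic form in the limiting Gaussian. Writing that quadratic form as $\int\{\mathbb{G}_1(x,y,z)\}^2 dF_{X,Y,Z}$, where $\mathbb{G}_1(x,y,z)$ is the Gaussian weak limit of $\sqrt{n\,r_0(y,z)}\,(\widehat{\boldsymbol{\beta}}_{K,K_0}-\boldsymbol{\beta}^*_K)^\top v_K(x,y,z)$, gives the stated representation.

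For Part 2, the leading term converges in probability to the strictly positive constant $\mathbb{E}[\{\pi_K^*-1\}^2 r_0]$, so the goal is to extract the $\sqrt{n}$-scale fluctuation around it. Because $K$ is fixed, Theorem \ref{th_crate} gives $\widehat{\pi}_{K,K_0}-\pi_K^*=O_P(n^{-1/2})$ and $\widehat{r}_{K_0}-r_0=O_P(n^{-1/2})$, so the purely quadratic pieces of $R_n$ are $O_P(1/n)=o_P(1/\sqrt{n})$ and are discarded. The surviving first-order pieces are the two cross terms $\tfrac{2}{n}\sum_t\{\pi_K^*-1\}r_0(\widehat{\pi}_{K,K_0}-\pi_K^*)$ and $\tfrac{1}{n}\sum_t\{\pi_K^*-1\}^2(\widehat{r}_{K_0}-r_0)$. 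Substituting the Bahadur representation of $\widehat{\boldsymbol{\beta}}_{K,K_0}-\boldsymbol{\beta}^*_K$ (Lemma \ref{prop:vtilde}) and the linear expansion of $\widehat{\boldsymbol{\gamma}}_{K_0}-\boldsymbol{\gamma}^*_{K_0}$ underlying Theorem \ref{th_crate}(i) converts both cross terms into sample averages, which combine with the centered leading term into a single influence sequence $\{\pi_K^*-1\}^2 r_0+\phi_K$. The strengthened rates in Assumption \ref{ass:K0omega} force the approximation biases $K_0^{-\omega_r}$ and $\sqrt{K}K_0^{-\omega_0}$ to be $o(1/\sqrt{n})$, so no bias contaminates the centering, and a $\beta$-mixing central limit theorem delivers the claimed normality after standardization.

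The main obstacle is the bookkeeping in Part 2: since $\widehat{\pi}_{K,K_0}$ is built from the already-estimated $\widehat{r}_{K_0}$, the representation of $\widehat{\boldsymbol{\beta}}_{K,K_0}-\boldsymbol{\beta}^*_K$ in Lemma \ref{prop:vtilde} already absorbs part of the $\widehat{r}_{K_0}$-error, and care is needed to assemble $\phi_K$ from the two nested plug-in contributions without double counting, while verifying that every neglected term is genuinely $o_P(1/\sqrt{n})$ for fixed $K$. A secondary point is that, for $\beta$-mixing data, the relevant limiting variance is the long-run variance of the influence sequence, so the symbol $Var(\cdot)$ in the statement should be read as that long-run variance rather than the marginal variance.
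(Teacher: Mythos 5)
Your proposal follows essentially the same route as the paper's proof in Appendix \ref{app:cglobal_fixK}: the same decomposition of $2\widehat{I}_{K,K_0}$, the same reduction of Part 1 to the quadratic form of Theorem \ref{th_cH0_fixk} via Lemma \ref{prop:vtilde} and a mixing CLT, and for Part 2 the same linearization of the two surviving cross terms (the paper's $\widehat I_{2K}$ and $\widehat I_{4K}$) into a single influence function, which the paper assembles as $\phi_K=2\phi_{1,K}+\phi_{2,K}$. One correction to your Part 2 bookkeeping: the claim that fixed $K$ makes $\widehat{\pi}_{K,K_0}-\pi_K^*$ and $\widehat{r}_{K_0}-r_0$ of order $O_P(n^{-1/2})$, hence the quadratic remainders $O_P(1/n)$, is false, because $K_0\to\infty$ and the sieve bias $K_0^{-\omega_r}$ persists; the quadratic piece is in fact $O_P(K_0/n+K_0^{-2\omega_r}+K/n)$, and its $o_P(n^{-1/2})$ negligibility comes from Assumptions \ref{ass:H0}(ii) and \ref{ass:K0omega} (as the paper's bound on $\widehat I_{1K}$ makes explicit), not from any fixed-$K$ parametric rate. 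Your closing remark that $Var(\cdot)$ should be read as the long-run variance of the influence sequence under $\beta$-mixing is a fair caveat and consistent with how the paper invokes the CLT for dependent data.
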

		For a fixed $K$, the proposed test has a parametric convergence rate of $n$, but it cannot detect certain deviations that satisfy $\pi_K^*(X,Y,Z)=1$ a.s.. If $K$ is allowed to grow such that $\mathbb{E}[\{ \pi^*_{K}(X,Y,Z)-1\}^2r_0(Y,Z)]$ converges to a nonzero constant $\mathbb{E}[\{ \pi_{0}(X,Y,Z)-1\}^2r_0(Y,Z)]$, then our proposed test has power against all deviations from the conditional independence. The formal result is stated below.	
		\begin{theorem}\label{thm:cglobal}
Suppose that Assumptions \ref{ass:ciid}-\ref{ass:proj_error}, \ref{ass:K0omega} hold, and $K\to +\infty$. Under the alternative hypothesis $H_1$, we have
			\begin{align*}
				\frac{\sqrt{n}\left\{2\widehat{I}_{K,K_0}-\mathbb{E}\left[\{ \pi^*_{K}(X,Y,Z)-1\}^2r_0(Y,Z)\right]\right\}}{\sqrt{Var\left( \left\{ \pi_{0}(X,Y,Z)-1\right\}^2r_0(Y,Z)+\phi_0(X,Y,Z)\right) }}\xrightarrow{d}\mathcal N(0,1)	\end{align*}
			where $\phi _{0}\left( X,Y,Z\right) $ is defined in Appendix \ref{app:cglobal}.
		\end{theorem}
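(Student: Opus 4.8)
The plan is to treat $2\widehat I_{K,K_0}$ as a plug-in functional and build a von Mises expansion of it around the population approximants $(\pi_K^*,r_0)$, then recognize the linear part as an average of a single-observation influence function to which a central limit theorem for $\beta$-mixing arrays applies. Writing $O_t=(X_t^\top,Y_t^\top,Z_t^\top)^\top$ and $g(\pi,r;o):=\{\pi(o)-1\}^2 r(y,z)$, definition (\ref{def:Ihat}) gives $2\widehat I_{K,K_0}=n^{-1}\sum_{t=1}^n g(\widehat{\pi}_{K,K_0},\widehat{r}_{K_0};O_t)$, whose population target at $(\pi_K^*,r_0)$ is exactly the centering $\mathbb{E}[\{\pi_K^*(O)-1\}^2 r_0(Y,Z)]$ in the statement. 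I would decompose
\begin{align*}
2\widehat I_{K,K_0}-\mathbb{E}\big[\{\pi_K^*-1\}^2 r_0\big]
&=\frac1n\sum_{t=1}^n\Big(g(\pi_K^*,r_0;O_t)-\mathbb{E}\,g(\pi_K^*,r_0;O)\Big)\\
&\quad+\frac1n\sum_{t=1}^n\Big(g(\widehat{\pi}_{K,K_0},\widehat{r}_{K_0};O_t)-g(\pi_K^*,r_0;O_t)\Big),
\end{align*}
and split the second (estimation-correction) sum into a $\pi$-part $(\widehat{\pi}_{K,K_0}-\pi_K^*)(\widehat{\pi}_{K,K_0}+\pi_K^*-2)r_0$ and an $r$-part $\{\widehat{\pi}_{K,K_0}-1\}^2(\widehat{r}_{K_0}-r_0)$.

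For the $\pi$-part I would replace the smooth factor $\widehat{\pi}_{K,K_0}+\pi_K^*-2$ by its limit $2(\pi_K^*-1)$ (the replacement error being higher order by Theorem \ref{th_crate}), substitute the Bahadur representation of Lemma \ref{prop:vtilde}, $\widehat{\boldsymbol{\beta}}_{K,K_0}-\boldsymbol{\beta}^*_K=n^{-1}\sum_s H_K^{-1}\widetilde v_K(O_s)+o_P(K^{1/4}/\sqrt n)$, and collect the leading term as $2\,\mu_K^\top H_K^{-1}\,n^{-1}\sum_s\widetilde v_K(O_s)$ with $\mu_K:=\mathbb{E}[(\pi_K^*-1)r_0\,v_K]$. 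For the $r$-part I would use the analogous influence-function representation for $\widehat{\boldsymbol{\gamma}}_{K_0}-\boldsymbol{\gamma}^*_{K_0}$ obtained inside the proof of Theorem \ref{th_crate} (the H\'ajek projection of the $U$-statistic defining $\widehat{b}_{K_0}$), yielding a leading term $\nu_{K_0}^\top\Sigma_{K_0}^{-1}\,n^{-1}\sum_s\widetilde u_{K_0}(O_s)$ with $\nu_{K_0}:=\mathbb{E}[\{\pi_K^*-1\}^2 u_{K_0}]$; the sieve bias $r_{K_0}^*-r_0=O(K_0^{-\omega_r})$ contributes $o(n^{-1/2})$ after scaling because of the strengthened condition $K_0^{-\omega_r}=o(1/\sqrt n)$ in Assumption \ref{ass:K0omega}. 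Collecting the three pieces yields
\begin{equation*}
\sqrt n\Big(2\widehat I_{K,K_0}-\mathbb{E}[\{\pi_K^*-1\}^2 r_0]\Big)=\frac1{\sqrt n}\sum_{t=1}^n\psi_K(O_t)+o_P(1),
\end{equation*}
where $\psi_K=\{\pi_K^*-1\}^2 r_0-\mathbb{E}[\{\pi_K^*-1\}^2 r_0]+2\mu_K^\top H_K^{-1}\widetilde v_K+\nu_{K_0}^\top\Sigma_{K_0}^{-1}\widetilde u_{K_0}$, and I would identify $\phi_0$ as the limit of the last two (correction) terms so that $\psi_K\to\{\pi_0-1\}^2 r_0-\mathbb{E}[\{\pi_0-1\}^2 r_0]+\phi_0$.

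To conclude I would apply a central limit theorem for triangular arrays of geometrically $\beta$-mixing sequences to $n^{-1/2}\sum_t\psi_K(O_t)$: Assumption \ref{ass:cbounded} and the moment bounds of Assumption \ref{ass:eigen}(iii) furnish the Lyapunov/Lindeberg control uniformly in the growing $K,K_0$, while the geometric mixing of Assumption \ref{ass:ciid} makes the autocovariances summable. The final Slutsky step requires both that the centering $\mathbb{E}[\{\pi_K^*-1\}^2 r_0]\to\mathbb{E}[\{\pi_0-1\}^2 r_0]$ and that the asymptotic (long-run) variance of $\psi_K$ converges to the stated $\mathrm{Var}(\{\pi_0-1\}^2 r_0+\phi_0)$; both follow from $\pi_K^*\to\pi_0$ in $L^2(r_0\,dF_{X,Y,Z})$ under Assumption \ref{ass:cpi0-pi*}(ii) combined with dominated convergence.

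The main obstacle I anticipate is showing that every remainder in the von Mises expansion is $o_P(n^{-1/2})$ \emph{uniformly as the sieve dimensions grow}. The delicate terms are the quadratic self-term $n^{-1}\sum_t(\widehat{\pi}_{K,K_0}-\pi_K^*)^2 r_0$, of order $\|\widehat{\boldsymbol{\beta}}_{K,K_0}-\boldsymbol{\beta}^*_K\|^2=O_P(K/n+K_0/n+K_0^{-2\omega_r})$, which needs $K=o(\sqrt n)$ after scaling (guaranteed by $\xi_K\sqrt{K/n}\to0$ in Assumption \ref{ass:cK}(ii) since $\xi_K\gtrsim\sqrt K$), and the mixed $\pi$--$r$ cross term, controlled by the projection-error bound of Assumption \ref{ass:proj_error} together with the strengthened rate $KK_0^{-2\omega_0}=o(1/n)$ of Assumption \ref{ass:K0omega}. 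Equally delicate is the triangular-array CLT itself, since the summand $\psi_K$ changes with $n$ through $K,K_0$; I would handle it by a blocking (Bernstein-type) argument that exploits the geometric $\beta$-mixing and the uniform moment bounds to verify the Lindeberg condition and to show that the long-run variance coincides with the stated $\mathrm{Var}(\{\pi_0-1\}^2 r_0+\phi_0)$.
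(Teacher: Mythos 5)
Your proposal is correct in substance, and its analytic core coincides with the paper's: your von Mises expansion is exactly the five-term decomposition the paper uses in Appendix \ref{app:cglobal_fixK} (quadratic self-term, $\pi$-cross term, oracle term, $r$-correction term, mixed term); your $\pi$-part leading term $2\mu_K^\top H_K^{-1}n^{-1}\sum_t\widetilde v_K(O_t)$ is precisely the paper's $\tfrac{2}{n}\sum_t\phi_{1,K}(O_t)$ from Lemma \ref{lemma_inf} applied with $\phi=\pi_K^*-1$ (since $\text{proj}^{wls}_{v_K}\left(\pi_K^*-1\right)=\pi_K^*-1$, one has $\phi_{1,K}(O_t)=\mu_K^\top H_K^{-1}\widetilde v_K(O_t)$); and your $r$-part term $\nu_{K_0}^\top\Sigma_{K_0}^{-1}n^{-1}\sum_t\widetilde u_{K_0}(O_t)$ is the paper's $\tfrac{1}{n}\sum_t\phi_{2,K}(O_t)$ obtained from the expansion \eqref{gamma} of $\widehat{\boldsymbol{\gamma}}_{K_0}-\boldsymbol{\gamma}^*_{K_0}$, so your $\psi_K$ matches $\{\pi_K^*-1\}^2r_0-\mathbb{E}[\cdot]+\phi_K$ with $\phi_K=2\phi_{1,K}+\phi_{2,K}$ as in \eqref{phiK'}. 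Where you genuinely differ is the organization of the double limit: the paper proves asymptotic normality at \emph{fixed} $K$ first (Theorem \ref{thm:cglobal_fixK}, via Corollary 5.1 of \cite{hall2014martingale}), and its proof of Theorem \ref{thm:cglobal} is then only the limiting statement that $\pi_K^*\to\pi_0$ in $L^2$ (by \eqref{pi*-pi0_L2}) and $\phi_K\to\phi_0$; you instead run a single triangular-array CLT (blocking under geometric $\beta$-mixing) directly on $n^{-1/2}\sum_t\psi_K(O_t)$ with $(K,K_0)$ growing. Your route is longer because it redoes the expansion, but it is more self-contained and more honest about the fact that the summand changes with $n$ through $(K,K_0)$ --- a fixed-$K$ theorem cannot literally be invoked along a sequence $K=K_n\to\infty$ without exactly the uniformity/triangular-array argument you supply, so your treatment closes a step the paper leaves implicit. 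One caution common to both arguments, which you correctly flag: for weakly dependent data the normalizing quantity must be read as the long-run variance (marginal variance plus twice the summed autocovariances of $\{\pi_0-1\}^2r_0+\phi_0$), and your verification that it converges should use the geometric mixing bound (Lemma \ref{Davydov}) together with the uniform boundedness from Assumption \ref{ass:cbounded} to justify interchanging the $K\to\infty$ limit with the autocovariance sum.
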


		\section{Monte Carlo Simulations}\label{sec:simulation}
In this section, we illustrate the finite sample performance of the proposed test through a small-scale simulation study.
		\subsection{Selection of Tuning Parameters}
		
		\label{sec:select_uv} The asymptotic theory in previous sections does not determine a unique value of $(K,K_0)$ or the optimal choice of the base $\{v_{K}(X,Y,Z),$ $u_{K_{0}}(Y,Z)\}$. This presents a dilemma for applied researchers with only one finite sample who need guidance on selecting the basis functions. We use the data-driven approach proposed in \cite{JHVS2001} and consider a set of basis candidates. Let $u_{K_{0},k}(Y,Z)\in\{1,Y,Z,YZ\}$ and $v_{K,k}(X,Y,Z)\in\{X^{i}Y^{j}Z^{l} |i=0,...,4,  j=0,1,2 , l=0,1,2\}$.\footnote{{We can also use other basis
				functions such as $B$-splines. The choice of basis functions does not affect the asymptotic properties of the test, but may affect its finite
				sample performance. We choose the polynomial basis in this study because of its simplicity. }} We choose the basis functions by maximizing the following criteria: 
		\begin{equation*}
			\max_{(v_{K},u_{K_{0}})\in {\mathcal{A}}} \frac{2n\left\{ \widehat{I}_{K,K_0}-\widehat{B}_{K,K_0}\right\}}{\widehat{\sigma}%
			_{K,K_0} } ,
		\end{equation*}%
		where ${\mathcal{A}}:=\{(v_{K},u_{K_{0}}):\text{under}\ H_{0}\ 
		\text{the rejection rate of the test statistic constructed from}$ $%
		(v_{K},u_{K_{0}}) \\ \text{is}\ \alpha \}$, and $\alpha =0.05$ is the nominal size. This data-driven approach maximizes the power of test statistics subject to size control. The detailed algorithm is given below:
		\begin{enumerate}
			\item Generate a bootstrap sample $\{Z_{t}^{\ast }\}_{t=1}^{n}$ from the smoothed kernel density $f(z)=n^{-1}\sum_{t=1}^{n}$ $\Psi _{h}(Z_{t}-z)$,
			where $\Psi _{h}(z)=\Psi (z/h)/h^{d_{Z}}$ is Gaussian kernel with $\Psi
			(\cdot )$ being a product kernel of the universal Gaussian and $%
			h=(4/3)^{-1/5}n^{-1/5}$, following \cite{zxt2018}.
			
			\item For each $t\in \{1,...,n\}$, generate $X_{t}^{\ast }$ and $Y_{t}^{\ast
			}$ independently from the smoothed conditional densities: 
			\begin{align*}
				f(x|Z_{t}^{\ast }) =\frac{\sum_{j=1}^{n}\Psi _{h}(X_{j}-x)\Psi
					_{h}(Z_{j}-Z_{t}^{\ast })}{\sum_{j=1}^{n}\Psi _{h}(Z_{j}-Z_{t}^{\ast })} \ 
				\text{and} \ f(y|Z_{t}^{\ast }) =\frac{\sum_{j=1}^{n}\Psi _{h}(Y_{j}-y)\Psi
					_{h}(Z_{j}-Z_{t}^{\ast })}{\sum_{j=1}^{n}\Psi _{h}(Z_{j}-Z_{t}^{\ast })}.
			\end{align*}%
Then, the bootstrap sample $\{X_{t}^{\ast },Y_{t}^{\ast },Z_{t}^{\ast }\}_{t=1}^{n}$ satisfies $X_{t}^{\ast }\perp Y_{t}^{\ast }|Z_{t}^{\ast }$ for $t \in
			\{1,...,n\}$.
			
			\item Repeat Steps 1 and 2 $B$ times (e.g., $B=100$) to obtain $B$ bootstrap samples. For each bootstrap sample, compute the test statistic $\{\widehat{I}%
			_{K,K_0,b}^{\ast }\}_{b=1}^{B}$ for each basis candidate. The rejection frequency of the test statistic is calculated by 
			\begin{equation*}
				\frac{1}{B}\sum_{b=1}^{B}I\left\{ \frac{2n\left( \widehat{I}_{K,K_0,b}^{\ast }-\widehat{B}_{K,K_0,b}^{\ast
				}\right)}{\widehat{\sigma}_{K,K_0,b}^{\ast
				}} >c_{\alpha }\right\} ,
			\end{equation*}%
where $c_{\alpha }$ is the $(1-\alpha )$-quantile of the standard normal distribution, and $\widehat{B}_{K,K_0,b}^{\ast }$ and $\widehat{\sigma}_{K,K_0,b}^{\ast }$ are computed from the $b^{th}$ bootstrap sample. The set $\mathcal{A}$ collects the basis candidates with approximately the corresponding rejection frequency $\alpha $.
			
			\item The optimal basis functions are chosen from $\mathcal{A}$ to maximize $%
			2n\widehat{\sigma}_{K,K_0}^{-1}\left\{ \widehat{I}_{K,K_0}-\widehat{B}_{K,K_0}\right\} $.
		\end{enumerate}
		
		\subsection{Simulation Results}\label{sec:simu_I}
		
We mimic the simulation designs in \cite{su2008nonparametric} and consider the following designs where $\{\varepsilon_{1,t},\varepsilon_{2,t}\}$ are i.i.d. $\mathcal{N}(0,I_2)$.

		\begin{itemize}
			\item DGP1s: $Y_t=0.5Y_{t-1}+\varepsilon_{1,t}$, $X_t=0.5X_{t-1}+\varepsilon_{2,t}$.
			\item DGP2s: $Y_t=\sqrt{h_t}\varepsilon_{1,t}$,  $X_t=0.5X_{t-1}+\varepsilon_{2,t}$, $h_t=0.01+0.5Y_{t-1}^2$.
			
			\item DGP3s: 
$Y_t=\sqrt{h_{1, t}}\varepsilon_{1, t}$, $X_t=\sqrt{h_{2, t}}\varepsilon_{2, t}$, where $h_{1, t}=0.01+0.9h_{1, t-1}+0.05Y_{t-1}^2$, 
                     $h_{2,t}=0.01+0.9h_{2, t-1}+0.05X_{t-1}^2$.
               
			\item DGP1p: $Y_t=0.5Y_{t-1}+0.5X_{t-1}+\varepsilon_{1,t}$,  $X_t=0.5X_{t-1}+\varepsilon_{2,t}$.
			
			\item DGP2p: $Y_t=0.5Y_{t-1}+0.5X_{t-1}^2+\varepsilon_{1,t}$,  $X_t=0.5X_{t-1}+\varepsilon_{2,t}$.
			
			\item DGP3p: $Y_t=0.5Y_{t-1}X_{t-1}+\varepsilon_{1,t}$,  $X_t=0.5X_{t-1}+\varepsilon_{2,t}$.
			
			\item DGP4p: $Y_t=0.5Y_{t-1}+0.5X_{t-1}\varepsilon_{1,t}$,  $X_t=0.5X_{t-1}+\varepsilon_{2,t}$.

            \item DGP5p: $Y_t=\sqrt{h_{1, t}}\varepsilon_{1, t}$, $X_t=\sqrt{h_{2, t}}\varepsilon_{2, t}$, where $h_{1, t}=0.01+0.1h_{1, t-1}+0.4Y_{t-1}^2+0.1X_{t-1}^2$, 
                     $h_{2,t}=0.01+0.9h_{2, t-1}+0.05X_{t-1}^2$.
            \item DGP6p: $Y_t=0.5Y_{t-1}+4\varphi\left(Y_{t-1}/0.1\right)X_{t-1}+\varepsilon_{1,t},  \varphi(x)=\frac{1}{\sqrt{2\pi}}e^{-\frac{x^2}{2}}$, $X_t=0.5X_{t-1}+\varepsilon_{2,t}$.
		\end{itemize}
	
Here, we test whether $X_{t-1}$ is independent of $Y_t$ conditional on $Y_{t-1}$.  DGP1s-3s are designed to examine the sizes of the tests, while DGP1p-6p are intended to check the powers. We perform Monte Carlo trials 1,000 times, and the optimal basis is chosen from the candidates using the data-driven algorithm mentioned in section \ref{sec:select_uv}. For comparison with our density ratio (DR) test, we compute the mutual information (MI) test statistic of \cite{ai2021testing}, the Hellinger (HEL) test statistic of \cite{su2008nonparametric}, and the characteristic function (CF) based test statistic of \cite{wang2018characteristic}. Following \cite{su2008nonparametric}, the critical values for the above alternatives are computed using the bootstrap procedure with $100$ repetitions. We also compute the standard linear Granger causality test (LIN) proposed by \cite{Granger} with 1000 repetitions, where we examined whether $X_{t-1}$  should enter the regression of $Y_t$ on $Y_{t-1}$ linearly. 

Table \ref{table_model_DF_CIT_1} reports the rejection rates of the five tests at the 5\% level when the sample sizes are 100, 150 and 200 respectively.
		 It reveals that under DGP1s-3s, the sizes of all tests are reasonable, and the power of the tests under DGP1p-6p
			generally increases with the sample size $n$ , and our test outperforms 
			other tests in most cases, especially in DGP6p. 		
		\begin{table}[hbtp]
			\caption{Comparison of tests  at the 5\% level}
			\label{table_model_DF_CIT_1}
			\begin{center}
				\vskip 0.2in {\fontsize{10pt}{16pt} \selectfont 
					\begin{tabular}{cc|cccccccccc}
						\hline\hline
						$n$ &  & DGP1s  & DGP2s&DGP3s&DGP1p&DGP2p&DGP3p&DGP4p&DGP5p&DGP6p  \\ 
						\hline\hline
						& Lin&0.064&0.045&0.059&\textbf{0.999}&0.334&0.209&0.287&0.119&0.291 \\ 
    						 &MI&0.071&0.057&0.057&0.301&\textbf{0.973}&0.251&0.677&0.082&0.064\\ 
                             $100$
                        &HEL&0.070&0.072	&0.164&	0.576	&0.705&	0.386	&\textbf{0.978}&0.369&0.133\\
                        &CF&0.028&	0.037&	0.037	&0.434	&0.281	&0.325	&0.675&0.135	&0.104\\
						&DR &0.060&0.067&0.068&0.947&0.955&\textbf{0.488}&0.837&\textbf{0.447}&\textbf{0.663}\\

						\hline
					& Lin &0.046&0.045&0.053&\textbf{1.000}&0.351&0.246&0.263
                &0.142&0.361\\ 
						 &MI&0.095&0.052&0.056&0.502&\textbf{0.989}&0.361&0.838&0.137&0.065\\ 
                         $150$
					&HEL&0.080	&0.065	&0.144	&0.755	&0.865&	0.498	&\textbf{0.998}&0.448&0.152\\	&CF&0.033&	0.041&	0.049	&0.626	&0.387	&0.545&	0.830&0.233	&0.198\\
                    &DR&0.047&0.062&0.065&0.958 &0.967&\textbf{0.589}&0.851&\textbf{0.527}&\textbf{0.773}\\ 
\hline
                        
						& Lin &0.045&0.054&0.057&\textbf{1.000}&0.347&0.217&0.267&0.117&0.497\\ 
						 &MI&0.111&0.075&0.058&0.624&\textbf{0.998}&0.519&0.924&0.224&0.054\\ 
                         $200$
                        &HEL&0.059	&0.067	&0.137	&0.879	&0.940	&0.572&	\textbf{1.000}&0.515&0.169\\
						&CF&0.036&	0.042	&0.049	&0.777	&0.479	&0.704	&0.921&0.306	&0.316\\
                        
                        &DR&0.070&0.050 &0.065&0.972&0.964&\textbf{0.739}&0.913&\textbf{0.616}&\textbf{0.828}\\ 
						\hline\hline
					\end{tabular}
				}
			\end{center}
		\end{table}

		\section{Application}\label{sec:application}
		The relationship between stock prices and trading volume is widely investigated, and previous studies have given several explanations. This section will  check the interaction between returns and volume for the Dow Jones price index. We collect the daily adjusted closing prices and trading volume from January 2nd, 2014, to December 30th, 2022. The data that support the findings of this study are openly available in Yahoo Finance by using the command ``yf.download" in Python. Let $P_t$ and $V_t$ be 100 times the natural logarithm of the  stock prices  and daily trading volume for day $t$.  Granger causality
tests will be conducted on the first differenced data $\Delta P_t$ and $\Delta V_t$.
    The augmented Dickey-Fuller Test indicates that both  $\Delta P_t$ and $\Delta V_t$ are stationary. To test whether $\Delta V_t$ Granger causes $\Delta P_t$ linearly  at lag one, we check  whether 
    \begin{equation*}
    H_0: b_1=0
    \end{equation*}
    in $\Delta P_t=a_0 +a_1\Delta P_{t-1}+b_1\Delta V_{t-1}$. To test whether $\Delta V_t$ Granger causes $\Delta P_t$ nonlinearly, we check whether
		\begin{equation*}
			H_0: \Delta P_t \perp \Delta V_{t-1} | \Delta P_{t-1}.
	\end{equation*}
Both of the null hypotheses above are represented as ``$\Delta V_{t-1} \nRightarrow \Delta P_t$". Similarly, we can also define  "$\Delta P_{t-1} \nRightarrow \Delta V_t$".
		We choose the optimal bases from candidates through the data-driven algorithm mentioned in section \ref{sec:select_uv}.  The critical value is determined by a bootstrap procedure with 200 repetitions. The p-values for LIN (\cite{Granger}),  MI (\cite{ai2021testing}), HEL(\cite{su2008nonparametric}) and DR (proposed)  tests are revealed in Table \ref{table_pvalue}. DR and HEL indicate a bidirectional Granger causal link between returns and  volume for the Dow Jones price index at the 5\% level. This is consistent with \cite{Hiemstra1994} , who found evidence of bidirectional  nonlinear Granger causality  between stock  returns
and  trading volume based on daily Dow Jones price index data from 1915 to 1990. However, LIN and MI fail to find any significant Granger causal relationship  between returns and  volume  at the 5\% level. 
\begin{table}[htbp]
			\caption{p values for testing the relationship between returns and volume}
			\label{table_pvalue}
			\begin{center}
				\vskip 0.2in {\fontsize{10pt}{16pt} \selectfont 
					\begin{tabular}{c|cc}
						\hline\hline
						$H_0$ &$\Delta V_{t-1} \nRightarrow \Delta P_t$  & $\Delta P_{t-1} \nRightarrow \Delta V_t$  \\ 
						\hline\hline
						 Lin&0.120&0.051 \\ 
                        MI&0.100&0.100\\
                        HEL&0.015&0.025\\
                        DR&0.015&0.035\\
				
						\hline\hline
					\end{tabular}
				}
			\end{center}
		\end{table}
		
		\section{Conclusions}
		This paper proposes a conditional independence test for weakly dependent data by estimating the conditional density ratio under the weighted least-squares distance. The test is asymptotically pivotal with a standard
limiting distribution and is consistent against all departures from conditional independence,  suffering only a slight loss of local power. Our test is easy to implement, and a small-scale simulation study illustrates the usefulness of our approach. It reveals that the test statistic is distribution-free. Thus, our test can be applied to various
dependence structures.


\section*{Disclosure Statement}
        The authors report that there are no financial or non-financial competing interests to declare.
        
	\bibliographystyle{chicago}
		\bibliography{references}		
		\clearpage
		\appendix
		\section*{Appendix}

		\section{Proof of \eqref{eq:invariance}}\label{app:invariance}
Without loss of generality, we assume $d_X=d_Y=d_Z=1$. Note that $x=\phi_X^{-1}(\widetilde{x}), y=\phi_Y^{-1}(\widetilde{y}), z=\phi_Z^{-1}(\widetilde{z})$ are well-defined in light of the monotonicity of the transformation,  then we have
		\begin{equation*}
			\begin{aligned}
				f_{\widetilde{X}|\widetilde{Z}}(\widetilde{x}|\widetilde{z})=&\frac{f_{\widetilde{X},\widetilde{Z}}(\widetilde{x},\widetilde{z})}{f_{\widetilde{Z}}(\widetilde{z})}
				=\frac{f_{\widetilde{X},\widetilde{Z}}(\phi_X(x),\phi_Z(z))}{f_{\widetilde{Z}}(\phi_Z(z))}
				=\frac{f_{X,Z}(x,z)/|\phi_X'(x)\phi_Z'(z)|}{f_Z(z)/|\phi_Z'(z)|}
				=\frac{f_{X|Z}(x|z)}{|\phi_X'(x)|},
			\end{aligned}
		\end{equation*}
		where $\phi_X'(\cdot)$ is the derivative of $\phi_X(\cdot)$.
		Similarly, we have
		\begin{align*}
			f_{\widetilde{Y}|\widetilde{Z}}(\widetilde{y}|\widetilde{z})=\frac{f_{Y|Z}(y|z)}{|\phi_Y'(y)|},\ 
			f_{\widetilde{X},\widetilde{Y}|\widetilde{Z}}(\widetilde{x},\widetilde{y}|\widetilde{z})=\frac{f_{X,Y|Z}(x,y|z)}{|\phi_X'(x)\phi_Y'(y)|}, f_{\widetilde{X},\widetilde{Y},\widetilde{Z}}(\widetilde{x},\widetilde{y},\widetilde{z})=\frac{f_{X,Y,Z}(x,y,z)}{|\phi_X'(x)\phi_Y'(y)\phi_Z'(z)|}.
		\end{align*}
		Then we obtain the desired result:
		\begin{equation*}
			\begin{aligned}
				&\int\left\{\frac{f_{\widetilde{X}|\widetilde{Z}}(\widetilde{x}|\widetilde{z})f_{\widetilde{Y}|\widetilde{Z}}(\widetilde{y}|\widetilde{z})}{f_{\widetilde{X},\widetilde{Y}|\widetilde{Z}}(\widetilde{x},\widetilde{y}|\widetilde{z})}-1\right\}^2\frac{f_{\widetilde{Y}}(y)f_{\widetilde{Z}}(z)}{f_{\widetilde{Y},\widetilde{Z}}(y,z)}f_{\widetilde{X},\widetilde{Y},\widetilde{Z}}(\widetilde{x},\widetilde{y},\widetilde{z})d\widetilde{x}d\widetilde{y}d\widetilde{z}\\
				=&\int\left\{\frac{f_{X|Z}(x|z)f_{Y|Z}(y|z)}{f_{X,Y|Z}(x,y|z)}-1\right\}^2\frac{f_Y(y)f_Z(z)}{f_{Y,Z}(y,z)}\frac{f_{X,Y,Z}(x,y,z)}{|\phi_X'(x)\phi_Y'(y)\phi_Z'(z)|}|\phi_X'(x)\phi_Y'(y)\phi_Z'(z)|dxdydz\\
				=&\int\left\{\frac{f_{X|Z}(x|z)f_{Y|Z}(y|z)}{f_{X,Y|Z}(x,y|z)}-1\right\}^2\frac{f_Y(y)f_Z(z)}{f_{Y,Z}(y,z)}  f_{X,Y,Z}(x,y,z)dxdydz.\\
			\end{aligned}
		\end{equation*}

		\clearpage
	
  \newpage
		
\section{Preliminaries}\label{appendix:pre}
Let $\{O_t, t\in\mathbb{N}\}$ be a strictly stationary stochastic
process of dimension $d\in \mathbb{N}$ and its distribution be denoted by $F(\cdot)$.  Let $\mathcal{F}_s^t$ denote the $\sigma$-algebra generated by $(O_s,...,O_t)$ for $s \leq t$. The stochastic process $\{O_t, t\in\mathbb{N}\}$
is $\beta$-mixing if the following condition holds:
$$\beta_{\tau}:=\sup_{s\in \mathbb{N}}\mathbb{E}\left[\sup_{A \in \mathcal{F}_{s+\tau}^\infty}\left\{|\mathbb{P}\left(A|\mathcal{F}_{0}^s \right)-\mathbb{P}(A)|\right\}\right]\rightarrow 0, \ \text{as} \ \tau\rightarrow \infty.$$ The following result is from \cite{bosq2012nonparametric}.
\begin{lemma} [Davydov's inequality]  \label{Davydov} Let $\{O_t, t\in\mathbb{N}\}$ be a strictly stationary $\beta$-mixing process. For any $q>1$, $r>1$, and $p>0$ satisfying $1/q+1/r=1-1/p$, we have
\begin{align*}
 |Cov(f(O_t),f(O_{t+\tau}))|\leq2p\beta_\tau^{1/p}\Vert f(O_t)\Vert_q\Vert f(O_{t+\tau})\Vert_r,
\end{align*}
where $f(\cdot)$ is  a measurable function satisfying  $f(O_t) \in L^q(P)\cap L^{r}(P)$.
\end{lemma}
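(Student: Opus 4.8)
The plan is to avoid truncation entirely and instead read the covariance as an integral against a signed measure, then extract the fractional power $\beta_\tau^{1/p}$ from a single three-factor H\"older inequality. Write $\mu:=P_{(O_t,O_{t+\tau})}-P_{O_t}\otimes P_{O_{t+\tau}}$ for the signed measure on $\mathcal{O}\times\mathcal{O}$, i.e. the difference between the joint law of $(O_t,O_{t+\tau})$ and the product of its marginals. Since $f(O_t)f(O_{t+\tau})$ is a product of a function of the first coordinate and a function of the second, Fubini gives
\begin{equation*}
Cov(f(O_t),f(O_{t+\tau}))=\int_{\mathcal{O}\times\mathcal{O}}f(x)f(y)\,d\mu(x,y),
\end{equation*}
so the entire problem reduces to bounding this integral.

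First I would identify the total mass of $|\mu|$ with the mixing coefficient. The absolute-regularity coefficient between the sub-$\sigma$-algebras $\sigma(O_t)$ and $\sigma(O_{t+\tau})$ admits the total-variation representation $\beta(\sigma(O_t),\sigma(O_{t+\tau}))=\tfrac12|\mu|(\mathcal{O}\times\mathcal{O})$; reconciling this with the ``expectation of a supremum'' definition of $\beta_\tau$ stated above---using $\sup_{A}|P(A\mid\mathcal{F})-P(A)|=\tfrac12\|P(\cdot\mid\mathcal{F})-P\|_{TV}$ for the zero-mass signed kernel, then integrating---is the step I would spell out most carefully. By monotonicity of $\beta$ under shrinking $\sigma$-algebras (take $s=t$, so that $\sigma(O_t)\subseteq\mathcal{F}_0^t$ and $\sigma(O_{t+\tau})\subseteq\mathcal{F}_{t+\tau}^\infty$) this coefficient is at most $\beta_\tau$, whence $|\mu|(\mathcal{O}\times\mathcal{O})\le 2\beta_\tau$.

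Next comes the H\"older step, which produces the exponent $1/p$. Set $\lambda:=|\mu|$ and record the pointwise domination $\lambda\le P_{(O_t,O_{t+\tau})}+P_{O_t}\otimes P_{O_{t+\tau}}$, which follows from the Hahn--Jordan decomposition since $\mu^{+}\le P_{(O_t,O_{t+\tau})}$ and $\mu^{-}\le P_{O_t}\otimes P_{O_{t+\tau}}$. Using $1/p+1/q+1/r=1$ and applying H\"older with exponents $(p,q,r)$ to the three factors $1$, $|f(x)|$ and $|f(y)|$ against $\lambda$ gives
\begin{equation*}
\left|\int f(x)f(y)\,d\mu\right|\le\int|f(x)||f(y)|\,d\lambda\le\lambda(\mathcal{O}\times\mathcal{O})^{1/p}\left(\int|f(x)|^q d\lambda\right)^{1/q}\left(\int|f(y)|^r d\lambda\right)^{1/r}.
\end{equation*}
The two marginal integrals are controlled by the domination, since both $P_{(O_t,O_{t+\tau})}$ and $P_{O_t}\otimes P_{O_{t+\tau}}$ carry the correct coordinate marginals: $\int|f(x)|^q d\lambda\le 2\,\mathbb{E}[|f(O_t)|^q]$ and $\int|f(y)|^r d\lambda\le 2\,\mathbb{E}[|f(O_{t+\tau})|^r]$. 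Combining these with $\lambda(\mathcal{O}\times\mathcal{O})\le 2\beta_\tau$ and collecting the powers of $2$ (which multiply to $2^{1/p+1/q+1/r}=2$) yields
\begin{equation*}
|Cov(f(O_t),f(O_{t+\tau}))|\le 2\,\beta_\tau^{1/p}\,\|f(O_t)\|_q\,\|f(O_{t+\tau})\|_r.
\end{equation*}

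The main obstacle is bookkeeping rather than analysis: pinning down the exact normalization in the total-variation characterization of $\beta_\tau$ so the constant is right, and verifying the Hahn--Jordan domination $\lambda\le P_{(O_t,O_{t+\tau})}+P_{O_t}\otimes P_{O_{t+\tau}}$. I would note that this route in fact delivers the sharper constant $2$; since the exponent relation $1/q+1/r=1-1/p$ with $q,r>1$ forces $p>1$, we have $2\le 2p$, so the claimed inequality with constant $2p$ follows \emph{a fortiori}. An alternative, more pedestrian argument---truncating $f(O_t)$ and $f(O_{t+\tau})$ at a level optimized against the moment conditions, using the bounded-variable bound $|Cov|\le 2\beta_\tau\|f(O_t)\|_\infty\|f(O_{t+\tau})\|_\infty$ on the truncated part---reproduces the stated $p$-dependent constant directly.
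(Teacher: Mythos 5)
Your argument is correct, but there is nothing in the paper to compare it against: the paper does not prove Lemma \ref{Davydov} at all, it imports the statement from \cite{bosq2012nonparametric}. So your proposal should be judged as a self-contained derivation, and as such it goes through.

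Your route --- representing $Cov(f(O_t),f(O_{t+\tau}))=\int f(x)f(y)\,d\mu$ with $\mu$ the joint-minus-product signed measure, identifying $|\mu|(\mathcal{O}\times\mathcal{O})=2\beta\left(\sigma(O_t),\sigma(O_{t+\tau})\right)\leq 2\beta_\tau$, and then applying three-exponent H\"older against $\lambda=|\mu|$ together with the Hahn--Jordan domination $\lambda\leq P_{(O_t,O_{t+\tau})}+P_{O_t}\otimes P_{O_{t+\tau}}$ --- is the standard modern proof of the covariance inequality for absolutely regular (as opposed to merely strongly mixing) sequences, and each step checks out: the domination is elementary, both dominating measures have the correct coordinate marginals, and the powers of $2$ collect to $2^{1/p+1/q+1/r}=2$. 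The one step you rightly flagged as delicate is the identification of the paper's ``expectation of a supremum'' definition of $\beta_\tau$ with the total-variation quantity; this requires regular conditional probabilities (available since $O_t$ is $\mathbb{R}^d$-valued), the disintegration $\Vert P_{(X,Y)}-P_X\otimes P_Y\Vert_{TV}=\int\Vert P_{Y|X=x}-P_Y\Vert_{TV}\,dP_X(x)$, and monotonicity of $\beta$ under shrinking $\sigma$-fields --- all standard but worth citing (e.g.\ Bradley's survey) rather than asserting. Your remark on constants is also right: since $1/q+1/r=1-1/p$ must be positive, $p>1$ is forced, so your constant $2$ delivers the stated bound with $2p$ a fortiori; the $p$-dependent constant is an artifact of the truncation proof, which is what one must fall back on when only $\alpha$-mixing information is available. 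One small gap to close for completeness: before invoking the integral representation of the covariance you should note it is finite, which follows because $2/\max(q,r)\leq 1/q+1/r<1$ gives $\max(q,r)>2$, hence $f(O_t)\in L^q\cap L^r\subseteq L^2$ on a probability space and Cauchy--Schwarz applies.
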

 For a symmetric vector-valued function $h(\cdot)$ of $m$ variables,  let $U_n$ be a $U$-statistic generated from $h(\cdot)$: 
$$
U_n:=\binom{n}{m}^{-1} \sum_{1 \leq t_1<t_2<\ldots<t_m \leq n} h\left(O_{t_1}, \ldots, O_{t_m}\right) \quad(n \geq m) .
$$
By the Hoeffding decomposition, $U_n$ can be represented by
$$
U_n=\sum_{c=0}^m\binom{m}{c} U_n^c,
$$
where $U_n^c$ is the $U$-statistic constructed from the degenerate kernel function
$$
\widehat{h}_c\left(o_1, \ldots, o_c\right):=\sum_{r=0}^c\binom{c}{r}(-1)^{c-r} h_r\left(o_1, \ldots, o_r\right)$$
and
$
h_r\left(o_1, \ldots, o_r\right):= \int h\left(o_1, \ldots, o_m\right) \prod_{i=r+1}^m d F\left(o_i\right).
$

Denote the reminder of $U_n$ by
$$
R_n:=\sum_{c=2}^m\binom{m}{c} U_n^c.
$$
Then $U_n$ can be further represented by 
\begin{equation}\label{Un_decomposition}
    U_n=\theta+\frac{m}{n}\sum_{t=1}^n\left(h_1(O_t)-\theta\right)+R_n,
\end{equation}
where $\theta:=\int  h\left(o_1, \ldots, o_m\right) \prod_{i=1}^m d F(o_i)$. Define
$$s_\delta:=\sup_{1\leq t_1<...<t_m\leq n}\left\{\mathbb{E}\left[\left \Vert h\left(O_{t_1},...,O_{t_m}\right)\right\Vert ^{2+\delta}\right]\right\}^{\frac{1}{2+\delta}} \quad (\delta \geq 0).$$
The following result is from \cite{denker1983u}, which gives a $L^2(P)$  bound for $R_n$. 
\begin{lemma}\label{lemma:denker1983u}
For two positive constants $\delta$ and $\epsilon$ satisfying $\beta_n^{\delta/(2+\delta)}=O\left(n^{-2+\epsilon}\right)$,  there exists a constant $\Gamma_\epsilon$ such that
\begin{equation*}
    \mathbb{E}[\Vert R_n\Vert ^2]\leq\Gamma_\epsilon^2n^{-2+\epsilon}s_\delta^2.
\end{equation*}

\end{lemma}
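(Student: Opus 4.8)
The plan is to bound each projection term in the Hoeffding decomposition separately and then recombine. Since $m$ is fixed and $R_n=\sum_{c=2}^m\binom{m}{c}U_n^c$, Minkowski's inequality in $L^2$ reduces the claim to establishing $\mathbb{E}[\Vert U_n^c\Vert^2]=O(n^{-2+\epsilon}s_\delta^2)$ for every $c\in\{2,\dots,m\}$; because each $U_n^c$ is built from a completely degenerate kernel of order $c\ge 2$, the individual rates are $O(n^{-c+\epsilon})$ and the slowest-decaying, hence binding, case is $c=2$. First I would record that the canonical kernels $\widehat h_c$ are inclusion--exclusion combinations of the conditional means $h_r$, so by Jensen and Minkowski their $L^{2+\delta}$ norms inherit the uniform bound $\Vert\widehat h_c\Vert_{2+\delta}\le C\, s_\delta$. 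This keeps every moment factor appearing below controlled by $s_\delta$.

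Next I would expand the second moment as a double sum over ordered $c$-tuples,
\begin{equation*}
\mathbb{E}[\Vert U_n^c\Vert^2]=\binom{n}{c}^{-2}\sum_{\mathbf i}\sum_{\mathbf j}\mathbb{E}\!\left[\widehat h_c(O_{\mathbf i})^\top\widehat h_c(O_{\mathbf j})\right],
\end{equation*}
and classify the pairs $(\mathbf i,\mathbf j)$ by the overlap of their index sets and by the gap structure of the merged, sorted collection of (at most $2c$) time indices. For each configuration I would locate the maximal consecutive gap $\tau$, split the merged indices into a left block and a right block across that gap, and apply Davydov's inequality (Lemma \ref{Davydov}) with the symmetric choice $q=r=2+\delta$. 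This choice is exactly what produces the exponent in the hypothesis, since $1/p=1-2/(2+\delta)=\delta/(2+\delta)$, yielding a per-term covariance bound of order $\beta_\tau^{\delta/(2+\delta)}\Vert\widehat h_c\Vert_{2+\delta}^2\lesssim\beta_\tau^{\delta/(2+\delta)}s_\delta^2$. The complete degeneracy of $\widehat h_c$ is used to annihilate the leading product-of-means term that Davydov leaves behind: in any configuration whose maximal gap isolates an argument entering a single kernel factor, integrating that argument out gives zero, so only the mixing-controlled remainder survives.

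Finally I would count, for each value of the maximal gap $\tau$, the number of index configurations producing it, multiply by the per-term bound $\beta_\tau^{\delta/(2+\delta)}s_\delta^2$, normalize by the prefactor $\binom{n}{c}^{-2}\asymp n^{-2c}$, and sum over $\tau$. The decay hypothesis $\beta_n^{\delta/(2+\delta)}=O(n^{-2+\epsilon})$ is calibrated precisely so that this series converges and delivers the stated rate $\Gamma_\epsilon^2\, n^{-2+\epsilon}s_\delta^2$, with $\Gamma_\epsilon$ absorbing the finite, $m$- and $\delta$-dependent combinatorial and mixing constants.

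I expect the main obstacle to be the covariance step: carefully organizing the combinatorics so that, in every configuration, the maximal-gap splitting interacts correctly with degeneracy. In the i.i.d. case every off-diagonal cross term vanishes outright, giving the clean $O(n^{-c})$ variance; under $\beta$-mixing these terms are nonzero and must be shown to decay with the separating gap while not being so numerous that the summation loses more than the permitted $n^{\epsilon}$ factor. Arranging for degeneracy to eliminate the product term in exactly those configurations where the gap alone would otherwise be insufficient is the delicate part; the remaining moment and counting estimates are routine given the uniform bound $s_\delta$ and Lemma \ref{Davydov}.
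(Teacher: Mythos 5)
The first thing to note is that the paper itself offers no proof of this lemma: it is imported verbatim from \cite{denker1983u}, so your attempt can only be measured against the original Yoshihara/Denker--Keller argument. Your outline does reproduce that argument's architecture --- termwise treatment of the Hoeffding decomposition, classification of index pairs by the maximal gap in the merged index set, a mixing bound combined with degeneracy, and a final count calibrated against the hypothesis on $\beta_n$.

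There is, however, a genuine gap at precisely the step you flag as delicate, and it is structural rather than bookkeeping. Lemma \ref{Davydov} bounds $\mathrm{Cov}(G,H)$ only when $G$ and $H$ are separately measurable with respect to the $\sigma$-algebras of the left and right blocks. In the expansion of $\mathbb{E}[\Vert U_n^c\Vert^2]$, the generic term $\mathbb{E}[\widehat h_c(O_{\mathbf i})^\top \widehat h_c(O_{\mathbf j})]$ is in general \emph{not} of this form: the maximal gap of the merged index set typically cuts through the argument list of one or both kernels, so the integrand cannot be written as a product $G(\text{left block})\cdot H(\text{right block})$ and Davydov's inequality has nothing to act on. Even in the favourable configuration you describe (the gap isolating a single argument entering one kernel), the remaining kernel factor still depends on variables on both sides of the gap, so no covariance decomposition exists. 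The tool the actual proof requires is Yoshihara's decoupling inequality for absolutely regular processes: for an \emph{arbitrary} integrand $g$ of all the variables, $|\mathbb{E}[g]-\mathbb{E}_{\mathrm{dec}}[g]|\le 4M^{1/(1+\delta')}\beta_\tau^{\delta'/(1+\delta')}$, where $\mathbb{E}_{\mathrm{dec}}$ is expectation under the law making the two blocks independent and $M$ bounds $\int |g|^{1+\delta'}$ under \emph{both} laws; choosing $1+\delta'=(2+\delta)/2$ recovers exactly your exponent $\beta_\tau^{\delta/(2+\delta)}$. This substitution also repairs two related defects in your sketch: (i) degeneracy of $\widehat h_c$ holds only under product measure, so under dependence $\mathbb{E}[\widehat h_c(O_{\mathbf i})]\neq 0$ in general, and the ``product-of-means'' term can be annihilated only \emph{after} decoupling, not before; and (ii) the moment functional must then control the kernels' moments under the decoupled laws as well, which the paper's $s_\delta$ (a supremum over the dependent joint law only) does not literally provide --- note that the paper itself, in the proof of Theorem \ref{th_cH0}, invokes exactly this decoupling bound with $M$ defined as a maximum over the joint and the product laws. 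With Yoshihara's lemma in place of Lemma \ref{Davydov} and the moment functional enlarged accordingly, your counting scheme goes through; as written, the straddling configurations cannot be bounded.
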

	Let $h_n(\cdot,\cdot)$ be a symmetric function on $\mathbb{R}^{d}\times\mathbb{R}^d$ that possibly depends on the sample size $n$,  such that $\mathbb{E}\left[h_n(O_0,v)\right]=0$.  Define  $$\mathcal{H}_n:=\frac{1}{n}\sum_{1 \leq i<j\leq n}\left\{h_n(O_i,O_j) -\mathbb{E}\left[h_n(O_i,O_j)\right]\right\}.$$  Let $\left\{\overline{O}_t, t\in\mathbb{N}\right\}$ be a sequence of $i.i.d.$ random variables following the distribution $F(\cdot)$, where $\overline{O}_0$ is an independent copy of $O_0$. For any $p>0$, define
	$$u_n(p):=\max\left\{\max_{1\leq i \leq n}\left\Vert h_n(O_i,O_0)\right \Vert_p, \left\Vert h_n(O_0,\overline{O}_0)\right \Vert_p\right\},$$
		$$v_n(p):=\max\left\{\max_{1\leq i \leq n}\left\Vert G_{n0}(O_i,O_0)\right \Vert_p, \left\Vert G_{n0}(O_0,\overline{O}_0)\right \Vert_p\right\},$$
			$$w_n(p):=\left\Vert G_{n0}(O_0,O_0)\right \Vert_p,$$
				$$z_n(p):=\max_{0\leq i\leq n}\max_{1\leq j \leq n}\left\{\left\Vert G_{nj}(O_i,O_0)\right \Vert_p, \left\Vert G_{nj}(O_0,O_i)\right \Vert_p, \left\Vert G_{nj}(O_0,\overline{O}_0)\right \Vert_p\right\},$$
				where $G_{n,j}(u,v):=\mathbb{E}\left[
				h_n(O_j,u)h_n(O_0,v)\right]$ and $\Vert\cdot\Vert_p=\left\{\mathbb{E}|\cdot|^p\right\}^{1/p}$.
    The following result is from \cite{tenreiro1997loi}, which ensures the asymptotic normality of $\mathcal{H}_n$ under sufficient conditions.
		\begin{lemma} \label{T1997}
	 Suppose there exist $\delta_0>0$, $\gamma_0<\frac{1}{2}$ and $\gamma_1>0$ such that	
  \begin{enumerate}[(1)]
      \item 
 $u_n(4+\delta_0)=O\left(n^{\gamma_0}\right)$;
		
	\item $v_n(2)=o(1)$;
		
		\item  $w_n(2+\delta_0/2)=o(n^{1/2})$
		
		\item  $z_n(2)n^{\gamma_1}=O(1)$;
		
		\item $\mathbb{E}\left[h_n^2(O_0,\overline{O}_0)\right]=2\tilde{\sigma}^2+o(1)$.
		 \end{enumerate}
		Then $ \mathcal{H}_n$ is asymptotically normally distributed with mean zero and variance $\tilde{\sigma}^2$.
		\end{lemma}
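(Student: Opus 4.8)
The plan is to establish the result through a martingale central limit theorem, treating $\mathcal{H}_n$ as a completely degenerate second-order $U$-statistic with an $n$-dependent symmetric kernel and $\beta$-mixing observations. Grouping the double sum by its largest index, I would set $\mathcal{F}_j:=\sigma(O_1,\dots,O_j)$ and $Y_{nj}:=\tfrac1n\sum_{i=1}^{j-1}\{h_n(O_i,O_j)-\mathbb{E}[h_n(O_i,O_j)]\}$, so that $\mathcal{H}_n=\sum_{j=2}^{n}Y_{nj}$. The symmetry of $h_n$ together with the degeneracy hypothesis $\mathbb{E}[h_n(O_0,v)]=0$ forces both the grand mean $\theta:=\mathbb{E}[h_n(O_0,\overline O_0)]$ and the first Hoeffding projection $g_1(v):=\mathbb{E}[h_n(O_0,v)]$ to vanish; in the i.i.d.\ case this makes $\{Y_{nj}\}$ a martingale-difference array and the statement is exactly Hall's (1984) CLT for degenerate $U$-statistics. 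Under $\beta$-mixing the increments are only \emph{approximately} martingale differences, since $\mathbb{E}[h_n(O_i,O_j)\mid\mathcal{F}_{j-1}]$ need not reduce to its unconditional mean once $O_j$ is correlated with the past. The first step is therefore to split $Y_{nj}=\widetilde Y_{nj}+r_{nj}$ into a genuine $\mathcal{F}_j$-martingale difference $\widetilde Y_{nj}$ plus a remainder $r_{nj}$ whose $L^2$ size I would bound with Davydov's inequality (Lemma \ref{Davydov}), exploiting the geometric decay $\beta_m=O(\rho^m)$ of Assumption \ref{ass:ciid} to make $\sum_j r_{nj}=o_P(1)$.

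Granting the martingale representation, I would verify the two hypotheses of the martingale CLT (Hall--Heyde). The stability condition $\sum_{j=2}^n\mathbb{E}[\widetilde Y_{nj}^2\mid\mathcal{F}_{j-1}]\xrightarrow{P}\tilde\sigma^2$ is the analytic core: expanding the conditional second moment produces a diagonal part $\tfrac1{n^2}\sum_{j}\sum_{i<j}\mathbb{E}[h_n^2(O_i,O_j)]$, which by stationarity and degeneracy tends to $\tfrac12\mathbb{E}[h_n^2(O_0,\overline O_0)]=\tilde\sigma^2$ via hypothesis (5), together with off-diagonal parts expressed through the kernels $G_{n,j}(u,v)=\mathbb{E}[h_n(O_j,u)h_n(O_0,v)]$. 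The no-lag off-diagonal contribution is killed by $v_n(2)=o(1)$ (hypothesis (2)) and the lagged cross terms by $z_n(2)n^{\gamma_1}=O(1)$ combined with mixing (hypothesis (4)); upgrading convergence in mean to convergence in probability rests on a further second-moment bound using the same covariance inequality. For the Lyapunov condition I would show $\sum_{j=2}^n\mathbb{E}[\widetilde Y_{nj}^4]\to 0$: conditioning on $O_j$ and using degeneracy, this sum splits into a term of order $n^{-2}\mathbb{E}[h_n^4(O_0,\overline O_0)]$, which vanishes precisely because hypothesis (1) with $\gamma_0<\tfrac12$ gives $\mathbb{E}[h_n^4]=o(n^2)$, and a term of order $n^{-1}\mathbb{E}[G_{n0}^2(O_0,O_0)]=n^{-1}w_n(2)^2$, which vanishes by hypothesis (3); the extra exponent $\delta_0$ in $u_n(4+\delta_0)$ and $w_n(2+\delta_0/2)$ supplies the uniform integrability needed to close the argument. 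The martingale CLT then yields $\mathcal{H}_n\xrightarrow{d}\mathcal{N}(0,\tilde\sigma^2)$.

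The main obstacle is the dependence. In the i.i.d.\ world degeneracy alone annihilates every cross term that shares an index, so that only the diagonal survives; under $\beta$-mixing these cross terms persist and must instead be shown negligible through the auxiliary functions $G_{n,j}$ and Davydov's covariance inequality, which is exactly why conditions (2)--(4), absent from Hall's theorem, are imposed, and why the mixing must decay fast enough that the relevant series in $\beta_m^{\delta/(2+\delta)}$ are summable. A second difficulty is that the kernel varies with $n$, so there is no fixed limiting Gaussian chaos to lean on and every estimate must be uniform in $n$; this is the role of phrasing the hypotheses through the $n$-indexed functionals $u_n,v_n,w_n,z_n$ rather than a single fixed kernel. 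Controlling the non-martingale remainder $r_{nj}$ at the same time as the fluctuation of the quadratic-variation sum---each of which entangles the combinatorics of overlapping index pairs with the decay rate of $\beta_m$---is where the bulk of the technical effort concentrates.
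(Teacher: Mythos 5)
First, a point of comparison that matters for this review: the paper never proves this lemma. It is imported verbatim from \cite{tenreiro1997loi} (see the sentence immediately preceding the statement in Appendix \ref{appendix:pre}), and the paper's only work connected with it is the later verification of hypotheses (1)--(5) inside the proof of Theorem \ref{th_cH0}. So your proposal is a reconstruction of Tenreiro's theorem rather than an alternative to any argument in the paper. That said, the strategy you outline --- group the pairs by largest index so that $\mathcal{H}_n=\sum_{j} Y_{nj}$, pass to a martingale-difference approximation, and apply the Hall--Heyde martingale CLT \citep{hall2014martingale} with Davydov's inequality (Lemma \ref{Davydov}) absorbing the dependence --- is the right family of techniques: it is how the i.i.d.\ degenerate-$U$-statistic CLT of Hall (1984) is proven, and mixing extensions of it proceed along this pattern.

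However, the two steps that carry essentially all of the difficulty are asserted rather than executed, and as written they do not go through. First, the martingale approximation: ``split $Y_{nj}=\widetilde Y_{nj}+r_{nj}$ and bound $r_{nj}$ with Davydov's inequality'' cannot work uniformly over index pairs, because for $j-i$ small the coefficient $\beta_{j-i}$ is not small and Davydov's bound gives nothing. One must split the pairs at a gap $m\asymp \log n$ (geometric mixing makes $n^{a}\beta_m^{\delta/(1+\delta)}$ vanish for any fixed power $a$), apply Davydov only to the far pairs, and control the $O(nm)$ near-diagonal pairs by a separate moment argument --- this blocking device is exactly what the paper itself uses in Appendix \ref{app:th_cH0} to show $\frac{1}{n\sqrt{K}}\sum_{i<j}\mathbb{E}[\widetilde V_{K}(O_i,O_j)]=o(1)$, and your sketch never performs it. Second, in the stability condition you identify $\frac{1}{n^2}\sum_{j}\sum_{i<j}\mathbb{E}[h_n^2(O_i,O_j)]$ with $\frac{1}{2}\mathbb{E}[h_n^2(O_0,\overline O_0)]$ ``by stationarity and degeneracy.'' That identification is false as stated: $(O_i,O_j)$ is a \emph{dependent} pair, while hypothesis (5) concerns the independent copy $\overline O_0$. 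Equating the two expectations up to $o(1)$ (for all but the near-diagonal pairs) is itself a mixing estimate --- a Davydov-type bound applied to $h_n^2$, which is precisely why condition (1) demands moments of order $4+\delta_0$ rather than $2$; the exponent gap is what pays for this comparison, and the same issue recurs in your Lyapunov computation and in upgrading the conditional-variance convergence from mean to probability. Without these two repairs, neither $\sum_j r_{nj}=o_P(1)$ nor the convergence of $\sum_j\mathbb{E}[\widetilde Y_{nj}^2\mid\mathcal{F}_{j-1}]$ to $\tilde\sigma^2$ is established, so the proposal remains a correct road map rather than a proof.
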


  \clearpage				
		
\section{Key Lemmas}\label{app:pre}
		For any square-integrable function $\phi(x,y,z)$, let 
        \begin{align*}
\text{proj}_{n,v_K}\phi(x,y,z):=&\left[\sum_{t=1}^{n}\phi(X_t, Y_t, Z_t)v_K^\top(X_t, Y_t, Z_t)\right]\left[\sum_{t=1}^{n}v_K(X_t, Y_t, Z_t)v_K^\top(X_t, Y_t, Z_t)\right]^{-1}\\&v_K(x,y,z)
        \end{align*} 
        be the least squares regression of $\phi(x,y,z)$ on the space linearly spanned by $v_K(x,y,z)$.
		\begin{lemma}\label{lemma:proj}
			Under Assumptions \ref{ass:ciid}, \ref{ass:eigen} and \ref{ass:cK}, we have
			\begin{align}
    &\label{eq:u_eigen}
				\left\|\frac{1}{n}\sum_{t=1}^nu_{K_0}(Y_t,Z_t)u^{\top}_{K_0}(Y_t,Z_t)-\mathbb{E}\left[u_{K_0}(Y,Z)u^{\top}_{K_0}(Y,Z)\right]\right\| =O_P\left(\zeta_{K_0}\sqrt{\frac{K_0}{n}}\right),  \\
	& \label{eq:v_eigen}
				\left\|\frac{1}{n}\sum_{t
					=1}^nv_{K}(X_t, Y_t, Z_t)v^{\top}_{K}(X_t, Y_t, Z_t)-\mathbb{E}\left[v_{K}(X,Y,Z)v^{\top}_{K}(X,Y,Z)\right]\right\|
				=O_P\left(\xi_{K}\sqrt{\frac{K}{n}}\right), \\
 & \label{eq:vu_eigen}
			\left\Vert\frac{1}{n}\sum_{t=1}^{n}v_{K}(X_t, Y_t, Z_t)u_{K_0}^\top(Y_t,Z_t)-\mathbb{E}\left[v_K(X,Y,Z)u_{K_0}^\top(Y,Z)\right]\right\Vert
				=O_P\left(\xi_K\sqrt{\frac{K_0}{n}}\right),
			\end{align}
	and			\begin{align}\label{eq:lsproj}
				\frac{1}{n}\sum_{t=1}^{n}\left\Vert \text{proj}_{n,v_K}\phi(X_t, Y_t, Z_t)\right\Vert^2\leq  \frac{1}{n}\sum_{t=1}^{n}\left\Vert \phi(X_t, Y_t, Z_t)\right\Vert^2
			\end{align}
			and 
			\begin{align}\label{vkphi}
				\left\|\frac{1}{n}\sum_{t=1}^{n}v_{K}(X_t, Y_t, Z_t)\phi(X_t, Y_t, Z_t)\right\|^2
				\leq O_P\left(1\right)\cdot \frac{1}{n}\sum_{t=1}^{n}| \phi(X_t, Y_t, Z_t)|^2.
			\end{align}
		\end{lemma}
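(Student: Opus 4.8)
The plan is to handle the three matrix-norm bounds \eqref{eq:u_eigen}--\eqref{eq:vu_eigen} by a single second-moment argument, and then to read off \eqref{eq:lsproj} and \eqref{vkphi} from elementary linear algebra together with \eqref{eq:v_eigen}. For \eqref{eq:u_eigen}, set $\Psi_t:=u_{K_0}(Y_t,Z_t)u_{K_0}^\top(Y_t,Z_t)$ and bound the operator norm by the Frobenius norm, so that it suffices to control $\mathbb{E}\big[\|\tfrac1n\sum_{t}(\Psi_t-\mathbb{E}\Psi_t)\|_F^2\big]$ and apply Markov's inequality. Expanding the square and using stationarity, this expectation equals $\tfrac1{n^2}\sum_{s,t}\mathrm{Cov}_F(\Psi_s,\Psi_t)\le \tfrac1n\sum_{\tau}|\mathrm{Cov}_F(\Psi_0,\Psi_\tau)|$, where $\mathrm{Cov}_F$ denotes the sum over entries of the scalar covariances. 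The diagonal term $\tau=0$ is $\mathrm{Var}_F(\Psi_0)\le \mathbb{E}\|u_{K_0}\|^4\le \zeta_{K_0}^2\,\mathbb{E}\|u_{K_0}\|^2=\zeta_{K_0}^2\,\mathrm{tr}(\Sigma_{K_0})=O(\zeta_{K_0}^2K_0)$ by Assumption \ref{ass:eigen}(i). For $\tau\neq0$ I would apply Davydov's inequality (Lemma \ref{Davydov}) entrywise with $q=r=2+\eta/2$ and the matching $p$, using the $(4+\eta)$-moment bound of Assumption \ref{ass:eigen}(iii) to control each $\|u_{K_0,j}u_{K_0,k}\|_q$ by a constant; summing the $K_0^2$ entries against the geometric factors $\beta_\tau^{1/p}=O(\rho^{\tau/p})$ gives $\sum_{\tau\ge1}|\mathrm{Cov}_F(\Psi_0,\Psi_\tau)|=O(K_0^2)$. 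Since $\zeta_{K_0}^2\ge\mathbb{E}\|u_{K_0}\|^2\asymp K_0$, the diagonal term dominates and the whole expectation is $O(\zeta_{K_0}^2K_0/n)$, which yields \eqref{eq:u_eigen}.

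The bounds \eqref{eq:v_eigen} and \eqref{eq:vu_eigen} follow verbatim: in the diagonal term one uses $\|v_K\|^2\le\xi_K^2$ pointwise, giving $\xi_K^2\,\mathrm{tr}(\cdot)=O(\xi_K^2K)$ for \eqref{eq:v_eigen} and $\xi_K^2\,\mathrm{tr}(\Sigma_{K_0})=O(\xi_K^2K_0)$ for \eqref{eq:vu_eigen}, while the off-diagonal terms sum $K^2$ (resp.\ $KK_0$) entries and remain of smaller order once $\xi_K^2\gtrsim K$. The projection inequality \eqref{eq:lsproj} is just the contraction property of an orthogonal projection: writing $V\in\mathbb{R}^{n\times K}$ for the matrix with rows $v_K(X_t,Y_t,Z_t)^\top$ and $\Phi$ for the matrix with rows $\phi(X_t,Y_t,Z_t)^\top$, a direct computation shows the stacked projections equal $P\Phi$ with $P:=V(V^\top V)^{-1}V^\top$ the symmetric idempotent hat matrix, so that $\sum_t\|\mathrm{proj}_{n,v_K}\phi(X_t,Y_t,Z_t)\|^2=\|P\Phi\|_F^2\le\|\Phi\|_F^2=\sum_t\|\phi(X_t,Y_t,Z_t)\|^2$.

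For \eqref{vkphi} I would write $\tfrac1n\sum_t v_K\phi=\tfrac1nV^\top\varphi$ with $\varphi=(\phi(X_t,Y_t,Z_t))_t$, so that $\|\tfrac1nV^\top\varphi\|^2=\tfrac1{n^2}\varphi^\top VV^\top\varphi\le\lambda_{\max}(\tfrac1nV^\top V)\cdot\tfrac1n\|\varphi\|^2$, and it then remains to show $\lambda_{\max}(\tfrac1nV^\top V)=O_P(1)$. Because $r_0$ is bounded away from $0$ and from above (Assumption \ref{ass:cbounded}(i)), the eigenvalues of $\mathbb{E}[v_Kv_K^\top]$ inherit the upper bound on those of $H_K$ from Assumption \ref{ass:eigen}(ii), and \eqref{eq:v_eigen} combined with $\xi_K\sqrt{K/n}\to0$ (Assumption \ref{ass:cK}(ii)) places $\tfrac1nV^\top V$ within $o_P(1)$ of $\mathbb{E}[v_Kv_K^\top]$ in operator norm, whence $\lambda_{\max}(\tfrac1nV^\top V)=O_P(1)$. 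I expect the main obstacle to lie in the off-diagonal terms of the three matrix bounds: the crude entrywise Davydov bound carries a $K_0^2$ (or $KK_0$) factor, and the argument only closes because the geometric $\beta$-mixing renders the covariance series summable and because $\zeta_{K_0}^2\gtrsim K_0$ and $\xi_K^2\gtrsim K$ guarantee that the diagonal variance term is the dominant contribution.
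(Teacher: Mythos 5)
Your proposal is correct and follows essentially the same route as the paper's proof: a second-moment bound combining the $\tau=0$ variance term with Davydov's inequality applied entrywise (with $q=r=2+\eta/2$ and the $(4+\eta)$-moment bound of Assumption \ref{ass:eigen}(iii)) for the mixing covariances, the observation that $\zeta_{K_0}^2\gtrsim K_0$ and $\xi_K^2\gtrsim K$ make the diagonal term dominant, and Chebyshev's inequality for the three Gram-matrix bounds, together with the idempotent-projection contraction for \eqref{eq:lsproj} and an $O_P(1)$ bound on $\lambda_{\max}$ of the empirical Gram matrix for \eqref{vkphi}. The only cosmetic differences are that you derive \eqref{vkphi} directly from the identity $\lambda_{\max}(VV^\top)=\lambda_{\max}(V^\top V)$ rather than routing through \eqref{eq:lsproj} as the paper does, and that you make explicit, via Assumption \ref{ass:cbounded}(i), the passage from Assumption \ref{ass:eigen}(ii) on $H_K$ to the upper eigenvalue bound on $\mathbb{E}\left[v_K(X,Y,Z)v_K^\top(X,Y,Z)\right]$, a step the paper leaves implicit.
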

		\begin{proof}
We first prove \eqref{eq:u_eigen}. Under Assumptions \ref{ass:ciid}, \ref{ass:eigen} and \ref{ass:cK}, we have	
			\begin{align*}
	&\mathbb{E}\left[\left\|\frac{1}{n}\sum_{t=1}^nu_{K_0}(Y_t,Z_t)u^{\top}_{K_0}(Y_t,Z_t)-\mathbb{E}\left[u_{K_0}(Y,Z)u^{\top}_{K_0}(Y,Z)\right]\right\|^2\right]\\
				=&\frac{1}{n}\mathbb{E}\left[\left\Vert u_{K_0}(Y,Z)u_{K_0}^\top(Y,Z)-\mathbb{E}\left[u_{K_0}(Y,Z)u^{\top}_{K_0}(Y,Z)\right]\right\Vert^2\right]\\
    & +\frac{2}{n^2}\sum_{i=1}^{K_0}\sum_{j=1}^{K_0}\sum_{t=1}^{n}\sum_{\tau =1}^{n-t}Cov\left(u_{K_0,i}(Y_t,Z_t)u_{K_0,j}(Y_t,Z_t), u_{K_0,i}(Y_{t+\tau},Z_{t+\tau})u_{K_0,j}(Y_{t+\tau},Z_{t+\tau})\right) \\
				\leq&\frac{4(4+\eta)}{ \eta n^2 }\sum_{i=1}^{K_0}\sum_{j=1}^{K_0}\sum_{t=1}^{n}\sum_{\tau =1}^{n-t}\left\{\mathbb{E}\left[|u_{K_0,i}(Y_t,Z_t)u_{K_0,j}(Y_t,Z_t)|^{2+\frac{\eta}{2}}\right]\right\}^{\frac{4}{4+\eta}}\beta_{\tau}^{\frac{\eta}{4+\eta}}+O\left( \frac{\zeta_{K_0}^2K_0}{n}\right) \\
				\leq&\frac{4(4+\eta)}{\eta n^2}\sum_{i=1}^{K_0}\sum_{j=1}^{K_0}\sum_{t=1}^{n}\sum_{\tau =1}^{n-t}\left\{\mathbb{E}\left[|u_{K_0,i}(Y_t,Z_t)|^{4+\eta}\right]\right\}^{\frac{2}{4+\eta}}\left\{\mathbb{E}\left[|u_{K_0,j}(Y_t,Z_t)|^{4+\eta}\right]\right\}^{\frac{2}{4+\eta}}\beta_{\tau}^{\frac{\eta}{4+\eta}}\\
                &+O\left( \frac{\zeta_{K_0}^2K_0}{n}\right) \\
				\leq&\frac{4(4+\eta)}{\eta n^2}\sum_{i=1}^{K_0}\sum_{j=1}^{K_0}C^{\frac{4}{4+\eta}}\cdot\sum_{t=1}^{n}\sum_{\tau =1}^{n-t}\beta_{\tau}^{\frac{\eta}{4+\eta}}+O\left( \frac{\zeta_{K_0}^2K_0}{n}\right)  \\
				=&O\left(\frac{K_0^2}{n}\right)  +O\left( \frac{\zeta_{K_0}^2K_0}{n}\right) \quad (\text{by Assumption \ref{ass:ciid}}) \\
				 =&O\left( \frac{\zeta_{K_0}^2K_0}{n}\right), \label{eq:u_eigen_proof}
		\end{align*}
			where $C$ is a universal constant arising from Assumption \ref{ass:eigen},  and the first inequality follows from  Lemma \ref{Davydov} by applying $q=r=2+\eta/2$,  $p=(4+\eta)/\eta$, $f(y,z)=u_{K_0,i}(y,z)u_{K_0,j}(y,z)$, and using the following result:
				\begin{equation*}
				\begin{aligned}
&\frac{1}{n}\mathbb{E}\left[\left\Vert u_{K_0}(Y,Z)u_{K_0}^\top(Y,Z)-\mathbb{E}\left[u_{K_0}(Y,Z)u^{\top}_{K_0}(Y,Z)\right]\right\Vert^2\right]\\			
  \leq  &\frac{1}{n}\mathbb{E}\left[\left\Vert u_{K_0}(Y,Z)u_{K_0}^\top(Y,Z)\right\Vert^2\right]\\
					=&\frac{1}{n}\mathbb{E}\left[\tr\left\{u_{K_0}(Y,Z)u_{K_0}^\top(Y,Z)u_{K_0}(Y,Z)u_{K_0}^\top(Y,Z)\right\}\right]\\
					\leq&\frac{\zeta_{K_0}^2}{n}\tr\left\{\mathbb{E}\left[u_{K_0}(Y,Z)u_{K_0}^\top(Y,Z)\right]\right\}=O\left( \frac{\zeta_{K_0}^2K_0}{n}\right).
				\end{aligned}
			\end{equation*}
Then the result \eqref{eq:u_eigen} follows from	Chebyshev's inequality. The results  \eqref{eq:v_eigen}  and \eqref{eq:vu_eigen} can be proved similarly.  \eqref{eq:lsproj} follows from the projection property of least squares regression. 

Note that $\lambda_{\max}\left\{\frac{1}{n}\sum_{t=1}^{n}v_K(X_t, Y_t, Z_t)v_K^\top(X_t, Y_t, Z_t)\right\}$, the largest eigenvalue of the matrix $\frac{1}{n}\sum_{t=1}^{n}v_K(X_t, Y_t, Z_t)v_K^\top(X_t, Y_t, Z_t)$,   equals $\lambda_{\min}\left\{\left(\frac{1}{n}\sum_{t=1}^{n}v_K(X_t, Y_t, Z_t)v_K^\top(X_t, Y_t, Z_t)\right)^{-1}\right\}$ , which is defined analogously. Under Assumption \ref{ass:eigen}, we have 
			\begin{align}
				&\left\|\frac{1}{n}\sum_{t=1}^{n}v_{K}(X_t, Y_t, Z_t)\phi(X_t, Y_t, Z_t)\right\|^2\notag\\
				=&\left[\frac{1}{n}\sum_{t=1}^{n}v_{K}(X_t, Y_t, Z_t)\phi(X_t, Y_t, Z_t)\right]^\top\frac{\lambda_{\min}\left\{\left(\frac{1}{n}\sum_{t=1}^{n}v_K(X_t, Y_t, Z_t)v_K^\top(X_t, Y_t, Z_t)\right)^{-1}\right\}}{\lambda_{\min}\left\{\left(\frac{1}{n}\sum_{t=1}^{n}v_K(X_t, Y_t, Z_t)v_K^\top(X_t, Y_t, Z_t)\right)^{-1}\right\}}\notag\\
               & \cdot I_{K\times K}\cdot
				\left[\frac{1}{n}\sum_{t=1}^{n}v_K(X_t, Y_t, Z_t)\phi(X_t, Y_t, Z_t)\right]\notag\\
				\leq&\frac{1}{\lambda_{\max}\left\{\frac{1}{n}\sum_{t=1}^{n}v_K(X_t, Y_t, Z_t)v_K^\top(X_t, Y_t, Z_t)\right\}}\left[\frac{1}{n}\sum_{t=1}^{n}v_{K}(X_t, Y_t, Z_t)\phi(X_t, Y_t, Z_t)\right]^\top\notag\\
				&	\cdot\left[\frac{1}{n}\sum_{t=1}^{n}v_K(X_t, Y_t, Z_t)v_K^\top(X_t, Y_t, Z_t)\right]^{-1}\left[\frac{1}{n}\sum_{t=1}^{n}v_K(X_t, Y_t, Z_t
				)v_K^\top(X_t, Y_t, Z_t)\right]\notag\\
				&\cdot\left[\frac{1}{n}\sum_{t=1}^{n}v_K(X_t, Y_t, Z_t)v_K^\top(X_t, Y_t, Z_t)\right]^{-1}\left[\frac{1}{n}\sum_{t=1}^{n}v_K(X_t, Y_t, Z_t)\phi(X_t, Y_t, Z_t)\right]\notag\\
				\leq&O_P\left(1\right)\cdot \frac{1}{n}\sum_{t=1}^{n}| \text{proj}_{n, v_K}\phi(X_t, Y_t, Z_t)|^2	\leq O_P\left(1\right)\cdot \frac{1}{n}\sum_{t=1}^{n}| \phi(X_t, Y_t, Z_t)|^2, \ \text{ (by \eqref{eq:lsproj})} \notag 
			\end{align}
			which gives the result \eqref{vkphi}.
		\end{proof}
	
 \begin{lemma}
 \label{uproj}Under Assumptions \ref{ass:ciid}, \ref{ass:eigen} and \ref{ass:cK}, for any $0<\epsilon<1/4$,  we have
     	\begin{equation}\label{uproj_u}
	\begin{aligned}
		&\frac{1}{n(n-1)}\sum_{j=1,j\neq i}^n\sum_{i=1}^nu_{K_0}(Y_i,Z_j)\\
		=&\frac{1}{n}\sum_{t=1}^{n}	\mathbb{E}\left[u_{K_0}(Y_t,Z_t)r_0(Y_t,Z_t)|Y_t\right]+\frac{1}{n}\sum_{t=1}^{n}	\mathbb{E}\left[u_{K_0}(Y_t,Z_t)r_0(Y_t,Z_t)|Z_t\right]\\
        &-\mathbb{E}\left[u_{K_0}(Y,Z)r_0(Y,Z)\right]+O_P\left( \frac{\sqrt{\zeta_{K_0}}K_0^{1/4}}{n^{(2-\epsilon)/2}}\right),
	\end{aligned}
\end{equation}
and
\begin{equation}\label{uproj_v}
		\begin{aligned}
			&\frac{1}{n(n-1)}\sum_{i=1}^{n}\sum_{j=1,j\neq i}^{n}v_{K}(X_i,Y_j,Z_i)\\
			=&\frac{1}{n}\sum_{t=1}^{n}\mathbb{E}[v_K(X_t,Y_t,Z_t)r_0(Y_t,Z_t)\pi_0(X_t,Y_t,Z_t)|X_t,Z_t]
            \\
            &+\frac{1}{n}\sum_{t=1}^{n}\mathbb{E}[v_K(X_t,Y_t,Z_t)r_0(Y_t,Z_t)\pi_0(X_t,Y_t,Z_t)|Y_t]\\
			&-\mathbb{E}[v_K(X,Y,Z)r_0(Y,Z)\pi_0(X,Y,Z)]
			+O_P\left( \frac{\sqrt{\xi_K}K^{1/4}}{n^{(2-\epsilon)/2}}\right).
		\end{aligned}
	\end{equation}
 \end{lemma}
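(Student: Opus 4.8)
The plan is to treat each double sum as a symmetric second-order $U$-statistic and apply the Hoeffding decomposition \eqref{Un_decomposition}, reading off the first-order projection terms and controlling the degenerate remainder via Lemma \ref{lemma:denker1983u}. For \eqref{uproj_u}, summing over ordered pairs symmetrizes automatically, so
\begin{equation*}
\frac{1}{n(n-1)}\sum_{i=1}^n\sum_{j\neq i}u_{K_0}(Y_i,Z_j)=\tfrac12\binom{n}{2}^{-1}\sum_{i<j}h^u(O_i,O_j),\qquad h^u(O_i,O_j):=u_{K_0}(Y_i,Z_j)+u_{K_0}(Y_j,Z_i),
\end{equation*}
i.e. this equals $\tfrac12 U_n$ for the kernel $h^u$; likewise \eqref{uproj_v} equals $\tfrac12 U_n$ with $h^v(O_i,O_j):=v_K(X_i,Y_j,Z_i)+v_K(X_j,Y_i,Z_j)$. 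Applying \eqref{Un_decomposition} with $m=2$ gives, in both cases,
\begin{equation*}
\tfrac12 U_n=\tfrac12\theta+\frac{1}{n}\sum_{t=1}^{n}\bigl(h_1(O_t)-\theta\bigr)+\tfrac12 R_n=\frac{1}{n}\sum_{t=1}^{n}h_1(O_t)-\tfrac12\theta+\tfrac12 R_n,
\end{equation*}
so matters reduce to (a) identifying $\tfrac12\theta$ and $h_1$ with the stated conditional expectations and (b) showing $R_n$ has the claimed order.

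The algebraic heart is step (a), and it rests on density-ratio reweighting. Since the Hoeffding $\theta$ and $h_1$ integrate against the \emph{product} of marginals, the key fact is that $r_0$ converts the joint law to the product law: for integrable $g$, $\mathbb{E}[g(Y,Z)r_0(Y,Z)\mid Y=y]=\int g(y,z)f_Z(z)\,dz$, $\mathbb{E}[g(Y,Z)r_0(Y,Z)\mid Z=z]=\int g(y,z)f_Y(y)\,dy$, and $\mathbb{E}[g(Y,Z)r_0(Y,Z)]=\int g(y,z)f_Y(y)f_Z(z)\,dy\,dz$. For \eqref{uproj_u} these yield $\tfrac12\theta=\mathbb{E}[u_{K_0}(Y,Z)r_0(Y,Z)]$ and $h_1(O_t)=\mathbb{E}[u_{K_0}(Y_t,Z_t)r_0(Y_t,Z_t)\mid Y_t]+\mathbb{E}[u_{K_0}(Y_t,Z_t)r_0(Y_t,Z_t)\mid Z_t]$, which is exactly the right-hand side. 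For \eqref{uproj_v} one first checks the identity $r_0\pi_0\,f_{X,Y,Z}=f_Y\,f_{X,Z}$, whence $\mathbb{E}[v_K r_0\pi_0\mid X_t,Z_t]=\int v_K(X_t,y,Z_t)f_Y(y)\,dy$ and $\mathbb{E}[v_K r_0\pi_0\mid Y_t]=\int v_K(x,Y_t,z)f_{X,Z}(x,z)\,dx\,dz$, reproducing the two projection terms and the centering constant in \eqref{uproj_v}. These are deterministic measure-change computations.

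The main obstacle is step (b), the remainder rate under dependence. Because $\beta_m=O(\rho^m)$, the hypothesis $\beta_n^{\delta/(2+\delta)}=O(n^{-2+\epsilon})$ of Lemma \ref{lemma:denker1983u} holds for every $\delta>0$, so $\mathbb{E}[\Vert R_n\Vert^2]\le\Gamma_\epsilon^2 n^{-2+\epsilon}s_\delta^2$ with $s_\delta=\sup_{t_1<t_2}\{\mathbb{E}[\Vert h(O_{t_1},O_{t_2})\Vert^{2+\delta}]\}^{1/(2+\delta)}$. I would take $\delta=2$ and, using the sup-norm bounds of Assumption \ref{ass:cK}, write $\Vert h^u\Vert^4\le(2\zeta_{K_0})^2\Vert h^u\Vert^2$, reducing everything to $\mathbb{E}[\Vert h^u\Vert^2]=O(K_0)$. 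This last estimate is where $\beta$-mixing re-enters: expanding $\mathbb{E}[\Vert u_{K_0}(Y_i,Z_j)\Vert^2]=\sum_k\mathbb{E}[u_{K_0,k}(Y_i,Z_j)^2]$, Davydov's inequality (Lemma \ref{Davydov}) together with the bounded-eigenvalue condition on $\Sigma_{K_0}$ (Assumption \ref{ass:eigen}) controls the cross-observation second moments uniformly by $\tr\Sigma_{K_0}=O(K_0)$ plus a geometrically summable mixing correction. Then $s_2^2=O(\zeta_{K_0}\sqrt{K_0})$ and $\Vert R_n\Vert=O_P(n^{-1+\epsilon/2}\sqrt{\zeta_{K_0}}K_0^{1/4})$, the stated order. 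The argument for \eqref{uproj_v} is verbatim with $(u_{K_0},\zeta_{K_0},K_0,\Sigma_{K_0})$ replaced by $(v_K,\xi_K,K,H_K)$, invoking Assumption \ref{ass:cbounded} to transfer the bounded-eigenvalue property from the weighted $H_K$ to the unweighted $\mathbb{E}[v_Kv_K^\top]$ so that $\mathbb{E}[\Vert v_K(X_i,Y_j,Z_i)\Vert^2]=O(K)$.
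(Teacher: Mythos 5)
Your proposal follows essentially the same route as the paper's proof: symmetrize each double sum into a second-order $U$-statistic, apply the Hoeffding decomposition \eqref{Un_decomposition} with $m=2$, identify $\theta$ and the first-order projection $h_1$ via the density-ratio change of measure ($r_0 f_{Y,Z}=f_Yf_Z$ and $r_0\pi_0 f_{X,Y,Z}=f_Yf_{X,Z}$), and control the degenerate remainder by Lemma \ref{lemma:denker1983u} with $\delta=2$ using geometric $\beta$-mixing, yielding $s_2=O(\sqrt{\zeta_{K_0}}K_0^{1/4})$ (resp.\ $O(\sqrt{\xi_K}K^{1/4})$) and hence the stated $O_P(n^{-(2-\epsilon)/2})$ rates. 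If anything you are more careful than the paper about the mismatched-index moments $\mathbb{E}[\Vert u_{K_0}(Y_i,Z_j)\Vert^2]$, though the correct tool there is a Berbee-coupling/total-variation bound intrinsic to $\beta$-mixing rather than Davydov's covariance inequality, which applies to products of functions of separate blocks rather than joint functions of $(O_i,O_j)$.
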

\begin{proof}
  Consider the following second order $U$-statistic:
    \begin{equation*}
        U_{n,K_0}:=\frac{1}{n(n-1)}\sum_{1\leq i<j\leq n}\left\{u_{K_0}(Y_i,Z_j)+u_{K_0}(Y_j,Z_i)
        \right\}.
    \end{equation*}
 Define $$
        s_{\delta,K_0}=\sup_{1\leq i<j\leq n}\left\{\mathbb{E}\left[\left \Vert\frac{1}{2}\left\{ u_{K_0}(Y_i,Z_j)+u_{K_0}(Y_j,Z_i)\right\}\right\Vert ^{2+\delta}\right]\right\}^{1/(2+\delta)} .
    $$
 Under Assumption \ref{ass:cbounded},  since $
    \mathbb{E}\left[\Vert u_{K_0}(Y,Z)\Vert^{2+\delta}\right]\leq\zeta_{K_0}^\delta\mathbb{E}\left[\Vert u_{K_0}(Y,Z)\Vert^2\right]=O\left(\zeta_{K_0}^\delta K_0\right),$
    we have $s_{\delta,K_0}=O\left(\zeta_{K_0}^{\delta/(2+\delta)}K^{1/(2+\delta)}\right)$.  Under Assumption \ref{ass:ciid}, for any $\delta>0$ and $0<\epsilon<1/4$, we have $$\beta_n^{\delta/(2+\delta)}=O\left(\rho^{\frac{n\delta}{2+\delta}}\right)=O\left(n^{-2+\epsilon}\right).$$ Then we can apply (\ref{Un_decomposition}) and Lemma \ref{lemma:denker1983u}
with  $\delta=2$ to obtain
\begin{equation*}
	\begin{aligned} 
		U_{n,K_0}
		=&\frac{1}{n}\sum_{t=1}^{n}	\int_{\mathcal{Z}}u_{K_0}(Y_t,z)f_Z(z)dz+\frac{1}{n}\sum_{t=1}^{n}\int_{\mathcal{Y}}u_{K_0}(y,Z_t)f_Y(y)dy\\
        &-\mathbb{E}\left[u_{K_0}(Y,Z)r_0(Y,Z)\right]+O_P\left( \frac{s_{\delta,K_0}}{n^{(2-\epsilon)/2}}\right)\\
  =&\frac{1}{n}\sum_{t=1}^{n}	\mathbb{E}\left[u_{K_0}(Y_t,Z_t)r_0(Y_t,Z_t)|Y_t\right]+\frac{1}{n}\sum_{t=1}^{n}	\mathbb{E}\left[u_{K_0}(Y_t,Z_t)r_0(Y_t,Z_t)|Z_t\right]\\
  &-\mathbb{E}\left[u_{K_0}(Y,Z)r_0(Y,Z)\right]+O_P\left( \frac{\sqrt{\zeta_{K_0}}K_0^{1/4}}{n^{(2-\epsilon)/2}}\right),
	\end{aligned}
\end{equation*}
which gives the result \eqref{uproj_u}.

Similarly, for establishing (\ref{uproj_v}), we define
$$
        s'_{\delta,K}=\sup_{1\leq i<j\leq n}\left\{\mathbb{E}\left[\left \Vert\frac{1}{2}\left\{ v_{K}(X_i,Y_j,Z_i)+v_{K}(X_j,Y_i,Z_j)\right\}\right\Vert ^{2+\delta}\right]\right\}^{1/(2+\delta)}.
$$
Again, by applying (\ref{Un_decomposition}) and Lemma \ref{lemma:denker1983u} with $\delta=2$, we  have
\begin{equation*}
		\begin{aligned}
			&\frac{1}{n(n-1)}\sum_{i=1}^{n}\sum_{j=1,j\neq i}^{n}v_{K}(X_i,Y_j,Z_i)\\
   =&\frac{1}{n}\sum_{t=1}^{n}\left\{\int v_K(X_t,y,Z_t)f_Y(y)dy
+\int v_K(x,Y_t,z)f_{X,Z}(x,z)dxdz\right\}\\
  & -\mathbb{E}[v_K(X,Y,Z)r_0(Y,Z)\pi_0(X,Y,Z)]
			+O_P\left(\frac{s'_{\delta,K}}{n^{(2-\epsilon)/2}}\right)\\
			=&\frac{1}{n}\sum_{t=1}^{n}\mathbb{E}[v_K(X_t,Y_t,Z_t)r_0(Y_t,Z_t)\pi_0(X_t,Y_t,Z_t)|X_t,Z_t]\\
            &+\frac{1}{n}\sum_{t=1}^{n}\mathbb{E}[v_K(X_t,Y_t,Z_t)r_0(Y_t,Z_t)\pi_0(X_t,Y_t,Z_t)|Y_t]\\
			&-\mathbb{E}[v_K(X,Y,Z)r_0(Y,Z)\pi_0(X,Y,Z)]
			+O_P\left( \frac{\sqrt{\xi_K}K^{1/4}}{n^{(2-\epsilon)/2}}\right),
		\end{aligned}
	\end{equation*}
where the second equality holds by noting $r_0(y,z)\pi_0(x,y,z) = f_{X,Z}(x,z)f_Y(y)/f_{X,Y,Z}(x,y,z)$.
\end{proof}

For any square-integrable function $\phi(x,y,z)$, we define  
	\begin{align}
		\text{proj}^{wls}_{v_K}\phi(X,Y,Z)
		:=&\mathbb{E}\left[r_0(Y,Z)\phi(X,Y,Z)v_K^\top(X,Y,Z)\right]\notag \\
  &\times \mathbb{E}\left[r_0(Y,Z)v_K(X,Y,Z)v_K^\top(X,Y,Z)\right]^{-1}v_K(X,Y,Z),\label{def:rojphi}
	\end{align}
which is the weighted least-square projection (w.r.t. the norm $L^2(r_0(y,z)dF_{X,Y,Z}(x,y,z))$) of $\phi(x,y,z)$  on the space linearly spanned by $v_K(x,y,z)$. Define
	\begin{equation}
			\begin{aligned}\label{def:phiK}
				\phi_K(X_t,Y_t,Z_t)
				:=&\mathbb{E}\left[r_0(Y_t,Z_t)\pi_{0}(X_t,Y_t,Z_t)\text{proj}^{wls}_{v_K}\phi(X_t,Y_t,Z_t)|X_t,Z_t\right]\\
                &+\mathbb{E}\left[r_0(Y_t,Z_t)\pi_{0}(X_t,Y_t,Z_t)\text{proj}^{wls}_{v_K}\phi(X_t,Y_t,Z_t)|Y_t\right]\\	&-\mathbb{E}\left[r_0(Y_t,Z_t)\pi_K^*(X_t,Y_t,Z_t)\text{proj}^{wls}_{v_K}\phi(X_t,Y_t,Z_t)|Y_t\right]\\
                &-\mathbb{E}\left[r_0(Y_t,Z_t)\pi_K^*(X_t,Y_t,Z_t)\text{proj}^{wls}_{v_K}\phi(X_t,Y_t,Z_t)|Z_t\right]\\
				&+\mathbb{E}[r_0(Y,Z)\pi_K^*(X,Y,Z)\text{proj}^{wls}_{v_K}\phi(X,Y,Z)]\\
			&-r_0(Y_t,Z_t)\pi_K^*(X_t,Y_t,Z_t)\text{proj}^{wls}_{v_K}\phi(X_t,Y_t,Z_t).		
			\end{aligned}
		\end{equation}
It is worth noting that $\mathbb{E}[\phi_K(X_t,Y_t,Z_t)]=0$ since $\mathbb{E}[r_0(Y_t,Z_t)\pi_{0}(X_t,Y_t,Z_t)\text{proj}^{wls}_{v_K}\phi(X_t,Y_t,Z_t)]\\=\mathbb{E}[r_0(Y_t,Z_t)\pi_K^*(X_t,Y_t,Z_t)\text{proj}^{wls}_{v_K}\phi(X_t,Y_t,Z_t)]$ using the definition of $\text{proj}^{wls}_{v_K}\phi(X,Y,Z)$ and (\ref{proj}).
		\begin{lemma}\label{lemma_inf}
			Under  Assumptions  \ref{ass:ciid}-\ref{ass:proj_error},  suppose $\phi(X,Y,Z)$ is a function satisfying $\mathbb{E}\left[|\phi(X,Y,Z)|^{4+\eta}\right]\\<\infty$ for $\eta$ defined in Assumption \ref{ass:eigen}(iii). For  any fixed $K$ and any $0<\epsilon<1/4$,  we have
           \begin{small}
			\begin{equation*}
				\begin{aligned}
					&\frac{1}{n}\sum_{t=1}^{n}\left\{\widehat\pi_{K,K_0}(X_t, Y_t, Z_t)-\pi_K^*(X_t, Y_t, Z_t)\right\}r_0\left(Y_t,Z_t \right) \phi(X_t, Y_t, Z_t)\\
					=&\frac{1}{n}\sum_{t=1}^{n}\phi_K(X_t, Y_t, Z_t)	+O_P\left(\sqrt{\frac{K}{n}}\left\{\sqrt{\frac{K_0}{n}}+K_0^{-\omega_r}+\sqrt{\frac{K}{n}}\right\} \left\{\mathbb{E}\left[|\phi(X,Y,Z)|^{4+\eta}\right]\right\}^{\frac{1}{4+\eta}}\right)\\
                    &+O_P\left(\left\{K_0^{-\omega_{r}}+\frac{\sqrt{\xi_K}K^{1/4}}{n^{(2-\epsilon)/2}}+\frac{\sqrt{\zeta_{K_0}}K_0^{3/4}}{n^{(2-\epsilon)/2}}+\frac{\xi_{K}K_0}{n}+ \frac{\xi_{K}K}{n}+\sqrt{K}K_0^{-\omega_0}\right\}\cdot \sqrt{\mathbb{E}[|\phi(X,Y,Z)|^2]}\right)
				\end{aligned}
			\end{equation*}
            \end{small}
		where the constant included in $O_P(\cdot)$  does not depend on $K$.
		\end{lemma}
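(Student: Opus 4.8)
The plan is to reduce the statement to a Bahadur (influence-function) representation of $\widehat{\boldsymbol{\beta}}_{K,K_0}-\boldsymbol{\beta}^*_K$ contracted against a fixed population moment vector. First I would write $\widehat\pi_{K,K_0}(x,y,z)-\pi_K^*(x,y,z)=(\widehat{\boldsymbol{\beta}}_{K,K_0}-\boldsymbol{\beta}^*_K)^\top v_K(x,y,z)$, so that the left-hand side equals $(\widehat{\boldsymbol{\beta}}_{K,K_0}-\boldsymbol{\beta}^*_K)^\top\widehat g_\phi$ with $\widehat g_\phi:=\frac1n\sum_{t=1}^n v_K(X_t,Y_t,Z_t)r_0(Y_t,Z_t)\phi(X_t,Y_t,Z_t)$ and population analogue $g_\phi:=\mathbb E[v_K r_0\phi]=H_Kc$. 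Using the normal equations, $\widehat H_{K,K_0}\boldsymbol{\beta}^*_K=\frac1n\sum_t\widehat r_{K_0}v_K\pi_K^*$, hence $\widehat{\boldsymbol{\beta}}_{K,K_0}-\boldsymbol{\beta}^*_K=\widehat H_{K,K_0}^{-1}W$ with $W:=\widehat h_K-\frac1n\sum_t\widehat r_{K_0}(Y_t,Z_t)v_K(X_t,Y_t,Z_t)\pi_K^*(X_t,Y_t,Z_t)$, so the target equals $W^\top\widehat H_{K,K_0}^{-1}\widehat g_\phi$. Setting $c:=H_K^{-1}\mathbb E[r_0\phi v_K]$ so that $c^\top v_K=\text{proj}^{wls}_{v_K}\phi$ by \eqref{def:rojphi}, I would show $\widehat H_{K,K_0}^{-1}\widehat g_\phi=c+\text{(small)}$ via Lemma \ref{lemma:proj} and Theorem \ref{th_crate}; since $W=\widehat H_{K,K_0}(\widehat{\boldsymbol{\beta}}_{K,K_0}-\boldsymbol{\beta}^*_K)$ is of the order in \eqref{prop_beta}, the replacement error $W^\top(\widehat H_{K,K_0}^{-1}\widehat g_\phi-c)$ is a product of rates and lands in the first stated remainder (it carries the $\{\mathbb E[|\phi|^{4+\eta}]\}^{1/(4+\eta)}$ factor through $\widehat g_\phi-g_\phi$). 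This reduces the problem to linearizing the scalar $c^\top W$.

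The core step is the Hajek linearization of $c^\top W$, which splits into the $U$-statistic part $c^\top\widehat h_K$ and the weighted plug-in part $\frac1n\sum_t\widehat r_{K_0}(c^\top v_K)\pi_K^*$. For the first I would invoke \eqref{uproj_v} of Lemma \ref{uproj}, whose degenerate remainder is bounded through Lemma \ref{lemma:denker1983u} and contributes the $\sqrt{\xi_K}K^{1/4}n^{-(2-\epsilon)/2}$ term, yielding the two $\pi_0$-conditional-expectation summands. For the plug-in part I would decompose $\widehat r_{K_0}=r_0+(\widehat r_{K_0}-r_0)$: the $r_0$-piece is a sample mean producing the diagonal summand $-r_0\pi_K^*\text{proj}^{wls}_{v_K}\phi$, while the estimation-error piece is linearized through $\widehat{\boldsymbol{\gamma}}_{K_0}-\boldsymbol{\gamma}^*_{K_0}=\Sigma_{K_0}^{-1}\frac1n\sum_t\{\mathbb E[u_{K_0}r_0|Y_t]+\mathbb E[u_{K_0}r_0|Z_t]-u_{K_0}r_0-\mathbb E[u_{K_0}r_0]\}+\text{rem}$, itself read off from \eqref{uproj_u}. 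Here the matrix $\frac1n\sum_t v_K\pi_K^* u_{K_0}^\top$ is replaced by $\mathbb E[v_K\pi_K^* u_{K_0}^\top]$ (Lemma \ref{lemma:proj}), and its $\Sigma_{K_0}^{-1}u_{K_0}$-contraction is identified with $\mathbb E[v_K\pi_K^*\mid Y,Z]$ up to the $\sqrt{K}K_0^{-\omega_0}$ projection defect of Assumption \ref{ass:proj_error}, producing the remaining $\pi_K^*$-conditional summands. Collecting all pieces, the deterministic constants cancel by the weighted-projection orthogonality $\mathbb E[v_K r_0\pi_0]=\mathbb E[v_K r_0\pi_K^*]$, leaving $c^\top W=\frac1n\sum_t\phi_K(X_t,Y_t,Z_t)+\text{remainder}$ with $\phi_K$ as in \eqref{def:phiK}.

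It then remains to check that every discarded term is absorbed into the two stated bounds with $O_P$-constants free of $K$. I would group the sources as: the sieve biases $r_{K_0}^*-r_0$ and $\boldsymbol{\beta}_K-\boldsymbol{\beta}^*_K$ give the $K_0^{-\omega_r}$ factor; the two second-order $U$-statistic remainders give $\sqrt{\xi_K}K^{1/4}n^{-(2-\epsilon)/2}$ and $\sqrt{\zeta_{K_0}}K_0^{3/4}n^{-(2-\epsilon)/2}$ by Lemma \ref{lemma:denker1983u}; the Gram-matrix fluctuations \eqref{eq:v_eigen}--\eqref{eq:vu_eigen} together with $\xi_K$ give $\xi_KK_0/n$ and $\xi_KK/n$; the projection defect gives $\sqrt{K}K_0^{-\omega_0}$; and the cross-products $(\widehat{\boldsymbol{\beta}}_{K,K_0}-\boldsymbol{\beta}^*_K)^\top(\widehat g_\phi-g_\phi)$ and the analogous $\widehat{\boldsymbol{\gamma}}_{K_0}$-cross terms give $\sqrt{K/n}\{\sqrt{K_0/n}+K_0^{-\omega_r}+\sqrt{K/n}\}$, scaled by $\{\mathbb E[|\phi|^{4+\eta}]\}^{1/(4+\eta)}$ through a Davydov/Cauchy--Schwarz bound (Lemma \ref{Davydov}) and the moment control of Assumption \ref{ass:eigen}(iii). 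The uniform eigenvalue bounds (Assumption \ref{ass:eigen}) and the boundedness of the ratios (Assumption \ref{ass:cbounded}) are what keep the $O_P$-constants independent of $K$.

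The main obstacle is the rigorous propagation of the \emph{first-stage} error $\widehat r_{K_0}-r_0$ through both the weighted Gram matrix $\widehat H_{K,K_0}$ and the weighted plug-in term, carried out uniformly as the two sieve dimensions $K,K_0$ diverge. Unlike a single-stage sieve estimator, $\widehat\pi_{K,K_0}$ inherits an extra $\Sigma_{K_0}^{-1}$-mediated influence contribution, and showing that its linearization reproduces \emph{exactly} the conditional-expectation structure of $\phi_K$---rather than leaving a non-negligible first-order residual---is precisely what forces Assumption \ref{ass:proj_error} (to drive the $u_{K_0}$-projection of $\mathbb E[v_K\pi_K^*\mid Y,Z]$ to its target at rate $\sqrt{K}K_0^{-\omega_0}$) and Assumption \ref{ass:H0} (to render the degenerate-$U$-statistic and matrix-concentration remainders negligible relative to $n^{-1/2}$). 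Maintaining all these bounds uniformly in $K$ is the delicate bookkeeping that dominates the argument.
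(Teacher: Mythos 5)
Your overall architecture is the same as the paper's: write the left-hand side as $(\widehat{\boldsymbol{\beta}}_{K,K_0}-\boldsymbol{\beta}^*_K)^\top\widehat g_\phi$, control $\widehat g_\phi-\mathbb{E}[r_0\phi v_K]$ by a Davydov/Cauchy--Schwarz argument (this is literally the paper's opening step), and linearize the contraction against the population vector via the two-stage influence-function machinery (Lemma \ref{uproj}, the first-stage $\boldsymbol{\gamma}$-expansion, Assumption \ref{ass:proj_error}); the paper simply cites the representation \eqref{betahat-*_H1n} it has already established for Lemma \ref{prop:vtilde}, while you re-derive it inline, contracted against $c$. The genuine problem is that your first-stage linearization and your final ``collecting all pieces'' step are mutually inconsistent. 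You expand $\widehat{\boldsymbol{\gamma}}_{K_0}-\boldsymbol{\gamma}^*_{K_0}$ by the standard delta method for $\widehat{\Sigma}_{K_0}^{-1}\widehat b_{K_0}$, i.e.\ \emph{including} the Gram-matrix fluctuation, which contributes the summand $-\{u_{K_0}(Y_t,Z_t)r_0(Y_t,Z_t)-\mathbb{E}[u_{K_0}r_0]\}$; the paper's \eqref{gamma}, which is what produces the $\phi_K$ of \eqref{def:phiK}, keeps only the $\widehat b_{K_0}$-part $\frac{1}{n}\sum_t\{\mathbb{E}[u_{K_0}r_0|Y_t]+\mathbb{E}[u_{K_0}r_0|Z_t]\}-2\mathbb{E}[u_{K_0}r_0]$. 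The two expansions differ by a mean-zero term of exact order $O_P(\sqrt{K_0/n})$. Pushing \emph{your} expansion through the contraction $\mathbb{E}[v_K\pi_K^*u_{K_0}^\top](\cdot)$ and Assumption \ref{ass:proj_error} replaces the constant $+\mathbb{E}[r_0\pi_K^*\,\text{proj}^{wls}_{v_K}\phi]$ appearing in \eqref{def:phiK} by the random term $+\mathbb{E}[\pi_K^*\,\text{proj}^{wls}_{v_K}\phi\mid Y_t,Z_t]\,r_0(Y_t,Z_t)$. The difference between these two candidate $\phi_K$'s is a nondegenerate mean-zero average of exact order $n^{-1/2}$, whereas for fixed $K$ every remainder allowed in the statement is $o_P(n^{-1/2})$ under Assumptions \ref{ass:H0}--\ref{ass:proj_error}; so it cannot be absorbed, and your pieces do not assemble to the $\phi_K$ of \eqref{def:phiK}.

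This is not a bookkeeping issue, and an exact computation shows which expansion a correct representation must follow. Because the span of $u_{K_0}$ contains the constants, contracting the balancing identity \eqref{eq:balance_r} with the coefficient vector of the constant function gives $\frac{1}{n}\sum_{t}\widehat r_{K_0}(Y_t,Z_t)=1$ exactly; taking $K=1$ and $v_K\equiv 1$ then gives $\widehat H_{K,K_0}=\widehat h_K=1$, hence $\widehat{\boldsymbol{\beta}}_{K,K_0}=\boldsymbol{\beta}^*_K=1$ identically, so every first-order term of a valid representation must cancel. Your influence function does cancel in this case: using $\mathbb{E}[r_0\pi_0|X,Z]=\mathbb{E}[r_0\pi_0|Y]=\mathbb{E}[r_0|Y]=\mathbb{E}[r_0|Z]=1$, it reduces to $\mathbb{E}[r_0\phi]\{1+1-1-1+r_0-r_0\}=0$; by contrast, the $\phi_K$ of \eqref{def:phiK} reduces to $\mathbb{E}[r_0\phi]\{1-r_0(Y_t,Z_t)\}\not\equiv 0$. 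So to produce a coherent proof you must either justify discarding the $(\widehat{\Sigma}_{K_0}-\Sigma_{K_0})\boldsymbol{\gamma}^*_{K_0}$ fluctuation (i.e.\ adopt \eqref{gamma}), which the exact check indicates is not actually available, or keep your (standard) expansion and prove the representation with the correspondingly corrected $\phi_K$. As written, the proposal claims an endpoint that its own steps do not reach.
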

		\begin{proof}
			 Under Assumptions \ref{ass:ciid}, \ref {ass:cbounded} and \ref{ass:eigen}, using  Lemma \ref{Davydov}, we have
				\begin{align*}					&\mathbb{E}\left[\left\Vert\frac{1}{n}\sum_{t=1}^{n}\left\{v_K(X_t, Y_t, Z_t)r_0\left(Y_t,Z_t \right)\phi(X_t, Y_t, Z_t)-\mathbb{E} \left[v_K(X,Y,Z)r_0(Y,Z)\phi(X,Y,Z)\right]\right\}\right\Vert^2\right]\\
     =&\frac{1}{n}\mathbb{E}\left[\left\Vert v_K(X, Y, Z)r_0\left(Y,Z \right)\phi(X, Y, Z)-\mathbb{E}\left[v_K(X, Y, Z)r_0\left(Y,Z \right)\phi(X, Y, Z)\right]\right\Vert^2\right]\\
     &+\frac{2}{n^2}\sum_{k=1}^{K}\sum_{t=1}^{n}\sum_{\tau=1}^{n-t}Cov\big(v_{K,k}(X_t,Y_t,Z_t)r_0(Y_t,Z_t)\phi(X_t,Y_t,Z_t), \\
     &\qquad\qquad\qquad \qquad\qquad v_{K,k}(X_{t+\tau},Y_{t+\tau},Z_{t+\tau})r_0(Y_{t+\tau},Z_{t+\tau})\phi(X_{t+\tau},Y_{t+\tau},Z_{t+\tau})\big)\\
     \leq&\frac{4(4+\eta)}{ \eta n^2 }\sum_{k=1}^{K}\sum_{t=1}^{n}\sum_{\tau =1}^{n-t}\left\{\mathbb{E}\left[| v_{K,k}(X_t,Y_t,Z_t)r_0(Y_t,Z_t)\phi(X_t,Y_t,Z_t)|^{2+\frac{\eta}{2}}\right]\right\}^{\frac{4}{4+\eta}}\beta_{\tau}^{\frac{\eta}{4+\eta}}\\
     &+O\left(\frac{\xi_K^2}{n}\cdot \mathbb{E}\left[|\phi(X,Y,Z)|^2\right] \right)
     \\ 
     \leq&\frac{4(4+\eta)}{ \eta n^2 }\cdot\sup_{(y,z)\in\mathcal{Y}\times\mathcal{Z}}r_0^2(y,z)\cdot\sum_{k=1}^{K}\sum_{t=1}^{n}\sum_{\tau =1}^{n-t}\bigg\{\left\{\mathbb{E}\left[| v_{K,k}(X_t,Y_t,Z_t)|^{4+\eta}\right]\right\}^{\frac{2}{4+\eta}}\\
     &\left\{\mathbb{E}\left[|\phi(X_t,Y_t,Z_t)|^{4+\eta}\right]\right\}^{\frac{2}{4+\eta}}\beta_{\tau}^{\frac{\eta}{4+\eta}}\bigg \}+O\left(\frac{\xi_K^2}{n}\cdot \mathbb{E}\left[|\phi(X,Y,Z)|^2\right] \right)\\
     \leq&\frac{4(4+\eta)C^{\frac{2}{4+\eta}}}{ \eta n^2 }\cdot\sup_{(y,z)\in\mathcal{Y}\times\mathcal{Z}}r_0^2(y,z)\cdot\sum_{k=1}^K \left\{\mathbb{E}\left[|\phi(X_t,Y_t,Z_t)|^{4+\eta}\right]\right\}^{\frac{2}{4+\eta}}\sum_{t=1}^{n}\sum_{\tau =1}^{n-t}\beta_{\tau}^{\frac{\eta}{4+\eta}}\\
     &+O\left(\frac{\xi_K^2}{n}\cdot \mathbb{E}\left[|\phi(X,Y,Z)|^2\right] \right)\\
     =&O\left(\frac{K}{n}\left\{\mathbb{E}\left[|\phi(X_t,Y_t,Z_t)|^{4+\eta}\right]\right\}^{\frac{2}{4+\eta}}\right)+O\left(\frac{\xi_K^2}{n}\cdot \mathbb{E}\left[|\phi(X,Y,Z)|^2\right] \right),
				\end{align*}
    where $C$ is a universal constant arising from Assumption \ref{ass:eigen}. With this result, we have        
				\begin{align}
					&\frac{1}{n}\sum_{t=1}^{n}\left\{\widehat\pi_{K,K_0}(X_t, Y_t, Z_t)-\pi_K^*(X_t, Y_t, Z_t)\right\}r_0\left(Y_t,Z_t \right) \phi(X_t, Y_t, Z_t)\notag\\
					=&\frac{1}{n}\sum_{t=1}^{n}\bigg\{(\widehat\pi_{K,K_0}(X_t, Y_t, Z_t)-\pi_K^*(X_t, Y_t, Z_t))r_0\left(Y_t,Z_t \right)\phi(X_t, Y_t, Z_t)\notag\\
					&\qquad-\int(\widehat\pi_{K,K_0}(x,y,z)-\pi_K^*(x,y,z))r_0\left(y,z \right)\phi(x,y,z)dF_{X,Y,Z}(x,y,z)\bigg\}\notag\\
					&+\int(\widehat\pi_{K,K_0}(x,y,z)-\pi_K^*(x,y,z))r_0(y,z)\phi(x,y,z)dF_{X,Y,Z}(x,y,z)\notag\\
					=&\frac{1}{n}\sum_{t=1}^{n}\left\{v_K^\top(X_t, Y_t, Z_t)r_0\left(Y_t,Z_t \right)\phi(X_t, Y_t, Z_t)-\mathbb{E}\left[ v_K^\top(X,Y,Z)r_0(Y,Z)\phi(X,Y,Z)\right]\right\}\notag\\
     &\times \left( \widehat{\boldsymbol{\beta}}_{K,K_0}-\boldsymbol{\beta}^*_K\right) \notag\\
					&+\int(\widehat\pi_{K,K_0}(x,y,z)-\pi_K^*(x,y,z))r_0(y,z)\phi(x,y,z)dF_{X,Y,Z}(x,y,z)\notag\\
					\leq&\left\Vert\frac{1}{n}\sum_{t=1}^{n}\left\{v_K(X_t, Y_t, Z_t)r_0\left(Y_t,Z_t \right)\phi(X_t, Y_t, Z_t)-\mathbb{E}\left[ v_K(X,Y,Z)r_0(Y,Z)\phi(X,Y,Z)\right]\right\}\right\Vert\notag\\
					&\cdot\left\Vert\widehat{\boldsymbol{\beta}}_{K,K_0}-\boldsymbol{\beta}^*_K\right\Vert+\int(\widehat\pi_{K,K_0}(x,y,z)-\pi_K^*(x,y,z))r_0(y,z)\phi(x,y,z)dF_{X,Y,Z}(x,y,z)\notag\\
					=&\int(\widehat\pi_{K,K_0}(x,y,z)-\pi_K^*(x,y,z))r_0(y,z)\phi(x,y,z)dF_{X,Y,Z}(x,y,z)\notag\\
     &+O_P\left(\sqrt{\frac{K}{n}}\left\{\sqrt{\frac{K_0}{n}}+K_0^{-\omega_r}+\sqrt{\frac{K}{n}}\right\} \left\{\mathbb{E}\left[|\phi(X_t,Y_t,Z_t)|^{4+\eta}\right]\right\}^{\frac{1}{4+\eta}}\right)\notag\\
     &+O_P\left(\frac{\xi_K}{\sqrt{n}} \left\{\sqrt{\frac{K_0}{n}}+K_0^{-\omega_r}+\sqrt{\frac{K}{n}}\right\} \sqrt{\mathbb{E}\left[|\phi(X,Y,Z)|^2\right]} \right), \label{Q2n1}
				\end{align}
			where the last equality comes from (\ref{prop_beta}). 
            Using (\ref{betahat-*_H1n}), for any $0<\epsilon<1/4$,  we have
            \begin{small}
				\begin{align*}
					&\int(\widehat\pi_{K,K_0}(x,y,z)-\pi_K^*(x,y,z))r_0(y,z)\phi(x,y,z)dF_{X,Y,Z}(x,y,z)\\
					=&\int v_K^\top(x,y,z)(\widehat{\boldsymbol{\beta}}_{K,K_0}-\boldsymbol{\beta}^*_K)r_0(y,z)\phi(x,y,z)dF_{X,Y,Z}(x,y,z)\\
					=&\int v_K^\top(x,y,z)H_K^{-1}\cdot\frac{1}{n}\sum_{t=1}^{n}\bigg\{\mathbb{E}\left[v_K(X_t,Y_t,Z_t)r_0(Y_t,Z_t)\pi_{0}(X_t,Y_t,Z_t)|X_t,Z_t\right]\\
					&+\mathbb{E}\left[v_K(X_t,Y_t,Z_t)r_0(Y_t,Z_t)\pi_{0}(X_t,Y_t,Z_t)|Y_t\right]-\mathbb{E}\left[v_K(X_t,Y_t,Z_t)r_0(Y_t,Z_t)\pi_K^*(X_t,Y_t,Z_t)|Y_t\right]\\
					&-\mathbb{E}\left[v_K(X_t,Y_t,Z_t)r_0(Y_t,Z_t)\pi_K^*(X_t,Y_t,Z_t)|Z_t\right]+\mathbb{E}[v_{K}(X,Y,Z)r_0(Y,Z)\pi_K^*(X,Y,Z)]\\
					&
					-v_{K}(X_t,Y_t,Z_t)r_0(Y_t,Z_t)\pi_K^*(X_t,Y_t,Z_t)\bigg\}	r_0(y,z)\phi(x,y,z)dF_{X,Y,Z}(x,y,z)\\
					&+O_P\left(\left\{K_0^{-\omega_{r}}+\frac{\sqrt{\xi_K}K^{1/4}}{n^{(2-\epsilon)/2}}+\frac{\sqrt{\zeta_{K_0}}K_0^{3/4}}{n^{(2-\epsilon)/2}}+\frac{\xi_{K}K_0}{n}+ \frac{\xi_{K}K}{n}+\sqrt{K}K_0^{-\omega_0}\right\}\cdot \sqrt{\mathbb{E}[|\phi(X,Y,Z)|^2]}\right)\\
     =&\frac{1}{n}\sum_{t=1}^{n}\bigg\{\mathbb{E}\left[\mathbb{E}\left[r_0(Y,Z)v_K^\top(X,Y,Z)\phi(X,Y,Z)\right]H_K^{-1}v_K(X_t,Y_t,Z_t)r_0(Y_t,Z_t)\pi_{0}(X_t,Y_t,Z_t)|X_t,Z_t\right]\\	&+\mathbb{E}\left[\mathbb{E}\left[r_0(Y,Z)v_K^\top(X,Y,Z)\phi(X,Y,Z)\right]H_K^{-1}v_K(X_t,Y_t,Z_t)r_0(Y_t,Z_t)\pi_{0}(X_t,Y_t,Z_t)|Y_t\right]\\
     &-\mathbb{E}\left[\mathbb{E}\left[r_0(Y,Z)v_K^\top(X,Y,Z)\phi(X,Y,Z)\right]H_K^{-1}v_K(X_t,Y_t,Z_t)r_0(Y_t,Z_t)\pi_K^*(X_t,Y_t,Z_t)|Y_t\right]\\
					&-\mathbb{E}\left[\mathbb{E}\left[r_0(Y,Z)v_K^\top(X,Y,Z)\phi(X,Y,Z)\right]H_K^{-1}v_K(X_t,Y_t,Z_t)r_0(Y_t,Z_t)\pi_K^*(X_t,Y_t,Z_t)|Z_t\right]\\&+\mathbb{E}[\mathbb{E}\left[r_0(Y,Z)v_K^\top(X,Y,Z)\phi(X,Y,Z)\right]H_K^{-1}v_{K}(X,Y,Z)r_0(Y,Z)\pi_K^*(X,Y,Z)]\\
     &-\mathbb{E}\left[r_0(Y,Z)v_K^\top(X,Y,Z)\phi(X,Y,Z)\right]H_K^{-1}v_{K}(X_t,Y_t,Z_t)r_0(Y_t,Z_t)\pi_K^*(X_t,Y_t,Z_t)\bigg\}	\\
					&+O_P\left(\left\{K_0^{-\omega_{r}}+\frac{\sqrt{\xi_K}K^{1/4}}{n^{(2-\epsilon)/2}}+\frac{\sqrt{\zeta_{K_0}}K_0^{3/4}}{n^{(2-\epsilon)/2}}+\frac{\xi_{K}K_0}{n}+ \frac{\xi_{K}K}{n}+\sqrt{K}K_0^{-\omega_0}\right\}\cdot \sqrt{\mathbb{E}[|\phi(X,Y,Z)|^2]}\right)\\
   =&\frac{1}{n}\sum_{t=1}^{n}\phi_K(X_t, Y_t, Z_t)\\
			&	+O_P\left(\left\{K_0^{-\omega_{r}}+\frac{\sqrt{\xi_K}K^{1/4}}{n^{(2-\epsilon)/2}}+\frac{\sqrt{\zeta_{K_0}}K_0^{3/4}}{n^{(2-\epsilon)/2}}+\frac{\xi_{K}K_0}{n}+ \frac{\xi_{K}K}{n}+\sqrt{K}K_0^{-\omega_0}\right\}\cdot \sqrt{\mathbb{E}[|\phi(X,Y,Z)|^2]}\right),
    \label{Q2n2}
				\end{align*}	
                \end{small}
		where the last equality holds in light of \eqref{def:rojphi} and \eqref{def:phiK}. 
			Finally, combining this with \eqref{Q2n1}, we can conclude the desired result.
		\end{proof}
		\clearpage

\section{Proof of Lemma \ref{prop:vtilde}}	
		\label{app:lemma_vtilde}
		By the first order condition that $\widehat{H}_{K,K_0}\widehat{\boldsymbol{\beta}}_{K,K_0}=\widehat{h}_K$, we have
	\begin{equation*}
		\widehat{h}_K-\widehat{H}_{K,K_0}\boldsymbol{\beta}_K^*=\widehat{H}_{K,K_0}(\widehat{\boldsymbol{\beta}}_{K,K_0}-\boldsymbol{\beta}^*_K).
	\end{equation*}
	Denote
	\begin{align*}
		L.H.S. = 	\frac{1}{n(n-1)}\sum_{i=1}^{n}\sum_{j=1,j\neq i}^{n}v_{K}(X_i,Y_j,Z_i)-\frac{1}{n}\sum_{t=1}^{n}v_{K}(X_t, Y_t, Z_t)\widehat r_{K_0}(Y_t,Z_t)\pi_K^*(X_t, Y_t, Z_t),
	\end{align*}
	\begin{align*}
		R.H.S. = \frac{1}{n}\sum_{t=1}^{n}v_{K}(X_t, Y_t, Z_t)v_{K}^\top(X_t, Y_t, Z_t)\widehat r_{K_0}(Y_t,Z_t)(\widehat{\boldsymbol{\beta}}_{K,K_0}-\boldsymbol{\beta}^*_K).
	\end{align*}
	Then we have $L.H.S.=R.H.S$.  First, considering R.H.S., 
		\begin{align*}
			R.H.S.=&\frac{1}{n}\sum_{t=1}^{n}v_{K}(X_t, Y_t, Z_t)v_{K}^\top(X_t, Y_t, Z_t)(\widehat r_{K_0}(Y_t,Z_t)-r_0(Y_t,Z_t))(\widehat{\boldsymbol{\beta}}_{K,K_0}-\boldsymbol{\beta}^*_K)\\
			&+\frac{1}{n}\sum_{t=1}^{n}v_{K}(X_t, Y_t, Z_t)v_{K}^\top(X_t, Y_t, Z_t)r_{0}(Y_t,Z_t)(\widehat{\boldsymbol{\beta}}_{K,K_0}-\boldsymbol{\beta}^*_K)\\
			=&\frac{1}{n}\sum_{t=1}^{n}v_{K}(X_t, Y_t, Z_t)v_{K}^\top(X_t, Y_t, Z_t)r_{0}(Y_t,Z_t)(\widehat{\boldsymbol{\beta}}_{K,K_0}-\boldsymbol{\beta}^*_K)\\
			&+O_P\left(\xi_{K} \left\{K_0^{-\omega_r}+\sqrt{\frac{K_0}{n}}\right\}\left\{\sqrt{\frac{K_0}{n}}+K_0^{-\omega_r}+\sqrt{ \frac{K}{n}}\right\}\right),\\
			=& \mathbb{E}\left[r_0(Y,Z)v_K(X,Y,Z)v_K^\top(X,Y,Z)\right]\left( \widehat{\boldsymbol{\beta}}_{K,K_0}-\boldsymbol{\beta}^*_K\right)\\&+O_P\left( \xi_K\sqrt{\frac{K}{n}}\left\{\sqrt{\frac{K_0}{n}}+K_0^{-\omega_{r}}+\sqrt{\frac{K}{n}}\right\}\right) \\
			&+O_P\left(\xi_{K} \left\{K_0^{-\omega_r}+\sqrt{\frac{K_0}{n}}\right\}\left\{\sqrt{\frac{K_0}{n}}+K_0^{-\omega_r}+\sqrt{ \frac{K}{n}}\right\}\right) \\
			=&\mathbb{E}\left[r_0(Y,Z)v_K(X,Y,Z)v_K^\top(X,Y,Z)\right]\left( \widehat{\boldsymbol{\beta}}_{K,K_0}-\boldsymbol{\beta}^*_K\right) \\
            &+O_P\left(\xi_{K} \left\{K_0^{-\omega_r}+\sqrt{\frac{K_0}{n}}+\sqrt{\frac{K}{n}}\right\}^2\right)\\
=&\mathbb{E}\left[r_0(Y,Z)v_K(X,Y,Z)v_K^\top(X,Y,Z)\right]\left( \widehat{\boldsymbol{\beta}}_{K,K_0}-\boldsymbol{\beta}^*_K\right) +O_P\left( \frac{\xi_{K}K_0}{n}+ \frac{\xi_{K}K}{n}\right)
		\end{align*}\label{RHS}
	where the second equality comes from (\ref{vkphi}) and
	\begin{equation*}
		\begin{aligned}
			&\left\Vert\frac{1}{n}\sum_{t=1}^{n}v_{K}(X_t, Y_t, Z_t)v_{K}^\top(X_t, Y_t, Z_t)(\widehat r_{K_0}(Y_t,Z_t)-r_0(Y_t,Z_t))(\widehat{\boldsymbol{\beta}}_{K,K_0}-\boldsymbol{\beta}^*_K)\right\Vert^2\\
			\leq&O_P(1)\cdot\frac{1}{n}\sum_{t=1}^{n}\left| v_{K}^\top(X_t, Y_t, Z_t)(\widehat r_{K_0}(Y_t,Z_t)-r_0(Y_t,Z_t))(\widehat{\boldsymbol{\beta}}_{K,K_0}-\boldsymbol{\beta}^*_K)\right|^2\\
			\leq&O_P(1)\cdot\sup_{(x,y,z)\in\mathcal{X}\times\mathcal{Y}\times \mathcal{Z}}\left|\widehat \pi_{K,K_0}(x,y,z)-\pi_K^*(x,y,z)\right|^2\cdot\frac{1}{n}\sum_{t=1}^{n}\left|\widehat r_{K_0}(Y_t,Z_t)-r_0(Y_t,Z_t)\right|^2\\
			=&O_P\left(\xi_{K}^2\left\{\frac{K_0}{n}+K_0^{- 2\omega_r}+\frac{K}{n}\right\}\left\{K_0^{-2\omega_{r}}+\frac{K_0}{n}\right\} \right),
		\end{aligned}
	\end{equation*}
	and the third equality comes from Assumption \ref{ass:cbounded} (i), (\ref{eq:v_eigen}) and (\ref{prop_beta}). Note that
	\begin{align}
		L.H.S. = &\frac{1}{n(n-1)}\sum_{i=1}^{n}\sum_{j=1,j\neq i}^{n}v_{K}(X_i,Y_j,Z_i)-\frac{1}{n}\sum_{t=1}^{n}r_{0}(Y_t,Z_t)v_K(X_t, Y_t, Z_t)\pi_K^*(X_t, Y_t, Z_t)\label{lhs1}\\
		&-\frac{1}{n}\sum_{t=1}^{n}\left( \widehat r_{K_0}(Y_t,Z_t)-r_{K_0}^*(Y_t,Z_t)\right)v_K(X_t, Y_t, Z_t)\pi_{K}^*(X_t, Y_t, Z_t)\label{lhs2}\\
		&-\frac{1}{n}\sum_{t=1}^{n}\left(  r_{K_0}^*(Y_t,Z_t)-r_0(Y_t,Z_t)\right)v_K(X_t, Y_t, Z_t)\pi_{K}^*(X_t, Y_t, Z_t) \label{lhs3}
	\end{align}
For $(\ref{lhs1})$, using Lemma  \ref{uproj},  for any $0<\epsilon <1/4$,  we have
		\begin{align*}
			(\ref{lhs1})	=&\frac{1}{n}\sum_{t=1}^{n}\mathbb{E}[v_K(X_t,Y_t,Z_t)r_0(Y_t,Z_t)\pi_0(X_t,Y_t,Z_t)|X_t,Z_t]\\
            &+\frac{1}{n}\sum_{t=1}^{n}\mathbb{E}[v_K(X_t,Y_t,Z_t)r_0(Y_t,Z_t)\pi_0(X_t,Y_t,Z_t)|Y_t]\\
            &-\mathbb{E}[v_{K}(X,Y,Z)r_0(Y,Z)\pi_0(X,Y,Z)]\\
            &-\frac{1}{n}\sum_{t=1}^{n}v_{K}(X_t,Y_t,Z_t)r_0(Y_t,Z_t)\pi_K^*(X_t,Y_t,Z_t)	+O_P\left( \frac{\sqrt{\xi_K}K^{1/4}}{n^{(2-\epsilon)/2}}\right).
		\end{align*}
For $(\ref{lhs2})$, we have
\begin{equation}
\label{lhs2_1}
	\begin{aligned}
		&\frac{1}{n}\sum_{t=1}^{n}\left( \widehat r_{K_0}(Y_t,Z_t)-r_{K_0}^*(Y_t,Z_t)\right)v_K(X_t, Y_t, Z_t)\pi_{K}^*(X_t, Y_t, Z_t)\\
		=&\frac{1}{n}\sum_{t=1}^{n}v_K(X_t, Y_t, Z_t)\pi_{K}^*(X_t, Y_t, Z_t)u_{K_0}^\top(Y_t,Z_t)\left( 	\widehat \gamma_{K_0}-\gamma_{K_0}^*\right) \\
		=&\mathbb{E}\left[v_K(X,Y,Z)u_{K_0}^\top(Y,Z)\pi_K^*(X,Y,Z)\right]\left( 	\widehat \gamma_{K_0}-\gamma_{K_0}^*\right) +O_P\left(\frac{\xi_KK_0}{n} \right). 
	\end{aligned}
\end{equation}
	where the second equality comes from  \eqref{prop_gamma}  and
\begin{equation*}
    \begin{aligned}
    &\left\Vert\frac{1}{n}\sum_{t=1}^{n}v_{K}(X_t, Y_t, Z_t)u_{K_0}^\top(Y_t,Z_t)\pi_K^*(X_t, Y_t, Z_t)-\mathbb{E}\left[v_K(X,Y,Z)u_{K_0}^\top(Y,Z)\pi_K^*(X,Y,Z)\right]\right\Vert
\\=&O_P\left(\xi_K\sqrt{\frac{K_0}{n}}\right),   
    \end{aligned}
\end{equation*}
which is similar with \eqref{eq:vu_eigen}. 
Since $\widehat{\boldsymbol{\gamma}}_{K_0}=	\widehat{\Sigma}_{K_0}^{-1}\widehat{b}_{K_0}
$ and $\gamma_{K_0}^*=b_{K_0}=\mathbb{E}\left[u_{K_0}(Y,Z)r_0(Y,Z)\right]$    under the normalization $\mathbb{E}[ u_{K_0}(Y,Z)u^\top_{K_0}(Y,Z)]=I_{K_0\times K_0}$, we have an explicit expression for $ \widehat \gamma_{K_0}-\gamma_{K_0}^*$:
\begin{align*}
	\widehat \gamma_{K_0}-\gamma_{K_0}^*=&\left[\frac{1}{n}\sum_{t=1}^{n}u_{K_0}(Y_t,Z_t)u_{K_0}^\top(Y_t,Z_t)\right]^{-1}\cdot\frac{1}{n(n-1)}\sum_{j=1,j\neq i}^n\sum_{i=1}^nu_{K_0}(Y_i,Z_j)\\
    &-\mathbb{E}\left[u_{K_0}(Y,Z)r_0(Y,Z)\right].
\end{align*}
Again, by Lemma  \ref{uproj}, we have 
	\begin{align*}
		&\frac{1}{n(n-1)}\sum_{j=1,j\neq i}^n\sum_{i=1}^nu_{K_0}(Y_i,Z_j)\\
		=&\frac{1}{n}\sum_{t=1}^{n}	\int_{\mathcal{Z}}u_{K_0}(Y_t,z)f_Z(z)dz+\frac{1}{n}\sum_{t=1}^{n}\int_{\mathcal{Y}}u_{K_0}(y,Z_t)f_Y(y)dy-\mathbb{E}\left[u_{K_0}(Y,Z)r_0(Y,Z)\right]\\&+O_P\left( \frac{\sqrt{\zeta_{K_0}}K_0^{1/4}}{n^{(2-\epsilon)/2}}\right),
	\end{align*}
Then, under the normalization \eqref{eq:orthnormal}, we can obtain
\begin{equation}
\begin{aligned}
	\widehat \gamma_{K_0}-\gamma_{K_0}^*=&\frac{1}{n}\sum_{t=1}^{n}\left\{	\int_{\mathcal{Z}}u_{K_0}(Y_t,z)f_Z(z)dz+\int_{\mathcal{Y}}u_{K_0}(y,Z_t)f_Y(y)dy\right\}-2\mathbb{E}\left[u_{K_0}(Y,Z)r_0(Y,Z)\right]\\&+O_P\left( \frac{\sqrt{\zeta_{K_0}}K_0^{1/4}}{n^{(2-\epsilon)/2}}\right).	\label{gamma}
    \end{aligned}
\end{equation}
Combining \eqref{lhs2_1} with \eqref{gamma}, under the normalization $\mathbb{E}[ u_{K_0}(Y,Z)u^\top_{K_0}(Y,Z)]=I_{K_0\times K_0}$, we have
\begin{small}
	\begin{equation}
	\begin{aligned}\label{rvpi}
		&\frac{1}{n}\sum_{t=1}^{n}\left( \widehat r_{K_0}(Y_t,Z_t)-r_{K_0}^*(Y_t,Z_t)\right)v_K(X_t, Y_t, Z_t)\pi_{K}^*(X_t, Y_t, Z_t)\\
			=&\mathbb{E}\left[v_K(X,Y,Z)u_{K_0}^\top(Y,Z)\pi_K^*(X,Y,Z)\right]\bigg\{
			\frac{1}{n}\sum_{t=1}^{n}\left(\int_{\mathcal{Z}}u_{K_0}(Y_t,z)f_Z(z)dz+\int_{\mathcal{Y}}u_{K_0}(y,Z_t)f_Y(y)dy\right)\\
			&\qquad-2\mathbb{E}\left[u_{K_0}(Y,Z)r_0(Y,Z)\right]\bigg\}+O_P\left( \frac{\sqrt{\zeta_{K_0}}K_0^{3/4}}{n^{(2-\epsilon)/2}}\right)+O_P\left(\frac{\xi_KK_0}{n} \right)\\
			=&\frac{1}{n}\sum_{t=1}^{n}	\int_{\mathcal{Z}}\mathbb{E}\left[v_K(X_t,Y_t,Z_t)\pi_K^*(X_t,Y_t,Z_t)|Y_t,Z_t=z\right]r_0(Y_t,z)f(z|Y_t)dz\\
			&+\frac{1}{n}\sum_{t=1}^{n}\int_{\mathcal{Y}}\mathbb{E}\left[v_K(X_t,Y_t,Z_t)\pi_K^*(X_t,Y_t,Z_t)|Y_t=y,Z_t\right]r_0(y,Z_t)f(y|Z_t)dy\\
			&-2\mathbb{E}\left[v_K(X,Y,Z)\pi_K^*(X,Y,Z)r_0(Y,Z)\right]
		+O_P\left( \frac{\sqrt{\zeta_{K_0}}K_0^{3/4}}{n^{(2-\epsilon)/2}}\right)+O_P\left(\frac{\xi_KK_0}{n} \right)+O_P\left(\sqrt{K}K_0^{-\omega_0}\right)\\
	=&\frac{1}{n}\sum_{t=1}^{n}\left\{\mathbb{E}\left[v_K(X_t,Y_t,Z_t)r_0(Y_t,Z_t)\pi_K^*(X_t,Y_t,Z_t)|Y_t\right]+\mathbb{E}\left[v_K(X_t,Y_t,Z_t)r_0(Y_t,Z_t)\pi_K^*(X_t,Y_t,Z_t)|Z_t\right]\right\}\\&	-2\mathbb{E}\left[v_K(X,Y,Z)\pi_K^*(X,Y,Z)r_0(Y,Z)\right]+O_P\left( \frac{\sqrt{\zeta_{K_0}}K_0^{3/4}}{n^{(2-\epsilon)/2}}\right)+O_P\left(\frac{\xi_KK_0}{n} \right)+O_P\left(\sqrt{K}K_0^{-\omega_0}\right)
	\end{aligned}
	\end{equation}
    \end{small}
where the second equality comes from Assumption \ref{ass:proj_error}.

	For $(\ref{lhs3})$, by (\ref{vkphi}) and Assumption \ref{ass:cpi0-pi*} (i),
	we have $(\ref{lhs3})=O_P\left( K_0^{-\omega_{r}}\right)$.
	Finally, we obtain
			\begin{align*}
					L.H.S.
				=&\frac{1}{n}\sum_{t=1}^{n}\bigg\{\mathbb{E}[v_K(X_t,Y_t,Z_t)r_0(Y_t,Z_t)\pi_0(X_t,Y_t,Z_t)|X_t,Z_t]\\&+\mathbb{E}[v_K(X_t,Y_t,Z_t)r_0(Y_t,Z_t)\pi_0(X_t,Y_t,Z_t)|Y_t]\\
				&-\mathbb{E}[v_{K}(X,Y,Z)r_0(Y,Z)\pi_0(X,Y,Z)]-v_{K}(X_t,Y_t,Z_t)r_0(Y_t,Z_t)\pi_K^*(X_t,Y_t,Z_t)\bigg\}\\
				&	-\frac{1}{n}\sum_{t=1}^{n}\bigg\{\mathbb{E}\left[v_K(X_t,Y_t,Z_t)r_0(Y_t,Z_t)\pi_K^*(X_t,Y_t,Z_t)|Y_t\right]\\
                &+\mathbb{E}\left[v_K(X_t,Y_t,Z_t)r_0(Y_t,Z_t)\pi_K^*(X_t,Y_t,Z_t)|Z_t\right]\bigg\}\\
		&+2\mathbb{E}\left[v_K(X,Y,Z)\pi_K^*(X,Y,Z)r_0(Y,Z)\right]
				\\&+O_P\left(K_0^{-\omega_{r}}  +\frac{\sqrt{\xi_K}K^{1/4}}{n^{(2-\epsilon)/2}}+\frac{\sqrt{\zeta_{K_0}}K_0^{3/4}}{n^{(2-\epsilon)/2}}+\frac{\xi_KK_0}{n} +\sqrt{K}K_0^{-\omega_0}\right),
			\end{align*}
	and from $L.H.S.=R.H.S.$, 
	\begin{equation}	\label{betahat-*_H1n}
		\begin{aligned}
				\widehat{\boldsymbol{\beta}}_{K,K_0}-\boldsymbol{\beta}^*_K
			=&H_K^{-1}\cdot\frac{1}{n}\sum_{t=1}^{n}\widetilde v_{K}(X_t,Y_t,Z_t)\\&+O_P\left(K_0^{-\omega_{r}}  +\frac{\sqrt{\xi_K}K^{1/4}}{n^{(2-\epsilon)/2}}+\frac{\sqrt{\zeta_{K_0}}K_0^{3/4}}{n^{(2-\epsilon)/2}}+\frac{\xi_{K}K_0}{n}+ \frac{\xi_{K}K}{n}+\sqrt{K}K_0^{-\omega_0}\right),
		\end{aligned}
	\end{equation}
where 	
\begin{align*}
	\widetilde v_{K}(X_t,Y_t,Z_t)=&\mathbb{E}\left[v_K(X_t,Y_t,Z_t)r_0(Y_t,Z_t)\pi_{0}(X_t,Y_t,Z_t)|X_t,Z_t\right]\\&+\mathbb{E}\left[v_K(X_t,Y_t,Z_t)r_0(Y_t,Z_t)\pi_{0}(X_t,Y_t,Z_t)|Y_t\right]\\	&-\mathbb{E}\left[v_K(X_t,Y_t,Z_t)r_0(Y_t,Z_t)\pi_K^*(X_t,Y_t,Z_t)|Y_t\right]\\&-\mathbb{E}\left[v_K(X_t,Y_t,Z_t)r_0(Y_t,Z_t)\pi_K^*(X_t,Y_t,Z_t)|Z_t\right]\\
	&+2\mathbb{E}[v_{K}(X,Y,Z)r_0(Y,Z)\pi_K^*(X,Y,Z)]-\mathbb{E}[v_{K}(X,Y,Z)r_0(Y,Z)\pi_0(X,Y,Z)]
	\\
	&-v_{K}(X_t,Y_t,Z_t)r_0(Y_t,Z_t)\pi_K^*(X_t,Y_t,Z_t).
\end{align*}
	Since $H_{K}\boldsymbol{\beta}_K^*=h_K$,  i.e., 
	$
\mathbb{E}\left[r_0(Y,Z)v_K(X,Y,Z)v_K^\top(X,Y,Z)\right]\boldsymbol{\beta}_K^*=\int \mathbb{E}[v_K(X,y,Z)]f_Y(y)dy,
	$
	we have
	\begin{align}\label{proj}
\mathbb{E}\left[r_0(Y,Z)v_K(X,Y,Z)\pi_K^*(X,Y,Z)\right]=\mathbb{E}\left[r_0(Y,Z)v_K(X,Y,Z)\pi_0(X,Y,Z)\right].
	\end{align}
 Then 
 \begin{equation*}
      \begin{aligned}
	\widetilde v_{K}(X_t,Y_t,Z_t)=&\mathbb{E}\left[v_K(X_t,Y_t,Z_t)r_0(Y_t,Z_t)\pi_{0}(X_t,Y_t,Z_t)|X_t,Z_t\right]\\&+\mathbb{E}\left[v_K(X_t,Y_t,Z_t)r_0(Y_t,Z_t)\pi_{0}(X_t,Y_t,Z_t)|Y_t\right]\\	&-\mathbb{E}\left[v_K(X_t,Y_t,Z_t)r_0(Y_t,Z_t)\pi_K^*(X_t,Y_t,Z_t)|Y_t\right]\\&-\mathbb{E}\left[v_K(X_t,Y_t,Z_t)r_0(Y_t,Z_t)\pi_K^*(X_t,Y_t,Z_t)|Z_t\right]\\
	&+\mathbb{E}[v_{K}(X,Y,Z)r_0(Y,Z)\pi_K^*(X,Y,Z)]\\&
	-v_{K}(X_t,Y_t,Z_t)r_0(Y_t,Z_t)\pi_K^*(X_t,Y_t,Z_t).
\end{aligned}
 \end{equation*}
Under Assumptions \ref{ass:cK}-\ref{ass:proj_error},  we have
			\begin{align*}
			\widehat{\boldsymbol{\beta}}_{K,K_0}-\boldsymbol{\beta}^*_K =&\frac{1}{n}\sum_{t=1}^{n}H_{K}^{-1}\widetilde v_{K}(X_t,Y_t,Z_t)+o_P\left(\frac{K^{1/4}}{\sqrt{n}}\right).
		\end{align*}	
		Especially, when $H_0$ holds, $\pi_0(x,y,z)=\pi_K^*(x,y,z)\equiv 1$ a.s.,
	\begin{equation*}
\begin{aligned}
	\widetilde v_{K}(X_t,Y_t,Z_t)
=&\mathbb{E}\left[v_K(X_t,Y_t,Z_t)r_0(Y_t,Z_t)|X_t,Z_t\right]-\mathbb{E}\left[v_K(X_t,Y_t,Z_t)r_0(Y_t,Z_t)|Z_t\right]\\
&+\mathbb{E}\left[v_K(X,Y,Z)r_0(Y,Z)\right]-v_{K}(X_t,Y_t,Z_t)r_0(Y_t,Z_t).
\end{aligned}
\end{equation*}

		\clearpage
	\section{Proof of Theorem \ref{th_crate}} \label{sec:th_crate}
	By Assumption \ref{ass:eigen}, without loss of generality, we can assume that the sieve bases $u_{K_0}(y,z)$ and $v_{K}(x,y,z)$ are orthonormal:
		\begin{align}\label{eq:orthnormal}
			\mathbb{E}\left[u_{K_0}(Y,Z)u_{K_0}^\top(Y,Z)\right]=I_{K_0\times K_0}\ \text{and} \ \mathbb{E}[v_{K}(X,Y,Z)v_{K}^\top(X,Y,Z)]=I_{K\times K}.
		\end{align}
	We begin to prove part (i). Note that
			\begin{align}
				&\Vert\widehat{\boldsymbol{\gamma}}_{K_0}-\boldsymbol{\gamma}^*_{K_0}\Vert=\Vert\widehat \Sigma_{K_0}^{-1}(\widehat b_{K_0}-\widehat \Sigma_{K_0}\boldsymbol{\gamma}^*_{K_0})\Vert \notag\\
				=& \left\{(\widehat b_{K_0}-\widehat \Sigma_{K_0}\boldsymbol{\gamma}^*_{K_0})^{\top }\widehat{\Sigma}_{K_0}^{-1} \widehat{\Sigma}_{K_0}^{-1}(\widehat b_{K_0}-\widehat \Sigma_{K_0}\boldsymbol{\gamma}^*_{K_0})\right\}^{1/2}\notag \\
				\leq &\lambda^{-1}_{\min}\left(\widehat{\Sigma}_{K_0}\right)\cdot \Vert\widehat b_{K_0}-\widehat \Sigma_{K_0}\boldsymbol{\gamma}^*_{K_0}\Vert,\label{eq:gammahat-gamma*}
			\end{align}
			where $\lambda_{\min}(\widehat{\Sigma}_{K_0})$ is the smallest eigenvalue of $\widehat{\Sigma}_{K_0}$. By Lemma \ref{lemma:proj} and Assumption \ref{ass:eigen}, we  have 
			\begin{align}\label{eq:Shat-S}
				\|\widehat{\Sigma}_{K_0}-\Sigma_{K_0}\|=o_P(1) \  \text{and} \ \lambda^{-1}_{\min}\left(\widehat \Sigma_{K_0}\right)=O_P(1).
			\end{align}   
			
			Next, we derive the convergence rate for $ \Vert\widehat b_{K_0}-\widehat \Sigma_{K_0}\boldsymbol{\gamma}^*_{K_0}\Vert$. As  Lemma  \ref{uproj} stated,  for any $0<\epsilon<1/4$,
	\begin{align*}
		&\frac{1}{n(n-1)}\sum_{j=1,j\neq i}^n\sum_{i=1}^nu_{K_0}(Y_i,Z_j)\\
		=&\frac{1}{n}\sum_{t=1}^{n}	\int_{\mathcal{Z}}u_{K_0}(Y_t,z)f_Z(z)dz+\frac{1}{n}\sum_{t=1}^{n}\int_{\mathcal{Y}}u_{K_0}(y,Z_t)f_Y(y)dy-\mathbb{E}\left[u_{K_0}(Y,Z)r_0(Y,Z)\right]\\
        &+O_P\left( \frac{\sqrt{\zeta_{K_0}}K_0^{1/4}}{n^{(2-\epsilon)/2}}\right),
	\end{align*}
   then we have
			\begin{align}
				&\widehat b_{K_0}-\widehat \Sigma_{K_0}\boldsymbol{\gamma}^*_{K_0}\notag\\
				=&\frac{1}{n(n-1)}\sum_{i=1,i\neq j}^{n}\sum_{j=1}^{n}u_{K_0}(Y_i,Z_j)- \frac{1}{n}\sum_{t=1}^{n} u_{K_0}(Y_t,Z_t)u_{K_0}^\top(Y_t,Z_t) \boldsymbol{\gamma}^*_{K_0} \notag\\
				=&\frac{1}{n}\sum_{t=1}^{n}\psi_{K_0}(Y_t)- \frac{1}{n}\sum_{t=1}^{n} u_{K_0}(Y_t,Z_t)u_{K_0}^\top(Y_t,Z_t) \boldsymbol{\gamma}^*_{K_0},\label{eq:bhat-Shatgamma_1}\\
    &+\frac{1}{n}\sum_{t=1}^{n}\int_{\mathcal{Y}}u_{K_0}(y,Z_t)f_Y(y)dy-\mathbb{E}\left[u_{K_0}(Y,Z)r_0(Y,Z)\right]+O_P\left( \frac{\sqrt{\zeta_{K_0}}K_0^{1/4}}{n^{(2-\epsilon)/2}}\right)\label{eq:bhat-Shatgamma_2}
			\end{align}	
			where $\psi_{K_0}(y):=\int u_{K_0}(y,z)f_Z(z)dz=\left( \psi_{K_0, 1}(y), ... , \psi_{K_0, K_0}(y)\right) ^\top$.  
   For the term (\ref{eq:bhat-Shatgamma_2}),  by  $0<\epsilon<1/4$  
   and Assumption \ref{ass:cK}(i),   \begin{small}
   \begin{align*}
       (\ref{eq:bhat-Shatgamma_2})=&\int_{\mathcal{Y}}\left\{\frac{1}{n}\sum_{t=1}^{n}u_{K_0}(y,Z_t)-\int_{\mathcal{Z}}u_{K_0}(y,z)f_Z(z)dz\right\}f_Y(y)dy+O_P\left( \frac{\sqrt{\zeta_{K_0}}K_0^{1/4}}{n^{(2-\epsilon)/2}}\right)=O_P\left(
 \sqrt{\frac{K_0}{n}}\right).      \notag
   \end{align*}
   \end{small}
			For the term (\ref{eq:bhat-Shatgamma_1}),   by Assumptions \ref{ass:ciid},  \ref{ass:eigen} and Lemma \ref{Davydov},  we have
					\begin{align*}
						&\mathbb{E}\left[\left\Vert(\ref{eq:bhat-Shatgamma_1} )\right\Vert^2\right]\\
						=&\mathbb{E}\left[\left\Vert\frac{1}{n}\sum_{t=1}^{n} \left\{u_{K_0}(Y_t,Z_t)u_{K_0}^\top(Y_t,Z_t) \boldsymbol{\gamma}^*_{K_0}-\psi_{K_0}(Y_t)\right\}\right\Vert^2\right]\\
						=&\frac{1}{n}\mathbb{E}\left[\left\Vert u_{K_0}(Y_t,Z_t)u_{K_0}^\top(Y_t,Z_t) \boldsymbol{\gamma}^*_{K_0}-\psi_{K_0}(Y_t)\right\Vert^2\right]\\
						&+\frac{2}{n^2}\sum_{t=1}^{n}\sum_{\tau=1}^{n-t}\mathbb{E}\left[\left\{u_{K_0}(Y_t,Z_t)u_{K_0}^\top(Y_t,Z_t) \boldsymbol{\gamma}^*_{K_0}-\psi_{K_0}(Y_t)\right\}^\top\right.\\
                        &\left.\left\{u_{K_0}(Y_{t+\tau},Z_{t+\tau})u_{K_0}^\top(Y_{t+\tau},Z_{t+\tau}) \boldsymbol{\gamma}^*_{K_0}-\psi_{K_0}(Y_{t+\tau})\right\}\right]\\
						\leq&\frac{2}{n^2}\sum_{k=1}^{K_0}\sum_{t=1}^{n}\sum_{\tau=1}^{n-t}Cov\left( u_{K_0,k}(Y_t,Z_t)u_{K_0}^\top(Y_t,Z_t) \boldsymbol{\gamma}^*_{K_0}-\psi_{K_0, k}(Y_t),\right.\\
                        &\left.\qquad\qquad\qquad \qquad u_{K_0, k}(Y_{t+\tau},Z_{t+\tau})u_{K_0}^\top(Y_{t+\tau},
                        Z_{t+\tau}) \boldsymbol{\gamma}^*_{K_0}-\psi_{K_0, k}(Y_{t+\tau})\right) \\
						&+\frac{2}{n}\mathbb{E}\left[\left\Vert u_{K_0}(Y,Z)u_{K_0}^\top(Y,Z) \boldsymbol{\gamma}^*_{K_0}\right\Vert^2\right]+\frac{2}{n}\mathbb{E}\left[\left\Vert \psi_{K_0}(Y)\right\Vert^2\right]\\
						\leq&\frac{8}{n^2}\sum_{k=1}^{K_0}\sum_{t=1}^{n}\sum_{\tau=1}^{n-t}\left\{\mathbb{E}\left[|u_{K_0,k}(Y_t,Z_t)u_{K_0}^\top(Y_t,Z_t) \boldsymbol{\gamma}^*_{K_0}-\psi_{K_0, k}(Y_t)|^4\right]\right\}^{1/2}\beta_{\tau}^{1/2}\\
						&+\frac{2}{n}\mathbb{E}\left[\left\Vert u_{K_0}(Y,Z)\right\Vert^2\right]\cdot\sup_{(y,z)\in\mathcal{Y}\times\mathcal{Z}}|u_{K_0}^\top(y,z) \boldsymbol{\gamma}^*_{K_0}|^2+\frac{2}{n}\int\left\Vert u_{K_0}(y,z)\right\Vert^2f_Y(y)f_Z(z)dydz\\
						\leq&\frac{1}{n^2}\sum_{k=1}^{K_0}\sum_{t=1}^{n}C+\frac{2}{n}\mathbb{E}\left[\left\Vert u_{K_0}(Y,Z)\right\Vert^2\right]\left\{\sup_{(y,z)\in\mathcal{Y}\times\mathcal{Z}}|r_0(y,z)|^2+\sup_{(y,z)\in\mathcal{Y}\times\mathcal{Z}}|r_{K_0}^*(y,z)-r_0(y,z)|^2\right\}\\
						&+\frac{2}{n}\sup_{(y,z)\in\mathcal{Y}\times\mathcal{Z}}\frac{f_Y(y)f_Z(z)}{f_{Y,Z}(y,z)}\cdot\mathbb{E}\left[\left\Vert u_{K_0}(Y,Z)\right\Vert^2\right]\\
						=&O\left(\frac{K_0}{n} \right) 
					\end{align*}
   where $C$ is a universal constant and the first inequality comes from $$\left\{u_{K_0}(Y_t,Z_t)u_{K_0}^\top(Y_t,Z_t) \boldsymbol{\gamma}^*_{K_0}-\psi_{K_0}(Y_t)\right\}_{t=1}^n$$ has mean 0 since
			\begin{equation*}
				\begin{aligned}
					&\mathbb{E}\left[u_{K_0}(Y,Z)u_{K_0}^\top(Y,Z) \boldsymbol{\gamma}^*_{K_0}\right]\\
					=&\mathbb{E}\left[u_{K_0}(Y,Z)u_{K_0}^\top(Y,Z) \right]\Sigma_{K_0}^{-1}b_{K_0}\\
					=&\mathbb{E}\left[u_{K_0}(Y,Z)u_{K_0}^\top(Y,Z) \right]\left\{\mathbb{E}\left[u_{K_0}(Y,Z)u_{K_0}^\top(Y,Z) \right]\right\}^{-1}\int\mathbb{E}\left[u_{K_0}(y,Z)\right]f_Y(y)dy\\
					=&\int\mathbb{E}\left[u_{K_0}(y,Z)\right]f_Y(y)dy\\
					=&\mathbb{E}\left[\psi_{K_0}(Y)\right].
				\end{aligned}
			\end{equation*}
			Then by Chebyshev's inequality we have
			\begin{equation*}
				(\ref{eq:bhat-Shatgamma_1})=O_P\left(\sqrt{ \frac{K_0}{n}}\right). 
			\end{equation*}
			Combining with (\ref{eq:gammahat-gamma*}), (\ref{eq:Shat-S}) and the results for 	(\ref{eq:bhat-Shatgamma_1}), 	(\ref{eq:bhat-Shatgamma_2}), we have 
			\begin{equation}\label{gammarate}
				\Vert\widehat{\boldsymbol{\gamma}}_{K_0}-\boldsymbol{\gamma}^*_{K_0}\Vert  =O_P\left(\sqrt{\frac{K_0}{n}}\right).
			\end{equation}	
			
			We now establish the convergence rates for $r_{K_0}^*(\cdot)$ and $\widehat r_{K_0}(\cdot)$. By Assumption \ref{ass:cpi0-pi*} (i), there exist   $\boldsymbol{\gamma}_{K_0} \in \mathbb{R}^{K_0} $ and a positive constant $\omega_{r}>0$ such that
			$\sup_{(y,z)\in\mathcal{Y}\times\mathcal{Z}}|r_0(y,z)-\boldsymbol{\gamma}^\top_{K_0}u_{K_0}(y,z)|=O(K_0^{-\omega_{r}})$, we have
			\begin{align*}
				&\Vert\boldsymbol{\gamma}^*_{K_0}-\boldsymbol{\gamma}_{K_0}\Vert=\Vert\Sigma_{K_0}^{-1}( b_{K_0}- \Sigma_{K_0}\boldsymbol{\gamma}_{K_0})\Vert \notag\\
				=& \left\{( b_{K_0}- \Sigma_{K_0}\boldsymbol{\gamma}_{K_0})^{\top }\Sigma_{K_0}^{-1} \Sigma_{K_0}^{-1}( b_{K_0}- \Sigma_{K_0}\boldsymbol{\gamma}_{K_0})\right\}^{1/2}\notag \\
				\leq &\lambda^{-1}_{\min}\left(\Sigma_{K_0}\right)\cdot \Vert b_{K_0}- \Sigma_{K_0}\boldsymbol{\gamma}_{K_0}\Vert\notag\\
				=&O(1)\cdot\left\Vert\int\mathbb{E}\left[u_{K_0}(y,Z)\right]f_Y(y)dy-\mathbb{E}\left[u_{K_0}(Y,Z)u_{K_0}^\top(Y,Z)\right]\boldsymbol{\gamma}_{K_0}\right\Vert\notag\\
				=&O(1)\cdot\left\Vert\int\mathbb{E}\left[u_{K_0}(y,Z)\right]f_Y(y)dy-\mathbb{E}\left[u_{K_0}(Y,Z)\left(u_{K_0}^\top(Y,Z)\boldsymbol{\gamma}_{K_0}-r_0(Y,Z)\right)\right]\right.\\
                &\left.-\mathbb{E}\left[u_{K_0}(Y,Z)r_0(Y,Z)\right]\right\Vert\notag\\
				=&O(1)\cdot\left\Vert\mathbb{E}\left[u_{K_0}(Y,Z)\left(u_{K_0}^\top(Y,Z)\boldsymbol{\gamma}_{K_0}-r_0(Y,Z)\right)\right]\right\Vert\notag\\
				\leq& O(1)\cdot\sqrt{\mathbb{E}\left[\left(u_{K_0}^\top(Y,Z)\boldsymbol{\gamma}_{K_0}-r_0(Y,Z)\right)^2\right]}\\
				=&O(K_0^{-\omega_{r}}).
			\end{align*}
	Then we can obtain
				\begin{align*}
	&\sup_{(y,z)\in\mathcal{Y}\times\mathcal{Z}} \left|r_{K_0}^*(y,z)-r_0(y,z)\right|\\
					\leq&\sup_{(y,z)\in\mathcal{Y}\times\mathcal{Z}} \left|u_{K_0}^\top(y,z)\left({\boldsymbol{\gamma}}_{K_0}^*-\boldsymbol{\gamma}_{K_0} \right) \right|+\sup_{(y,z)\in\mathcal{Y}\times\mathcal{Z}}
					\left|u_{K_0}^\top(y,z)\boldsymbol{\gamma}_{K_0}-r_{0}(y,z)\right|\\
					=&\sup_{(y,z)\in\mathcal{Y}\times\mathcal{Z}}\Vert u_{K_0}(y,z)\Vert\cdot\Vert{\boldsymbol{\gamma}}_{K_0}^*-\boldsymbol{\gamma}_{K_0}\Vert+O(K_0^{-\omega_{r}})\\
					=&O\left( \zeta_{K_0}K_0^{-\omega_r}\right) ,
				\end{align*}
			and under the normalization  \eqref{eq:orthnormal}, we have
				\begin{align*}\label{pi*-pi0_L2}
					&\int\left|r_{K_0}^*(y,z)-r_0(y,z)\right|^2dF_{Y,Z}(y,z)\\
					\leq&2\int\left|u_{K_0}^\top(y,z)\left({\boldsymbol{\gamma}}_{K_0}^*-\boldsymbol{\gamma}_{K_0} \right) \right|^2dF_{Y,Z}(y,z)+2\int\left|u_{K_0}^\top(y,z)\boldsymbol{\gamma}_{K_0}-r_{0}(y,z)\right|^2dF_{Y,Z}(y,z)\\
					=&2\left\Vert{\boldsymbol{\gamma}}_{K_0}^*-\boldsymbol{\gamma}_{K_0}\right\Vert^2+O\left( K_0^{-2\omega_r}\right)\\
					=&O\left( K_0^{-2\omega_r}\right),
				\end{align*}
			and
				\begin{align*}
					&	\frac{1}{n}\sum_{t=1}^{n}\left|r_{K_0}^*(Y_t,Z_t)-r_0(Y_t,Z_t)\right|^2\\
					\leq&\frac{2}{n}\sum_{t=1}^{n}\left|u_{K_0}^\top(Y_t,Z_t)\left( {\boldsymbol{\gamma}}_{K_0}^*-\boldsymbol{\gamma}_{K_0} \right) \right|^2+\frac{2}{n}\sum_{t=1}^{n}\left|u_K^\top(Y_t,Z_t)\boldsymbol{\gamma}_{K_0}-r_0(Y_t,Z_t)\right|^2\\
					\leq&2\cdot \lambda_{\max}\left(\frac{1}{n}\sum_{t=1}^{n}u_{K_0}(Y_t,Z_t)u_{K_0}^\top(Y_t,Z_t)\right)\cdot\left\Vert{\boldsymbol{\gamma}}_{K_0}^*-\boldsymbol{\gamma}_{K_0}\right\Vert^2+O_P\left( K_0^{-2\omega_r}\right)\\
					=&O_P\left( K_0^{-2\omega_r}\right).
				\end{align*}
			Finally, we derive the rate for $\widehat{r}_{K_0}(y,z)$. First, from Assumption \ref{ass:cpi0-pi*}(i) and (\ref{gammarate}), we have
				\begin{align*}
&\sup_{(y,z)\in\mathcal{Y}\times\mathcal{Z}}|\widehat{r}_{K_0}(y,z)-r_0(y,z)|\\\leq&\sup_{(y,z)\in\mathcal{Y}\times\mathcal{Z}}|\widehat{r}_{K_0}(y,z)-r_{K_0}^*(y,z)|+\sup_{(y,z)\in\mathcal{Y}\times\mathcal{Z}}|r_{K_0}^*(y,z)-r_0(y,z)|\\
					\leq&\sup_{(y,z)\in\mathcal{Y}\times\mathcal{Z}}|u_{K_0}^\top(y,z)(\widehat{\boldsymbol{\gamma}}_{K_0}-\boldsymbol{\gamma}_{K_0}^*)|+\sup_{(y,z)\in\mathcal{Y}\times\mathcal{Z}} |r_{K_0}^*(y,z)-r_0(y,z)|\\
					=&O_P\left(\zeta_{K_0}\left\{\sqrt{\frac{K_0}{n}}+K_0^{- \omega_r} \right\} \right).  
				\end{align*}
			Second, under the normalization   \eqref{eq:orthnormal} and $\int|u_{K_0}^\top(y,z)(\widehat{\boldsymbol{\gamma}}_{K_0}-\boldsymbol{\gamma}_{K_0}^*)|^2dF_{Y,Z}(y,z)=\Vert \widehat{\boldsymbol{\gamma}}_{K_0}-\boldsymbol{\gamma}_{K_0}^*\Vert^2=O_P\left(K_0/n\right) 
			$,  from triangle inequality, we have
			\begin{equation*}
				\begin{aligned}
					&\int|\widehat{r}_{K_0}(y,z)-r_0(y,z)|^2dF_{Y,Z}(y,z)\\
					\leq&2\int|\widehat{r}_{K_0}(y,z)-r_{K_0}^*(y,z)|^2dF_{Y,Z}(y,z)+2\int|r_{K_0}^*(y,z)-r_0(y,z)|^2dF_{Y,Z}(y,z)\\
					=&2\int|u_{K_0}^\top(y,z)(\widehat{\boldsymbol{\gamma}}_{K_0}-\boldsymbol{\gamma}_{K_0}^*)|^2dF_{Y, Z}(y,z)+O\left(K_0^{-2\omega_r
					} \right) \\
					=&O_P\left(\frac{K_0}{n}+K_0^{-2 \omega_r}\right).
				\end{aligned}
			\end{equation*}
			Third,  since
			\begin{align}
				&\frac{1}{n}\sum_{t=1}^{n}\left [\widehat r_{K_0}(Y_t,Z_t)-r_{K_0}^*(Y_t,Z_t)\right]^2 \notag\\
				=&\frac{1}{n}\sum_{t=1}^{n}[u_{K}^\top(Y_t,Z_t)(\widehat{\boldsymbol{\gamma}}_{K_0}-\boldsymbol{\gamma}_{K_0}^*)]^2\notag\\
				\leq&\lambda_{\max}\left(\frac{1}{n}\sum_{t=1}^{n}u_{K_0}(Y_t,Z_t)u_{K_0}^\top(Y_t,Z_t)\right)\cdot \|\widehat{\boldsymbol{\gamma}}_{K_0}-\boldsymbol{\gamma}_{K_0}^*\|^2\notag\\
				=&O_P\left(\frac{K_0}{n}\right),	\label{crhat-r*}
			\end{align}
			Therefore,
				\begin{align*}
					&\frac{1}{n}\sum_{t=1}^{n}\left|\widehat{r}_{K_0}(Y_t,Z_t)-r_0(Y_t,Z_t)\right|^2\\
					\leq&\frac{2}{n}\sum_{t=1}^{n}\left|\widehat{r}_{K_0}(Y_t,Z_t)-r_{K_0}^*(Y_t,Z_t)\right|^2+\frac{2}{n}\sum_{t=1}^{n}\left|r_{K_0}^*(Y_t,Z_t)-r_0(Y_t,Z_t)\right|^2\\
					=&O_P\left(\frac{K_0}{n}+K_0^{- 2\omega_r}\right).
				\end{align*}

For part (ii),	we first establish the convergence rate for $\|\widehat{\boldsymbol{\beta}}_{K,K_0}-\boldsymbol{\beta}^*_K\|$. Note that
		\begin{align}
			&\Vert\widehat{\boldsymbol{\beta}}_{K,K_0}-\boldsymbol{\beta}^*_{K}\Vert=\Vert\widehat H_{K,K_0}^{-1}(\widehat h_K-\widehat H_{K, K_0}\boldsymbol{\beta}^*_{K})\Vert \notag\\
			=& \left\{(\widehat h_K-\widehat H_{K, K_0}\boldsymbol{\beta}^*_K)^{\top }\widehat{H}_{K,K_0}^{-1} \widehat{H}_{K,K_0}^{-1}(\widehat h_K-\widehat H_{K, K_0}\boldsymbol{\beta}^*_K)\right\}^{1/2}\notag \\
			\leq &\lambda^{-1}_{\min}\left(\widehat{H}_{K,K_0}\right)\cdot \Vert\widehat h_K-\widehat H_{K, K_0}\boldsymbol{\beta}^*_K\Vert,\label{eq:betahat-beta*}
		\end{align}
		where $\lambda_{\min}(\widehat{H}_{K,K_0})$ is the smallest eigenvalue of $\widehat{H}_{K,K_0}$.
		
		We show $\|\widehat{H}_{K,K_0}-H_K\|=o_P(1)$ and $\lambda_{\min}(\widehat{H}_{K,K_0})=O_P(1)$. Let $\phi_{K,K_0}(X_i,Y_i,Z_i):=\{\widehat  r_{K_0}(Y_i,Z_i)-r_{0}(Y_i,Z_i)\}v_K(X_i,Y_i,Z_i)$ and $\text{proj}_{n, v_K}\phi_{K,K_0}(x,y,z)$ be the least-square projection of $\phi_{K,K_0}(x,y,z)$ on the space linearly spanned by $v_K(x,y,z)$:
		\begin{align*}
			&\text{proj}_{n, v_K}\phi_{K,K_0}(x,y,z)\\
			=&\left[\sum_{t=1}^{n}\phi_{K,K_0}(X_t, Y_t, Z_t)v_K^\top(X_t, Y_t, Z_t)\right]\left[\sum_{t=1}^{n}v_K(X_t, Y_t, Z_t)v_K^\top(X_t, Y_t, Z_t)\right]^{-1}v_K(x,y,z).
		\end{align*} 
		We have
		\begin{align}
			&\Vert\widehat H_{K,K_0}-H_{K}\Vert^2\notag\\
			\leq &\left\Vert\frac{1}{n}\sum_{t=1}^{n}(\widehat  r_{K_0}(Y_t,Z_t)-r_{0}(Y_t,Z_t))v_K(X_t, Y_t, Z_t)v_K^\top(X_t, Y_t, Z_t)\right\Vert^2\notag\\
			&+\left\Vert\frac{1}{n}\sum_{t=1}^{n}r_{0}(Y_t,Z_t)v_K(X_t, Y_t, Z_t)v_K^\top(X_t, Y_t, Z_t)-\mathbb{E}[r_{0}(Y_t,Z_t)v_K(X_t, Y_t, Z_t)v_K^\top(X_t, Y_t, Z_t)]\right\Vert^2\notag\\
			=&\left\|\frac{1}{n}\sum_{t=1}^{n}\phi_{K,K_0}(X_t, Y_t, Z_t)v_{K}^\top(X_t, Y_t, Z_t)\right\|^2+O_P\left(\frac{\xi^2_KK}{n}\right) \ (\text{by Assumption \ref {ass:cbounded} (i) and \eqref{eq:v_eigen}}) \notag\\
			=&\tr\left\{\left[\frac{1}{n}\sum_{t=1}^{n}\phi_{K,K_0}(X_t, Y_t, Z_t)v_{K}^\top(X_t, Y_t, Z_t)\right]\left[\frac{1}{n}\sum_{t=1}^{n}\phi_{K,K_0}(X_t, Y_t, Z_t)v_{K}^\top(X_t, Y_t, Z_t)\right]^\top\right\}\notag\\
            &+O_P\left(\frac{\xi^2_KK}{n}\right)\notag\\
			=&\tr\left\{\left[\frac{1}{n}\sum_{t=1}^{n}\phi_{K,K_0}(X_t, Y_t, Z_t)v_{K}^\top(X_t, Y_t, Z_t)\right]\frac{\lambda_{\min}\left\{\left(\frac{1}{n}\sum_{t=1}^{n}v_K(X_t, Y_t, Z_t)v_K^\top(X_t, Y_t, Z_t)\right)^{-1}\right\}}{\lambda_{\min}\left\{\left(\frac{1}{n}\sum_{t=1}^{n}v_K(X_t, Y_t, Z_t)v_K^\top(X_t, Y_t, Z_t)\right)^{-1}\right\}}\right.\notag\\
			&\left.\cdot I_{K\times K}\cdot\left[\frac{1}{n}\sum_{t=1}^{n}\phi_{K,K_0}(X_t, Y_t, Z_t)v_{K}^\top(X_t, Y_t, Z_t)\right]^\top\right\}+O_P\left(\frac{\xi^2_KK}{n}\right)\notag\\
			\leq&\frac{1}{\lambda_{\min}\left\{\left[\frac{1}{n}\sum_{t=1}^{n}v_K(X_t, Y_t, Z_t)v_K^\top(X_t, Y_t, Z_t)\right]^{-1}\right\}}\tr\left\{\left[\frac{1}{n}\sum_{t=1}^{n}\phi_{K,K_0}(X_t, Y_t, Z_t)v_{K}^\top(X_t, Y_t, Z_t)\right]\right.\notag\\
			&\left.	\cdot\left[\frac{1}{n}\sum_{t=1}^{n}v_K(X_t, Y_t, Z_t)v_K^\top(X_t, Y_t, Z_t)\right]^{-1}\left[\frac{1}{n}\sum_{t=1}^{n}v_K(X_t, Y_t, Z_t)v_K^\top(X_t, Y_t, Z_t)\right]\right.\notag\\
			&\left.\cdot\left[\frac{1}{n}\sum_{t=1}^{n}v_K(X_t, Y_t, Z_t)v_K^\top(X_t, Y_t, Z_t)\right]^{-1}\left[\frac{1}{n}\sum_{t=1}^{n}\phi_{K,K_0}(X_t, Y_t, Z_t)v_{K}^\top(X_t, Y_t, Z_t)\right]^\top\right\} \notag\\
            &+O_P\left(\frac{\xi^2_KK}{n}\right)\notag\\
			\leq&O_P\left(1\right)\cdot \frac{1}{n}\sum_{t=1}^{n}\left\Vert \text{proj}_{n, v_K}\phi_{K,K_0}(X_t, Y_t, Z_t)\right\Vert^2+O_P\left(\frac{\xi^2_KK}{n}\right) \ (\text{by \eqref{eq:v_eigen}}) \notag\\
			\leq&O_P\left(1\right)\cdot \frac{1}{n}\sum_{t=1}^{n}\left\Vert \phi_{K,K_0}(X_t, Y_t, Z_t)\right\Vert^2+O_P\left(\frac{\xi^2_KK}{n}\right) \ \text{(by \eqref{eq:lsproj})} \notag \\
			\leq&O_P\left(1\right)\cdot \sup_{(x,y,z)\in\mathcal{X}\times\mathcal{Y}\times \mathcal{Z}}\Vert v_{K}(x,y,z)\Vert^2\cdot\frac{1}{n}\sum_{t=1}^{n}\left|\widehat  r_{K_0}(Y_t,Z_t)-r_{0}(Y_t,Z_t)\right|^2+O_P\left(\frac{\xi^2_KK}{n}\right)\notag\\
			=&O_P\left(\xi_K^2\left\{K_0^{-2\omega_{r}}+\frac{K_0}{n}+\frac{K}{n}\right\} \right)=o_P(1).  \ \text{(by Theorem \ref{th_crate} (i) and Assumption  \ref{ass:cK} (ii))} \label{Hhat-H}
		\end{align}
       
	 With Assumption \ref{ass:eigen} (ii), we  have 
		\begin{align}\label{eq:Hhat-H}
			\Vert\widehat H_{K,K_0}-H_{K}\Vert=o_P(1), \ \lambda^{-1}_{\min}\left(\widehat H_{K,K_0}\right)=O_P(1),
		\end{align} 
		and the eigenvalues of $\widehat H_{K,K_0}$ are uniformly bounded away from zero with probability approaching to one. 
		Next, we establish the convergence rate for $\|\widehat h_K -\widehat H_{K,K_0} \boldsymbol{\beta}_K^*\|$.
  As Lemma  \ref{uproj} stated, for $0<\epsilon<1/4$,
	\begin{equation*}
		\begin{aligned}
			&\frac{1}{n(n-1)}\sum_{i=1}^{n}\sum_{j=1,j\neq i}^{n}v_{K}(X_i,Y_j,Z_i)\\
			=&\frac{1}{n}\sum_{t=1}^{n}\mathbb{E}[v_K(X_t,Y_t,Z_t)r_0(Y_t,Z_t)\pi_0(X_t,Y_t,Z_t)|X_t,Z_t]\\
            &+\frac{1}{n}\sum_{t=1}^{n}\mathbb{E}[v_K(X_t,Y_t,Z_t)r_0(Y_t,Z_t)\pi_0(X_t,Y_t,Z_t)|Y_t]\\
			&-\mathbb{E}[v_K(X,Y,Z)r_0(Y,Z)\pi_0(X,Y,Z)]
			+O_P\left( \frac{\sqrt{\xi_K}K^{1/4}}{n^{(2-\epsilon)/2}}\right).
		\end{aligned}
	\end{equation*}
 Then we have
		\begin{align}
			&\widehat h_K -\widehat H_{K,K_0} \boldsymbol{\beta}_K^* \notag\\
			=&\frac{1}{n(n-1)}\sum_{i=1,i\neq j}^{n}\sum_{j=1}^{n}v_{K}(X_i,Y_j,Z_i)- \frac{1}{n}\sum_{t=1}^{n}\widehat r_{K_0}(Y_t,Z_t)v_{K}(X_t, Y_t, Z_t)v_K^\top(X_t, Y_t, Z_t) \boldsymbol{\beta}^*_{K} \notag\\
   	=&\frac{1}{n}\sum_{t=1}^{n}\int v_K(X_t,y,Z_t)f_Y(y)dy- \frac{1}{n}\sum_{t=1}^{n}\widehat r_{K_0}(Y_t,Z_t)v_{K}(X_t, Y_t, Z_t)v_K^\top(X_t, Y_t, Z_t) \boldsymbol{\beta}^*_{K}\notag\\
    &+\frac{1}{n}\sum_{t=1}^{n}\int v_K(x,Y_t,z)f_{X,Z}(x,z)dxdz
			-\mathbb{E}[v_K(X,Y,Z)r_0(Y,Z)\pi_0(X,Y,Z)]
			\notag\\
            &+O_P\left( \frac{\sqrt{\xi_K}K^{1/4}}{n^{(2-\epsilon)/2}}\right)\notag\\
			=&\frac{1}{n}\sum_{t=1}^{n}\varphi_K(X_t,Z_t)-\frac{1}{n}\sum_{t=1}^{n} r_{0}(Y_t,Z_t)v_{K}(X_t, Y_t, Z_t)v_K^\top(X_t, Y_t, Z_t) \boldsymbol{\beta}^*_{K} \label{eq:hhat-Hhatbeta_1}\\
			&-\frac{1}{n}\sum_{t=1}^{n}v_{K}(X_t, Y_t, Z_t)v_K^\top(X_t, Y_t, Z_t)\boldsymbol{\beta}^*_{K}\cdot [\widehat r_{K_0}(Y_t,Z_t)-r_{0}(Y_t,Z_t)]\label{eq:hhat-Hhatbeta_2}\\
   &+\frac{1}{n}\sum_{t=1}^{n}\int v_K(x,Y_t,z)f_{X,Z}(x,z)dxdz
			-\mathbb{E}[v_K(X,Y,Z)r_0(Y,Z)\pi_0(X,Y,Z)]\label{eq:hhat-Hhatbeta_3}
		\\
        &+O_P\left( \frac{\sqrt{\xi_K}K^{1/4}}{n^{(2-\epsilon)/2}}\right),\label{eq:hhat-Hhatbeta_4}
		\end{align}	
  where $\varphi_K(x,z):=\int v_K(x,y,z)f_Y(y)dy=\left( \varphi_{K,1}(x,z), ... ,\varphi_{K,K}(x,z)\right)^\top$.
  For the term \eqref{eq:hhat-Hhatbeta_3},
  \begin{align*}
     \eqref{eq:hhat-Hhatbeta_3} =\int\left\{ \frac{1}{n}\sum_{t=1}^{n}v_K(x,Y_t,z)-\int v_K(x,y,z)f_Y(y)dy\right\}f_{X,Z}(x,z)dxdz
			=O_P\left(\sqrt{\frac{K}{n}}\right)
  .\end{align*}
  For the term \eqref{eq:hhat-Hhatbeta_4}, 
  by  $0<\epsilon<1/4$ and Assumption \ref{ass:cK}(ii),  $\eqref{eq:hhat-Hhatbeta_4}=O_P\left(\sqrt{K/n}\right).$ For the term \eqref{eq:hhat-Hhatbeta_1},  by Assumption \ref{ass:ciid},  \ref{ass:eigen}  and Lemma \ref{Davydov}, we have
  \begin{small}
		\begin{align*}
			&\mathbb{E}\left[\|\eqref{eq:hhat-Hhatbeta_1}\|^2\right]
		=\mathbb{E}\left[\left\Vert\frac{1}{n}\sum_{t=1}^{n}\left\{r_0(Y_t,Z_t)v_K(X_t, Y_t, Z_t)v_K^\top(X_t, Y_t, Z_t)\boldsymbol{\beta}^*_{K}-\varphi_K(X_t,Z_t)\right\}\right\Vert^2\right] \\
			=&\frac{1}{n^2}\mathbb{E}\left[\sum_{t=1}^{n}\left\Vert\left\{r_0(Y_t,Z_t)v_K(X_t, Y_t, Z_t)v_K^\top(X_t, Y_t, Z_t)\boldsymbol{\beta}^*_{K}-\varphi_K(X_t,Z_t)\right\}\right\Vert^2\right]\\
			&+\frac{1}{n^2}\mathbb{E}\left[\sum_{i\neq j}\left\{r_0(Y_i,Z_i)v_K(X_i,Y_i,Z_i)v_K^\top(X_i,Y_i,Z_i)\boldsymbol{\beta}^*_{K}-\varphi_K(X_i,Z_i)\right\}^\top\right.\\
			&\left.\cdot\left\{r_0(Y_j,Z_j)v_K(X_j,Y_j,Z_j)v_K^\top(X_j,Y_j,Z_j)\boldsymbol{\beta}^*_{K}-\varphi_K(X_j,Z_j)\right\}\right]\\
			=&\frac{1}{n}\mathbb{E}\left[\left\Vert\left\{r_0(Y,Z)v_K(X,Y,Z)v_K^\top(X,Y,Z)\boldsymbol{\beta}^*_{K}-\varphi_K(X,Z)\right\}\right\Vert^2\right]\\
			&+\frac{2}{n^2}\sum_{k=1}^{K}\sum_{t=1}^{n}\sum_{\tau=1}^{n-t}Cov\left(r_0(Y_t,Z_t)v_{K,k}(X_t,Y_t,Z_t)v_K^\top(X_t,Y_t,Z_t)\boldsymbol{\beta}^*_{K}-\varphi_{K,k}(X_t,Z_t),\right.\\ &\left.r_0(Y_{t+\tau},Z_{t+\tau})v_{K,k}(X_{t+\tau},Y_{t+\tau},Z_{t+\tau})v_K^\top(X_{t+\tau},Y_{t+\tau},Z_{t+\tau})\boldsymbol{\beta}^*_{K}-\varphi_{K,k}(X_{t+\tau},Z_{t+\tau}) \right)  \\
			\leq&\frac{8}{n^2}\sum_{k=1}^{K}\sum_{t=1}^{n}\sum_{\tau=1}^{n-t}\left\{\mathbb{E}\left[|r_0(Y_t,Z_t)v_{K,k}(X_t,Y_t,Z_t)v_K^\top(X_t,Y_t,Z_t)\boldsymbol{\beta}^*_{K}-\varphi_{K,k}(X_t,Z_t)|^4\right]\right\}^{1/2}\beta_{\tau}^{1/2}\\
			&+\frac{2}{n}\mathbb{E}\left[\Vert r_0(Y,Z)v_K(X,Y,Z)v_K^\top(X,Y,Z)\boldsymbol{\beta}_K^*\Vert^2\right]+\frac{2}{n}\mathbb{E}\left[\Vert \varphi_K(X,Z)\Vert^2\right] \\
			\leq&\frac{1}{n^2}\sum_{k=1}^{K}\sum_{t=1}^{n}C+\frac{2}{n}\mathbb{E}\left[\Vert r_0(Y,Z)v_K(X,Y,Z)\Vert^2\right]\sup_{(x,y,z)\in\mathcal{X}\times\mathcal{Y}\times \mathcal{Z}}| v_K^\top(x,y,z)\boldsymbol{\beta}^*_{K}|^2+\frac{2}{n}\mathbb{E}\left[\Vert \varphi_K(X,Z)\Vert^2\right]\\
			\leq& O\left(\frac{1}{n}\right)\mathbb{E}\left[\Vert v_K(X,Y,Z)\Vert^2\right]\left\{\sup_{(x,y,z)\in\mathcal{X}\times\mathcal{Y}\times \mathcal{Z}}\pi_0^2(x,y,z)+\sup_{(x,y,z)\in\mathcal{X}\times\mathcal{Y}\times \mathcal{Z}}\left|\pi_0(x,y,z)-\pi_K^*(x,y,z)\right|^2\right\}\\
			&+\frac{2}{n}\sup_{(x,y,z)\in\mathcal{X}\times\mathcal{Y}\times \mathcal{Z}}\frac{f_{X,Z}(x,z)f_Y(y)}{f_{X,Y,Z}(x,y,z)}\mathbb{E}\left[\Vert v_K(X,Y,Z)\Vert^2\right]+O\left(\frac{K}{n} \right) \\
			=&O\left(\frac{K}{n}\right),
		\end{align*}
    \end{small}
	where $C$ is a universal constant and the third equality comes from Assumption \ref{ass:ciid},  implying that
		\begin{equation*}
			\left\{r_0(Y_t,Z_t)v_K(X_t, Y_t, Z_t)v_K^\top(X_t, Y_t, Z_t)\boldsymbol{\beta}^*_{K}-\varphi_K(X_t,Z_t)\right\}_{t=1}^{n}
		\end{equation*} 
		is a mean zero sequence since 
		\begin{equation*}
			\begin{aligned}
			&\mathbb{E}\left[r_0(Y_t,Z_t)v_K(X_t, Y_t, Z_t)v_K^\top(X_t, Y_t, Z_t)\boldsymbol{\beta}^*_{K}\right]\\
				=&\mathbb{E}\left[r_0(Y_t,Z_t)v_K(X_t, Y_t, Z_t)v_K^\top(X_t, Y_t, Z_t)\right]H_K^{-1}h_K\\
				=&\mathbb{E}\left[r_0(Y_t,Z_t)v_K(X_t, Y_t, Z_t)v_K^\top(X_t, Y_t, Z_t)\right]\left\{\mathbb{E}\left[r_0(Y,Z)v_K(X,Y,Z)v_K^\top(X,Y,Z)\right]\right\}^{-1}\\
				&\quad\cdot\int \mathbb{E}[v_K(X,y,Z)]f_Y(y)dy\\
				=&\int \mathbb{E}[v_K(X,y,Z)]f_Y(y)dy
				=\int v_K(x,y,z)f_Y(y)f_{X,Z}(x,z)dxdydz
				=\mathbb{E}\left\{\varphi_K(X_t,Z_t)\right\};
			\end{aligned}
		\end{equation*}
		The last inequality follows from the triangle inequality and
		$\mathbb{E}[\Vert\varphi(X,Z)\Vert^2]\leq\int\Vert v_K(x,y,z)\Vert^2$ $f_{X,Z}(x,z)f_Y(y)dxdydz$,  and the last equality follows from Assumptions \ref{ass:cbounded}(ii) and \ref{ass:cpi0-pi*}(ii).
		Then, by Chebyshev's inequality, we have
		\begin{align*}
			\eqref{eq:hhat-Hhatbeta_1}
			=&O_P\left(\sqrt{\frac{K}{n}}\right).
		\end{align*}
		Consider the term \eqref{eq:hhat-Hhatbeta_2}. Similar to the proof of establishing \eqref{Hhat-H}, we have
		\begin{align*}
			\|\eqref{eq:hhat-Hhatbeta_2}\|^2=&\left\|\frac{1}{n}\sum_{t=1}^{n}v_{K}(X_t, Y_t, Z_t)\pi_K^*(X_t, Y_t, Z_t)(\widehat r_{K_0}(Y_t,Z_t)-r_{0}(Y_t,Z_t))\right\|^2\notag\\
			\leq&O_P\left(1\right)\cdot \frac{1}{n}\sum_{t=1}^{n}| \pi_K^*(X_t, Y_t, Z_t)(\widehat r_{K_0}(Y_t,Z_t)-r_{0}(Y_t,Z_t))|^2\notag\\
			\leq &O_P(1)\cdot\frac{1}{n}\sum_{t=1}^{n}[\widehat r_{K_0}(Y_t,Z_t)-r_{0}(Y_t,Z_t)]^2
			=O_P\left(\frac{K_0}{n}+K_0^{-2\omega_{r}}\right),
		\end{align*}
		where the last equality holds in light of Theorem \ref{th_crate} (i).  Combining \eqref{eq:betahat-beta*}, \eqref{eq:Hhat-H}, the results for \eqref{eq:hhat-Hhatbeta_1},  
 \eqref{eq:hhat-Hhatbeta_2}, \eqref{eq:hhat-Hhatbeta_3} and \eqref{eq:hhat-Hhatbeta_4},  we have 
		\begin{equation}
			\Vert\widehat{\boldsymbol{\beta}}_{K,K_0}-\boldsymbol{\beta}^*_{K}\Vert=O_P\left(\left\{\sqrt{\frac{K_0}{n}}+K_0^{-\omega_{r}}\right\}+\sqrt{\frac{K}{n}}\right).\label{betarate}
		\end{equation}

		We turn to establish the rates of convergence for $\pi_K^*(\cdot)$ and $\widehat{\pi}_{K,K_0}(\cdot)$. By Assumption \ref{ass:cpi0-pi*} (ii), there exist   $\boldsymbol{\beta}_K \in \mathbb{R}^{K} $ and a positive constant $\omega_{\pi}>0$ such that
		$\sup_{(x,y,z)\in\mathcal{X}\times\mathcal{Y}\times \mathcal{Z}}|\pi_0(x,y,z)-\boldsymbol{\beta}^\top_Kv_{K}(x,y,z)|=O(K^{-\omega_{\pi}})$, we have
        \begin{small}
		\begin{align*}
			&\Vert{\boldsymbol{\beta}}_{K}^*-\boldsymbol{\beta}_K\Vert=\left\Vert H_K^{-1}\left(  h_K- H_K\boldsymbol{\beta}_K\right) \right\Vert\\
			=& \left\{ \left( h_K- H_K\boldsymbol{\beta}_K\right) ^{\top }H_{K}^{-1} H_{K}^{-1}\left(  h_K- H_{K}\boldsymbol{\beta}_K\right) \right\}^{1/2}\\
			\leq &\lambda_{\min}^{-1}\left(H_{K}\right)\cdot \left\Vert h_K- H_{K}\boldsymbol{\beta}_K\right\Vert \\
			=&O\left( 1\right) \cdot\left\Vert\int \mathbb{E}[v_K(X,y,Z)]f_Y(y)dy-\mathbb{E}\left[r_0(Y,Z)v_K(X,Y,Z)v_K^\top(X,Y,Z)\right]\boldsymbol{\beta}_K\right\Vert\\
			=&O\left( 1\right) \cdot\bigg\Vert\int \mathbb{E}[v_K(X,y,Z)]f_Y(y)dy-\mathbb{E}\left[r_0(Y,Z)v_K(X,Y,Z)\left(v_K^\top(X,Y,Z)\boldsymbol{\beta}_K-\pi_0(X,Y,Z)\right) \right]\\
			&\qquad\qquad -\mathbb{E}[r_0(Y,Z)v_K(X,Y,Z)\pi_0(X,Y,Z)]\bigg\Vert\\
			=&O\left( 1\right) \cdot\left\Vert\mathbb{E}\left[r_0(Y,Z)v_K(X,Y,Z)\left(v_K^\top(X,Y,Z)\boldsymbol{\beta}_K-\pi_0(X,Y,Z)\right) \right]\right\Vert\\
			\leq&O\left( 1\right)\cdot\sqrt{\mathbb{E}\left[\left|v_K^\top(X,Y,Z)\boldsymbol{\beta}_K-\pi_0(X,Y,Z)\right|^2\right]}\\
			\leq&O\left( 1\right)\cdot\sup_{(x,y,z)\in\mathcal{X}\times\mathcal{Y}\times \mathcal{Z}}|\pi_0(x,y,z)-\boldsymbol{\beta}^\top_Kv_{K}(x,y,z)|
			=O\left( K^{-\omega_{\pi}}\right), 
		\end{align*}
        \end{small}
		where the  fifth equality comes from
					\begin{align*}
				&\mathbb{E}[r_0(Y,Z)\pi_0(X,Y,Z)v_K(X,Y,Z)]\\
				=&\int v_K(x,y,z)\frac{f_Y(y)f_Z(z)}{f_{Y,Z}(y,z)}\frac{f_{X|Z}(x|z)f_{Y|Z}(y|z)}{f_{X,Y|Z}(x,y|z)}f_{X,Y,Z}(x,y,z)dxdydz\\
				=&\int v_K(x,y,z)f_Y(y)f_{X,Z}(x,z)dxdydz\\
				=&\int\mathbb{E}[v_K(X,y,Z)]f_Y(y)dy,
			\end{align*}
		and the second inequality comes from Assumption \ref{ass:cbounded} (i) and the property of least square projection. Then we can obtain
			\begin{align*}			&\sup_{(x,y,z)\in\mathcal{X}\times\mathcal{Y}\times\mathcal{Z}} \left|\pi_{K}^*(x,y,z)-\pi_0(x,y,z)\right|\\
				\leq&\sup_{(x,y,z)\in\mathcal{X}\times\mathcal{Y}\times\mathcal{Z}} \left|v_{K}^\top(x,y,z)\left({\boldsymbol{\beta}}_{K}^*-\boldsymbol{\beta}_K \right) \right|+\sup_{(x,y,z)\in\mathcal{X}\times\mathcal{Y}\times\mathcal{Z}} \left|v_K^\top(x,y,z)\boldsymbol{\beta}_K-\pi_{0}(x,y,z)\right|\\
				=&\sup_{(x,y,z)\in\mathcal{X}\times\mathcal{Y}\times \mathcal{Z}}\Vert v_K(x,y,z)\Vert\cdot\Vert{\boldsymbol{\beta}}_{K}^*-\boldsymbol{\beta}_K\Vert+O(K^{-\omega_{\pi}})\\
				=&O\left( \xi_KK^{-\omega_\pi}\right) ,
			\end{align*}
		and under the normalization  \eqref{eq:orthnormal}, we have
			\begin{align}	\label{pi*-pi0_L2}
				&\int\left|\pi_{K}^*(x,y,z)-\pi_0(x,y,z)\right|^2dF_{XYZ}(x,y,z)\notag\\
				\leq&2\int\left|v_{K}^\top(x,y,z)\left({\boldsymbol{\beta}}_{K}^*-\boldsymbol{\beta}_K \right) \right|^2dF_{XYZ}(x,y,z)\notag\\
                &+2\int\left|v_K^\top(x,y,z)\boldsymbol{\beta}_K-\pi_{0}(x,y,z)\right|^2dF_{XYZ}(x,y,z)\notag\\
				=&2\left\Vert{\boldsymbol{\beta}}_{K}^*-\boldsymbol{\beta}_K\right\Vert^2+O\left( K^{-2\omega_\pi}\right)\notag\\
				=&O\left( K^{-2\omega_\pi}\right),
			\end{align}
		and
		\begin{equation*}
			\begin{aligned}
				&	\frac{1}{n}\sum_{t=1}^{n}\left|\pi_{K}^*(X_t, Y_t, Z_t)-\pi_0(X_t, Y_t, Z_t)\right|^2\\
				\leq&\frac{2}{n}\sum_{t=1}^{n}\left|v_{K}^\top(X_t, Y_t, Z_t)\left( {\boldsymbol{\beta}}_{K}^*-\boldsymbol{\beta}_K \right) \right|^2+\frac{2}{n}\sum_{t=1}^{n}\left|v_K^\top(X_t, Y_t, Z_t)\boldsymbol{\beta}_K-\pi_0(X_t, Y_t, Z_t)\right|^2\\
				\leq&2\cdot \lambda_{\max}\left(\frac{1}{n}\sum_{t=1}^{n}v_{K}(X_t, Y_t, Z_t)v_{K}^\top(X_t, Y_t, Z_t)\right)\cdot\left\Vert{\boldsymbol{\beta}}_{K}^*-\boldsymbol{\beta}_K\right\Vert^2+O_P\left( K^{-2\omega_\pi}\right)\\
				=&O_P\left( K^{-2\omega_\pi}\right).\\
			\end{aligned}
		\end{equation*}

        We derive the rate for $\widehat{\pi}_{K,K_0}(x,y,z)$. First, from Assumption \ref{ass:cpi0-pi*} and (\ref{betarate}), we have
		\begin{equation*}
			\begin{aligned}
				&\sup_{(x,y,z)\in\mathcal{X}\times\mathcal{Y}\times\mathcal{Z}}|\widehat{\pi}_{K,K_0}(x,y,z)-\pi_0(x,y,z)|\\\leq&\sup_{(x,y,z)\in\mathcal{X}\times\mathcal{Y}\times\mathcal{Z}}|\widehat{\pi}_{K,K_0}(x,y,z)-\pi_{K}^*(x,y,z)|+\sup_{(x,y,z)\in\mathcal{X}\times\mathcal{Y}\times\mathcal{Z}}|\pi_{K}^*(x,y,z)-\pi_0(x,y,z)|\\
				\leq&\sup_{(x,y,z)\in\mathcal{X}\times\mathcal{Y}\times\mathcal{Z}}|v_{K}^\top(x,y,z)(\widehat{\boldsymbol{\beta}}_{K,K_0}-\boldsymbol{\beta}_K^*)|+\sup_{(x,y,z)\in\mathcal{X}\times\mathcal{Y}\times\mathcal{Z}} |\pi_{K}^*(x,y,z)-\pi_0(x,y,z)|\\
				=&O_P\left(\xi_{K}\left\{\sqrt{\frac{K_0}{n}}+K_0^{- \omega_r}+\sqrt{\frac{K}{n}} \right\}+\xi_KK^{-\omega_\pi} \right).  
			\end{aligned}
		\end{equation*}
		Second, under the normalization  $\mathbb{E}[v_{K}(X,Y,Z)v_{K}^\top(X,Y,Z)]=I_{K\times K}$ and $\int|v_{K}^\top(x,y,z)(\widehat{\boldsymbol{\beta}}_{K,K_0}-\boldsymbol{\beta}_K^*)|^2dF_{XYZ}(x,y,z)=\Vert \widehat{\boldsymbol{\beta}}_{K,K_0}-\boldsymbol{\beta}_K^*\Vert^2=O_P\left(K_0/n+K_0^{-2 \omega_r}+K/n\right) 
		$,  from triangle inequality, we have
			\begin{align*}
		&\int|\widehat{\pi}_{K,K_0}(x,y,z)-\pi_0(x,y,z)|^2dF_{XYZ}(x,y,z)\\
				\leq&2\int|\widehat{\pi}_{K,K_0}(x,y,z)-\pi_{K}^*(x,y,z)|^2dF_{XYZ}(x,y,z)\\&+2\int|\pi_{K}^*(x,y,z)-\pi_0(x,y,z)|^2dF_{XYZ}(x,y,z)\\
				=&2\int|v_{K}^\top(x,y,z)(\widehat{\boldsymbol{\beta}}_{K,K_0}-\boldsymbol{\beta}_K^*)|^2dF_{XYZ}(x,y,z)+O\left(K^{-2\omega_\pi
				} \right) \\
				=&O_P\left(\left\{\frac{K_0}{n}+K_0^{-2 \omega_r}\right\}+\left\{\frac{K}{n}+K^{-2\omega_\pi}\right\}\right).
			\end{align*}
		Third,  since
  \begin{equation}\label{cpihat-pi*}
		\begin{aligned}
			&\frac{1}{n}\sum_{t=1}^{n}\left [\widehat\pi_{K,K_0}(X_t, Y_t, Z_t)-\pi_K^*(X_t, Y_t, Z_t)\right]^2 \\
			=&\frac{1}{n}\sum_{t=1}^{n}[v_{K}^\top(X_t, Y_t, Z_t)(\widehat{\boldsymbol{\beta}}_{K,K_0}-\boldsymbol{\beta}_K^*)]^2\\
			\leq&\lambda_{\max}\left(\frac{1}{n}\sum_{t=1}^{n}v_{K}(X_t, Y_t, Z_t)v_{K}^\top(X_t, Y_t, Z_t)\right)\cdot \|\widehat{\boldsymbol{\beta}}_{K,K_0}-\boldsymbol{\beta}_K^*\|^2\\
			=&O_P\left(\frac{K_0}{n}+K_0^{-2 \omega_r}+\frac{K}{n}\right),	
		\end{aligned}
  \end{equation}
		Therefore,
		\begin{equation*}
			\begin{aligned}
				&\frac{1}{n}\sum_{t=1}^{n}\left|\widehat{\pi}_{K,K_0}(X_t, Y_t, Z_t)-\pi_0(X_t, Y_t, Z_t)\right|^2\\
				\leq&\frac{2}{n}\sum_{t=1}^{n}\left|\widehat{\pi}_{K,K_0}(X_t, Y_t, Z_t)-\pi_{K}^*(X_t, Y_t, Z_t)\right|^2+\frac{2}{n}\sum_{t=1}^{n}\left|\pi_{K}^*(X_t, Y_t, Z_t)-\pi_0(X_t, Y_t, Z_t)\right|^2\\
				=&O_P\left(\left\{\frac{K_0}{n}+K_0^{- 2\omega_r}\right\}+\left\{\frac{K}{n}+K^{-2\omega_\pi}\right\}\right).
			\end{aligned}
		\end{equation*}

		\clearpage

	\section{Proof of Theorem \ref{th_cH0_fixk}}
\label{app:th_cH0_fixk}
Under $H_0$, $\pi_0(x,y,z)=\pi_K^*(x,y,z)\equiv 1$ a.s.. 
By Theorem \ref{th_crate} and Chebyshev inequality, we have
\begin{equation*}
	\begin{aligned}
	2	\widehat{I}_{K,K_0}&=\frac{1}{n}\sum_{t=1}^n\left\{\widehat{\pi}_{K,K_0}(X_t, Y_t, Z_t)-1\right\}^2\widehat{r}_{K_0}(Y_t,Z_t)\\
		&=\frac{1}{n}\sum_{t=1}^{n}\{\widehat{\pi}_{K,K_0}(X_t, Y_t, Z_t)-\pi_{K}^*(X_t, Y_t, Z_t)\}^2r_{0}(Y_t,Z_t)\\
		&\quad+	\frac{1}{n}\sum_{t=1}^{n}\{\widehat{\pi}_{K,K_0}(X_t, Y_t, Z_t)-\pi_{K}^*(X_t, Y_t, Z_t)\}^2\{\widehat{r}_{K_0}(Y_t,Z_t)-{r}_{0}(Y_t,Z_t)\}\\
		&=\int r_0(y,z)\left[v_{K}^\top(x,y,z)\left( \widehat{\boldsymbol{\beta}}_{K,K_0}-\boldsymbol{\beta}^*_K\right) \right]^2dF_{X,Y,Z}(x,y,z)\\
         &\quad+O_P\left(\xi_K\sqrt{\frac{K}{n}}\left\{\frac{K_0}{n}+K_0^{-2\omega_r}+\frac{K}{n}\right\}   \right)\\
		 &\quad+O_P\left(\xi_K\left\{K_0^{-\omega_{r}}+\sqrt{\frac{K_0}{n}}\right\}\left\{\frac{K_0}{n}+K_0^{-2\omega_r}+\frac{K}{n}\right\} \right)  \\
		&=\left( \widehat{\boldsymbol{\beta}}_{K,K_0}-\boldsymbol{\beta}^*_K\right)^{\top} \mathbb{E}\left[r_0(Y,Z)v_K(X,Y,Z)v_K(X,Y,Z)^{\top}\right]\left( \widehat{\boldsymbol{\beta}}_{K,K_0}-\boldsymbol{\beta}^*_K\right)\\
		&\quad+O_P\left(\xi_K\left\{K_0^{-\omega_{r}}+\sqrt{\frac{K_0}{n}}+\sqrt{\frac{K}{n}}\right\}\left\{\frac{K_0}{n}+K_0^{-2\omega_r}+\frac{K}{n}\right\} \right), 
	\end{aligned}
\end{equation*}
where the third equality comes from
	\begin{align*}
		&	\frac{1}{n}\sum_{t=1}^nr_0(Y_t,Z_t)\left\{v_{K}^\top(X_t, Y_t, Z_t)(\widehat{\boldsymbol{\beta}}_{K,K_0}-\boldsymbol{\beta}^*_K)\right\}^2
       -\mathbb{E}\left[ r_0(Y,Z)\left\{v_{K}^\top(X,Y,Z)(\widehat{\boldsymbol{\beta}}_{K,K_0}-\boldsymbol{\beta}^*_K)\right\}^2\right]\\
		=&(\widehat{\boldsymbol{\beta}}_{K,K_0}-\boldsymbol{\beta}^*_K)^\top\left\{\frac{1}{n}\sum_{t=1}^{n}r_0(Y_t,Z_t)v_K(X_t, Y_t, Z_t)v_K^\top(X_t, Y_t, Z_t)\right.\\
        &\left.-\mathbb{E}\left[r_0(Y,Z)v_K(X,Y,Z)v_K^\top(X,Y,Z)\right]\right\}(\widehat{\boldsymbol{\beta}}_{K,K_0}-\boldsymbol{\beta}^*_K)\\
		\leq&\left\Vert\frac{1}{n}\sum_{t=1}^{n}r_0(Y_t,Z_t)v_K(X_t, Y_t, Z_t)v_K^\top(X_t, Y_t, Z_t)-\mathbb{E}\left[r_0(Y,Z)v_K(X,Y,Z)v_K^\top(X,Y,Z)\right]\right\Vert\\
        &\cdot\left\Vert\widehat{\boldsymbol{\beta}}_{K,K_0}-\boldsymbol{\beta}^*_K\right\Vert^2\\
		=&O_P\left(\xi_K\sqrt{\frac{K}{n}}\left\{\frac{K_0}{n}+K_0^{-2\omega_r}+\frac{K}{n}\right\}   \right).
	\end{align*}
Then, by Lemma \ref{prop:vtilde}, we can obtain
\begin{align}
2	\widehat{I}_{K,K_0}
	=&\left( \frac{1}{n}\sum_{t=1}^{n} \widetilde v_{K}^\top(X_t,Y_t,Z_t)\right) H_{K}^{-1}\left( \frac{1}{n}\sum_{t=1}^{n}\widetilde v_{K}(X_t,Y_t,Z_t)\right) +o_P\left( \frac{\sqrt{K}}{n}\right), 	\label{Ichat}
\end{align}	
where under $H_0$
\begin{equation*}
\begin{aligned}
	\widetilde v_{K}(X_t,Y_t,Z_t)
=&\mathbb{E}\left[v_K(X_t,Y_t,Z_t)r_0(Y_t,Z_t)|X_t,Z_t\right]-\mathbb{E}\left[v_K(X_t,Y_t,Z_t)r_0(Y_t,Z_t)|Z_t\right]\\
&+\mathbb{E}\left[v_K(X,Y,Z)r_0(Y,Z)\right]-v_{K}(X_t,Y_t,Z_t)r_0(Y_t,Z_t).
\end{aligned}
\end{equation*}
		Define
		\begin{equation*}
			S_0(X_t, Y_t, Z_t;x,y,z)=\sqrt{r_0(y,z)}v_{K}^\top(x,y,z)H_K^{-1}\widetilde{v}_{K}(X_t,Y_t,Z_t).
		\end{equation*}
Then we have		\begin{equation}\label{nhatI}			2n\times\widehat{I}_{K,K_0}
			=\int\left\{\frac{1}{\sqrt{n}}\sum_{t=1}^{n}S_0(X_t, Y_t, Z_t;x,y,z)\right\}^2dF_{X,Y,Z}(x,y,z)+o_P(1)
		\end{equation}
and
\begin{small}
		\begin{equation}
			\begin{aligned}
				&\lim_{n\rightarrow \infty} Cov\left(\frac{1}{\sqrt{n}}\sum_{t=1}^{n}S_0(X_t, Y_t, Z_t;x,y,z),\frac{1}{\sqrt{n}}\sum_{t=1}^{n}S_0(X_t, Y_t, Z_t;x',y',z') \right) \\
				=&2\sqrt{r_0(y,z)}v_{K}^\top(x,y,z)H_K^{-1}\sum_{k=1}^{\infty }\mathbb{E}\left[\widetilde{v}_{K}(X_1,Y_1,Z_1)\widetilde{v}_{K}^\top(X_{1+k},Y_{1+k},Z_{1+k})\right]H_K^{-1}v_K(x',y',z')\sqrt{r_0(y',z')}\\
                &+\sqrt{r_0(y,z)}v_{K}^\top(x,y,z)H_K^{-1}\mathbb{E}\left[\widetilde{v}_{K}(X_t,Y_t,Z_t)\widetilde{v}_{K}^\top(X_t,Y_t,Z_t)\right]H_K^{-1}v_K(x',y',z')\sqrt{r_0(y',z')}\\
				:=&V_{K}((x,y,z),(x',y',z')).\label{cov_VK}
			\end{aligned}
		\end{equation}
        \end{small}
		 For every fixed $K$, it is easy to verify that $\{S_0(X_t, Y_t, Z_t;x,y,z):x \in \mathcal{X},  y \in \mathcal{Y}, z \in \mathcal{Z}\}$ is a uniformly bounded VC subgraph  \citep{kosorok2008introduction}  using the smoothness of $u_{K_0}(y,z)$ and $v_{K}(x,y,z)$ and the compactness of $ \mathcal{X}$, $ \mathcal{Y}$ and $ \mathcal{Z}$. Combining with $\beta_m=O(\rho ^m)$  for $0<\rho<1$ in Assumption \ref{ass:ciid}, conditions in Corollary 2.1 in \cite{arcones1994central} are satisfied,  then $n^{-1/2}\sum_{t=1}^nS_0(X_t, Y_t, Z_t;\cdot,%
		\cdot,\cdot)$ converges to a Gaussian process $	\mathbb{G}_0(x,y,z)$ with mean zero and
		covariance function $\{ V_{K}((x,y,z),(x^{\prime },y^{\prime },z^{\prime })):(x,y,z),
		(x^{\prime },y^{\prime },z^{\prime })\in \mathbb{R}^{d_X}\times \mathbb{R}^{d_Y} \times \mathbb{R}^{d_Z}\}$.
		Hence, 
		\begin{align*}
			2n \times \widehat{I}_{K,K_0}&=\int\Bigg\{\frac{1}{\sqrt{n}}%
			\sum_{t=1}^n	S_0(X_t, Y_t, Z_t;x,y,z)\Bigg\}^2dF_{X,Y,Z}(x,y,z) +o_P(1)\\
			& \xrightarrow{d}\int
			|\mathbb{G}_0(x,y,z)|^2dF_{X,Y,Z}(x,y,z).
		\end{align*}
		In Chapter 1, Section 2 of \cite{kuo1975gaussian}, $\int
		|	\mathbb{G}_0(x,y,z)|^2dF_{X,Y,Z}(x,y,z)$ has the following representation: 
		\begin{align*}
			\int |	\mathbb{G}_0(x,y,z)|^2dF_{X,Y,Z}(x,y,z)\overset{d}{=} \sum_{j=1}^\infty \lambda_j
			\chi^2_j(1),
		\end{align*}
		where $\chi^2_j(1)$s are independent chi-squared random variables with one
		degree of freedom, and the nonnegative constants, $\{\lambda_j\}$, are
		eigenvalues of the following equation: 
		\begin{align}  \label{eq:eigenvalues}
			\int V_{K}((x,y,z),(x^{\prime },y^{\prime
			},z^{\prime}))\phi(x,y,z)dF_{X,Y,Z}(x,y,z)=\lambda\phi(x^{\prime },y^{\prime },z^{\prime}).
		\end{align}
		
		\clearpage
    \section{Proof of Theorem \ref{th_cH0}}
		\label{app:th_cH0}
  To aid the presentation, we denote $O_t:=(X_t^\top,Y_t^\top,Z_t^\top)^\top$,  $o:=(x^\top,y^\top,z^\top)^\top$, $\mathcal{O}=\mathcal{X}\times\mathcal{Y}\times\mathcal{Z}$. 
		By (\ref{Ichat}), we have
		\begin{equation}
			\begin{aligned}
   \label{IU1}
				&2\widehat{I}_{K,K_0}-\mathbb{E}\left[\left\Vert\frac{1}{n}\sum_{t=1}^{n}H_K^{-\frac{1}{2}}\widetilde{v}_{K}(O_t)\right\Vert^2\right]\\
				=&\left\Vert\frac{1}{n}\sum_{t=1}^{n}H_K^{-\frac{1}{2}}\widetilde v_{K}(O_t)\right\Vert^2-\mathbb{E}\left[\left\Vert\frac{1}{n}\sum_{t=1}^{n}H_K^{-\frac{1}{2}}\widetilde{v}_{K}(O_t)\right\Vert^2\right]+o_P\left(\frac{\sqrt {K}}{n}\right) \\
				=&\frac{2}{n^2}\sum_{1\leq i<j\leq n}\left\{\widetilde V_{K}(O_i,O_j)-\mathbb{E}\left[\widetilde{V}_{K}(O_i,O_j)\right]\right\}\\
                &+\frac{1}{n^2}\sum_{t=1}^{n}\left\{\widetilde{V}_{K}(O_t,O_t)-\mathbb{E}\left[\widetilde{V}_{K}(O_t,O_t)\right]\right\}+o_P\left(\frac{\sqrt {K}}{n}\right),
			\end{aligned}
		\end{equation}
		where
		$
		\widetilde{V}_{K}(O_i,O_j)=\widetilde v_{K}^\top(O_i)H_K^{-1}\widetilde v_{K}(O_j)
		.$   Note that
  \begin{align*}
     & \mathbb{E}\left[\left\{\frac{1}{n}\sum_{t=1}^n\widetilde{V}_{K}(O_t,O_t)-\mathbb{E}\left[\widetilde{V}_{K}(O_t,O_t)\right]\right\}^2\right]\\
      =&\frac{1}{n^2}\sum_{t=1}^n\mathbb{E}\left[\left\{\widetilde{V}_{K}(O_t,O_t)-\mathbb{E}\left[\widetilde{V}_{K}(O_t,O_t)\right]\right\}^2\right]\\
      &+\frac{2}{n^2}\sum_{i<j}\mathbb{E}\left[\left\{\widetilde{V}_{K}(O_i,O_i)-\mathbb{E}\left[\widetilde{V}_{K}(O_i,O_i)\right]\right\}\left\{\widetilde{V}_{K}(O_j,O_j)-\mathbb{E}\left[\widetilde{V}_{K}(O_j,O_j)\right]\right\}\right]\\
      \leq&\frac{1}{n^2}\sum_{t=1}^n
\mathbb{E}\left[\widetilde{V}_{K}^2(O_t,O_t)\right]+\frac{2}{n^2}\sum_{t=1}^n\sum_{\tau=1}^{n-t}Cov\left(\widetilde{V}_{K}(O_t,O_t), \widetilde{V}_{K}(O_{t+\tau},O_{t+\tau})\right)\\
\leq&\frac{8}{n^2}\sum_{t=1}^n\sum_{\tau=1}^{n-t}\left\{\mathbb{E}\left[\widetilde{V}_{K}^4(O_t,O_t)\right]\right\}^{\frac{1}{2}}\sqrt{\beta_\tau}+O\left(\frac{K^2}{n}\right)\\
=&O\left(\frac{\xi_K^2K}{n}\right),
      \end{align*}
  where the second inequality comes from Assumption \ref{ass:eigen} (iii),  Lemma \ref{Davydov}, and
  \begin{equation*}
      \begin{aligned} \mathbb{E}\left[\widetilde{V}_{K}^2(O_t,O_t)\right]
  =&\mathbb{E}\left[ \widetilde{v}_{K}^\top(O_t)H_K^{-1}\widetilde{v}_{K}(O_t)\widetilde{v}_{K}^\top(O_t)H_K^{-1}\widetilde{v}_{K}(O_t)\right]\\
  \leq& \lambda_{\min}^{-2}\left( H_K\right)\cdot \tr\left\{\mathbb{E}\left[\widetilde{v}_{K}(O_t)\widetilde{v}_{K}^\top(O_t)\widetilde{v}_{K}(O_t)\widetilde{v}_{K}^\top(O_t)\right]\right\}\\
    =&O(1)\cdot\sum_{i=1}^{K}\sum_{j=1}^K
\mathbb{E}\left[\widetilde{v}_{K,i}^2(O_t)\widetilde{v}_{K,j}^2(O_t)\right] \\
\leq&O(1)\cdot\sum_{i=1}^{K}\sum_{j=1}^K
\left\{\mathbb{E}\left[\widetilde{v}_{K,i}^4(O_t)\right] \right\}^{\frac{1}{2}}\left\{\mathbb{E}\left[\widetilde{v}_{K,j}^4(O_t)\right] \right\}^{\frac{1}{2}}\\
=&O\left(K^2\right),
\end{aligned}
\end{equation*}
  and the last equality comes from  Assumption \ref{ass:ciid} and
  \begin{equation*}
\mathbb{E}\left[\widetilde{V}_{K}^4(O_t,O_t)\right]
\leq\sup_{o\in\mathcal{O}}|\widetilde v_{K}^\top(o)H_K^{-1}\widetilde v_{K}(o)|^2
\cdot\mathbb{E}\left[\widetilde{V}_{K}^2(O_t,O_t)\right]=O\left(\xi_K^4K^2\right).
  \end{equation*}
  Then we can obtain $\frac{1}{n}\sum_{t=1}^{n}\left\{\widetilde{V}_{K}(O_t,O_t)-\mathbb{E}\left[\widetilde{V}_{K}(O_t,O_t)\right]\right\}=O_P\left(\xi_K\sqrt{K/n}\right)$ and from (\ref{IU1})
  \begin{small}
		\begin{equation}
			\label{IU2}
			\begin{aligned}
				2\widehat{I}_{K,K_0}-\mathbb{E}\left[\left\Vert\frac{1}{n}\sum_{t=1}^{n}H_K^{-\frac{1}{2}}\widetilde{v}_{K}(O_t)\right\Vert^2\right]
				 =\frac{2}{n^2}\sum_{1\leq i<j\leq n}\left\{\widetilde V_{K}(O_i,O_j)-\mathbb{E}\left[\widetilde{V}_{K}(O_i,O_j)\right]\right\}+o_P\left(\frac{\sqrt {K}}{n}\right).
			\end{aligned}
		\end{equation}
        \end{small}
Define \begin{align*}
\sigma_{K}^2:=2\mathbb{E}\left[\widetilde{V}_{K}^2(O_0,\overline{O}_0)\right]=2\mathbb{E}\left[\{\widetilde v_{K}^\top(O_0)H_K^{-1}\widetilde v_{K}(\overline{O}_0)\}^2\right],
		\end{align*}
        and 
        \begin{align*}U_n=\frac{1}{n}\sum_{1\leq i<j\leq n} \left\{\frac{\widetilde{V}_{K}(O_i,O_j)}{\sigma_{K}}-\frac{\mathbb{E}[\widetilde{V}_{K}(O_i,O_j)]}{\sigma_{K}}\right\},
        \end{align*}
        where $\overline{O}_0$ is an independent copy of $O_0$.
       It is obvious that 
       $  \sigma_{K}^2=O(K)$. We verify the conditions  in Lemma \ref{T1997} below. 
For $1\leq i\leq n$, since
\begin{align*}
&\mathbb{E}\left[\widetilde{V}_{K}(O_i,O_0)^4\right]\\
\leq&\sup_{(o_0,o_i)\in\left(\mathcal{O}_0\times\mathcal{O}_i\right)}\left|\widetilde{v}_{K}^\top(o_i)H_K^{-1}\widetilde{v}_{K}(o_0)\right|^2\cdot\mathbb{E}\left[\left\{\widetilde{v}_{K}^\top(O_i)H_K^{-1}\widetilde{v}_{K}(O_0)\right\}^2\right]\\
\leq&\xi_K^4\ \lambda_{\min}^{-2}\left( H_K\right)\cdot \mathbb{E}\left[\widetilde{v}_{K}^\top(O_i)H_K^{-1}\widetilde{v}_{K}(O_0)\widetilde{v}_{K}^\top(O_0)H_K^{-1}\widetilde{v}_{K}(O_i)\right]\\
\leq&\xi_K^4\ \lambda_{\min}^{-4}\left( H_K\right)\cdot \sup_{(o_0,o_i)\in\left(\mathcal{O}_0\times\mathcal{O}_i\right)} \frac{f_{O_0,O_i}\left(o_0,o_i\right)}{f_{O_0}(o_0)f_{O_i}(o_i)}\cdot\tr\left\{\mathbb{E}\left[\widetilde{v}_{K}(O_i)\widetilde{v}_{K}^\top(O_i)\right]\mathbb{E}\left[\widetilde{v}_{K}(O_0)\widetilde{v}_{K}^\top(O_0)\right]\right\}\\
\leq&O\left(\xi_K^4\right)\cdot\lambda_{\max}\left(\mathbb{E}\left[\widetilde{v}_{K}(O_i)\widetilde{v}_{K}^\top(O_i)\right]\right)\cdot \tr\left(\mathbb{E}\left[\widetilde{v}_{K}(O_0)\widetilde{v}_{K}^\top(O_0)\right]\right)\\
=&O\left(\xi_K^4K\right),
\end{align*}
for $\delta_0>0$, we have
\begin{align*}
\mathbb{E}\left[\left\{\frac{\widetilde{V}_{K}(O_i,O_0)}{\sigma_{K}}\right\}^{4+\delta_0}\right]\leq O\left(\frac{\xi_K^{2\delta_0}}{K^{(4+\delta_0)/2}}\right)\cdot \mathbb{E}\left[\widetilde{V}_{K}(O_i,O_0)^4\right]=O\left(\frac{\xi_K^{4+2\delta_0}}{K^{(4+\delta_0)/2}}K \right), 
\end{align*}
and 
$u_n(4+\delta_0)=O\left( \xi_K^{\frac{4+2\delta_0}{4+\delta_0}}K^{\frac{1}{4+\delta_0}}/\sqrt{K}\right) =O\left(n^{\gamma_0} \right) $ with $\gamma_0=(2+\delta_0)/(8+2\delta_0)\in\left(0,1/2\right) $. Thus   condition (1) in Lemma \ref{T1997} is satisfied.
Under Assumption \ref{ass:fo_ratio}, we have
\begin{align*}
&\mathbb{E}\left[G_{n0}^2(O_i,O_0)\right]=\frac{1}{\sigma_{K}^4}\mathbb{E}\left[\left\{\widetilde{v}_{K}^\top(o_i)H_K^{-1}\mathbb{E}\left[\widetilde{v}_{K}(O_0)\widetilde{v}_{K}^\top(O_0)\right]H_K^{-1}\widetilde{v}_{K}(o_0)\right\}^2\right]\\
=&O\left(\frac{1}{K^2}\right)\int\left\{\widetilde{v}_{K}^\top(o_i)H_K^{-1}\mathbb{E}\left[\widetilde{v}_{K}(O_0)\widetilde{v}_{K}^\top(O_0)\right]H_K^{-1}\widetilde{v}_{K}(o_0)\right\}^2f_{\mathcal{O}_0, \mathcal{O}_i}(o_0,o_i)do_0do_i\\
\leq&O\left(\frac{1}{K^2}\right)\sup_{(o_0,o_i)\in\left(\mathcal{O}_0\times\mathcal{O}_i\right)} \frac{f_{O_0,O_i}\left(o_0,o_i\right)}{f_{O_0}(o_0)f_{O_i}(o_i)}\\
&\cdot  \int\left\{\widetilde{v}_{K}^\top(o_i)H_K^{-1}\mathbb{E}\left[\widetilde{v}_{K}(O_0)\widetilde{v}_{K}^\top(O_0)\right]H_K^{-1}\widetilde{v}_{K}(o_0)\right\}^2 f_{O_0}(o_0)f_{O_i}(o_i) do_0do_i\\
	=&O\left(\frac{1}{K^2}\right)\tr\bigg\{\mathbb{E}\left[\widetilde{v}_{K}(O_i)\widetilde{v}_{K}^\top(O_i)\right]H_K^{-1}\mathbb{E}\left[\widetilde{v}_{K}(O_0)\widetilde{v}_{K}^\top(O_0)\right]H_K^{-1}\mathbb{E}\left[\widetilde{v}_{K}(O_0)\widetilde{v}_{K}^\top(O_0)\right]\\
    &H_K^{-1}\mathbb{E}\left[\widetilde{v}_{K}(O_0)\widetilde{v}_{K}^\top(O_0)\right]H_K^{-1}\bigg\}\\
 \leq&O\left(\frac{1}{K^2}\right) \lambda_{\min}^{-4}\left( H_K\right)\cdot\lambda_{\max}^3\left(\mathbb{E}\left[\widetilde{v}_{K}(O_0)\widetilde{v}_{K}^\top(O_0)\right]\right) \cdot\tr\left\{\mathbb{E}\left[\widetilde{v}_{K}(O_i)\widetilde{v}_{K}^\top(O_i)\right]\right\}\\
	=&O\left(\frac{1}{K} \right) .
\end{align*}
Similarly, we can show 	$ \mathbb{E}\left[G_{n0}^2(O_0,\overline{O}_0)\right]=O\left(1/K\right)$, thus $v_n(2)=O\left(1/\sqrt{K} \right) =o(1)$ and   condition (2) in Lemma \ref{T1997} is satisfied.
In the same way, for condition (3), we have
	\begin{align*}	&\mathbb{E}\left[G_{n0}^{2+\delta_0/2}(O_0,O_0)\right]\\
	=&\frac{1}{\left(\sigma_K^2\right)^{2+\delta_0/2}}\cdot\sup_{o_0\in\mathcal{O}_0}\left|\widetilde{v}_{K}^\top(o_0)H_K^{-1}\mathbb{E}\left[\widetilde{v}_{K}(O_0)\widetilde{v}_{K}^\top(O_0)\right]H_K^{-1}\widetilde{v}_{K}(o_0)\right|^{\delta_0/2}\\
    &\cdot\mathbb{E}\left[\left\{\widetilde{v}_{K}^\top(O_0)H_K^{-1}\mathbb{E}\left[\widetilde{v}_{K}(O_0)\widetilde{v}_{K}^\top(O_0)\right]H_K^{-1}\widetilde{v}_{K}(O_0)\right\}^{2}\right]\\
	\leq&\frac{1}{\left(\sigma_K^2\right)^{2+\delta_0/2}}\cdot O\left(\left\{\xi_K^2K\right\}^{\delta_0/2}\right)\cdot\lambda_{\max}^2\left( \mathbb{E}\left[\widetilde{v}_{K}(O_0)\widetilde{v}_{K}^\top(O_0)\right]\right)\cdot\lambda_{\min}^{-4}\left( H_K\right)\\
    &\cdot\tr\left\{\mathbb{E}\left[\widetilde{v}_{K}(O_0)\widetilde{v}_{K}^\top(O_0)\widetilde{v}_{K}(O_0)\widetilde{v}_{K}^\top(O_0)\right]\right\}\\
=&O\left(\frac{\xi_K^{\delta_0}K^{2+\delta_0/2}}{\left(\sigma_K^2\right)^{2+\delta_0/2}}\right),
	\end{align*}
and  $ w_n(2+\delta_0/2)=O\left( \xi_K^{2\delta_0/(4+\delta_0)} \right) =o\left( \sqrt{n}\right).$
 Similarly, we have
		\begin{align*}
&\mathbb{E}\left[G_{nj}^2(O_i,O_0)\right]=\frac{1}{\sigma_K^4}\mathbb{E}\left[\left\{\widetilde{v}_{K}^\top(O_i)H_K^{-1}\mathbb{E}\left[\widetilde{v}_{K}(O_j)\widetilde{v}_{K}^\top(O_0)\right]H_K^{-1}\widetilde{v}_{K}(O_0)\right\}^2\right]\\
  =&O\left(\frac{1}{K^2}\right)\int\left\{\widetilde{v}_{K}^\top(o_i)H_K^{-1}\mathbb{E}\left[\widetilde{v}_{K}(O_j)\widetilde{v}_{K}^\top(O_0)\right]H_K^{-1}\widetilde{v}_{K}(o_0)\right\}^2f_{\mathcal{O}_0, \mathcal{O}_i}(o_0,o_i)do_0do_i\\
\leq&O\left(\frac{1}{K^2}\right)\sup_{(o_0,o_i)\in\left(\mathcal{O}_0\times\mathcal{O}_i\right)} \frac{f_{O_0,O_i}\left(o_0,o_i\right)}{f_{O_0}(o_0)f_{O_i}(o_i)} \int\left\{\widetilde{v}_{K}^\top(o_i)H_K^{-1}\mathbb{E}\left[\widetilde{v}_{K}(O_j)\widetilde{v}_{K}^\top(O_0)\right]H_K^{-1}\widetilde{v}_{K}(o_0)\right\}^2 \\&\cdot f_{O_0}(o_0)f_{O_i}(o_i) do_0do_i\\
  =&O\left(\frac{1}{K^2}\right)\tr\left\{\mathbb{E}\left[\widetilde{v}_{K}(O_i)\widetilde{v}_{K}^\top(O_i)\right]H_K^{-1}\mathbb{E}\left[\widetilde{v}_{K}(O_j)\widetilde{v}_{K}^\top(O_0)\right]H_K^{-1}\mathbb{E}\left[\widetilde{v}_{K}(O_0)\widetilde{v}_{K}^\top(O_0)\right]H_K^{-1}\right.\\
  &\left.\mathbb{E}\left[\widetilde{v}_{K}(O_0)\widetilde{v}_{K}^\top(O_j)\right]H_K^{-1}\right\}\\	
 \leq&O\left(\frac{1}{K^2}\right) \lambda_{\min}^{-4}\left( H_K\right)\cdot\lambda_{\max}^2\left(\mathbb{E}\left[\widetilde{v}_{K}(O_0)\widetilde{v}_{K}^\top(O_0)\right]\right) \cdot\tr\left\{\mathbb{E}\left[\widetilde{v}_{K}(O_0)\widetilde{v}_{K}^\top(O_j)\right]\mathbb{E}\left[\widetilde{v}_{K}(O_j)\widetilde{v}_{K}^\top(O_0)\right]\right\}
  \\
		=&O\left(\frac{1}{K}\right).
		\end{align*}
 Similarly, we can show $\mathbb{E}\left[G_{nj}^2(O_0,O_i)\right]=O\left(1/K\right)$ and  $\mathbb{E}\left[G_{nj}^2(O_0,\overline{O}_0)\right]=O\left(1/K\right)$. Then we have $z_n(2)=O\left( 1/\sqrt{K}\right)$ and $z_n(2)n^{\gamma_1}=O\left(n^{\gamma_1}/\sqrt{K}\right)=O(1)$  by Assumption \ref{ass:Kinf}; thus condition (4) in Lemma \ref{T1997} is satisfied.
	Therefore, from Lemma \ref{T1997} we have
  $ U_n\xrightarrow{d}\mathcal{N}(0,1/4)$.

    Next, we  show $\frac{1}{n}\sum_{1\leq i<j\leq n}\mathbb{E}\left[\widetilde V_{K}(O_i,O_j)\right]/\sigma_K=o(1).$
	Let $m=\left[L\log n\right]$ (the integer part of $L \log n$), where $L$ is a large positive constant so that $n^4\beta_m^{\delta/(1+\delta)}=o(1)$ for some $\delta >0$ by Assumption \ref{ass:ciid}.
          We consider two different cases: $j-i>m$ and $j-i\leq m$. For $j-i>m$, we have
	\begin{equation*}
		\begin{aligned}
			\frac{1}{n\sqrt{K}}\sum_{j-i>m}\mathbb{E}\left[\widetilde V_{K}(O_i,O_j)\right]&\leq\frac{1}{n\sqrt{K}}\sum_{j-i>m}4M^{1/(1+\delta)}\beta_m^{\delta/(1+\delta)}\\
		&	\leq\frac{1}{n\sqrt{K}}\sum_{j-i>m}4\xi_K^2\beta_m^{\delta/(1+\delta)}\\
		&	\leq 4n\xi_K^2\beta_m^{\delta/(1+\delta)}/\sqrt{K}\\
		&=o(1),
		\end{aligned}
	\end{equation*}
	where  $M=\max\left\{\int |\widetilde V_{K}(o_i,o_j)|^{1+\delta}dF(o_i,o_j), \int |\widetilde V_{K}(o_i,o_j)|^{1+\delta}dF(o_i)dF(o_j) \right\}$.  For $j-i\leq m$, under Assumption \ref{ass:Kinf} and \ref{ass:covK}, we have 
	\begin{equation*}
		\begin{aligned}
			\frac{1}{n\sqrt{K}}\sum_{0<j-i\leq m}\mathbb{E}\left[\widetilde V_{K}(O_i,O_j)\right]&\leq\frac{1}{n\sqrt{K}}\cdot nm\cdot\max_{i<j\leq i+m}\left|\mathbb{E}\left[\widetilde V_{K}(O_i,O_j)\right]\right|\\
			&=\frac{m}{\sqrt{K}}\max_{i<j\leq i+m}\left|\mathbb{E}\left[\widetilde v_{K}^\top(O_i)H_K^{-1}\widetilde v_{K}(O_j)\right]\right|\\
			&=O\left(\frac{m}{\sqrt{K}}\right) \\
           & =o(1).
		\end{aligned}
	\end{equation*}	
   Finally, combining with 
	\begin{equation*}
B_{K}:=\frac{1}{2n}\mathbb{E}\left[\widetilde{v}_{K}^\top(O)H_K^{-1}\widetilde{v}_{K}(O)\right],
		\end{equation*}
    and   (\ref{IU2}), we can obtain
		\begin{equation*}
		\frac{2n\{\widehat{I}_{K,K_0}-B_{K}\}}{\sigma_{K}}\xrightarrow{d}\mathcal{N}(0,1) \text{ under } H_0.
	\end{equation*}

		\clearpage

		\section{Proof of Theorem $\ref{th_cH1n_fixK}$}
		\label{sec:th_cH1n_fixK}
	Under $H_{1n}$, we have
\begin{small}
			\begin{equation}
            \begin{aligned}
			2	\widehat{I}_{K,K_0}
				=&\frac{1}{n}\sum_{t=1}^{n}\left\{\widehat \pi_K(X_t, Y_t, Z_t)-1\right\}^2\widehat{r}_{K_0}(Y_t,Z_t) \\
				=&\frac{1}{n}\sum_{t=1}^{n}\left\{\widehat \pi_K(X_t, Y_t, Z_t)-1\right\}^2r_0(Y_t,Z_t)\\
                &+\frac{1}{n}\sum_{t=1}^{n}\left\{\widehat \pi_K(X_t, Y_t, Z_t)-1\right\}^2\left( \widehat{r}_{K_0}(Y_t,Z_t)-r_0(Y_t,Z_t)\right)  \\
				=&\frac{1}{n}\sum_{t=1}^{n}\left\{\widehat\pi_K(X_t, Y_t, Z_t)-\pi_K^*(X_t, Y_t, Z_t)+\pi_K^*(X_t, Y_t, Z_t)-1\right\}^2r_0(Y_t,Z_t) \\
				&+\frac{1}{n}\sum_{t=1}^{n}\left\{\widehat \pi_K(X_t, Y_t, Z_t)-\pi_K^*(X_t,Y_t,Z_t)+\pi_K^*(X_t,Y_t,Z_t)-1\right\}^2\left( \widehat{r}_{K_0}(Y_t,Z_t)-r_0(Y_t,Z_t)\right)  \\
				=&\frac{1}{n}\sum_{t=1}^{n}\left\{\widehat\pi_K(X_t, Y_t, Z_t)-\pi_K^*(X_t, Y_t, Z_t)+\pi_K^*(X_t, Y_t, Z_t)-1\right\}^2r_0(Y_t,Z_t) \\
				&+O_P\left( \xi_K\left\{\frac{K_0}{n}+K_0^{-2\omega_{r}}+\frac{K}{n}\right\}\left\{K_0^{-\omega_{r}}+\sqrt{\frac{K_0}{n}}\right\}\right) +O_P\left(d_n^2\left\{K_0^{-\omega_{r}}+\sqrt{\frac{K_0}{n}}\right\} \right) \\
					=&\frac{1}{n}\sum_{t=1}^{n}\left\{\widehat\pi_K(X_t, Y_t, Z_t)-\pi_K^*(X_t, Y_t, Z_t)+\pi_K^*(X_t, Y_t, Z_t)-1\right\}^2r_0(Y_t,Z_t) +o_P\left(\frac{\sqrt{K}}{n} \right) 
				\label{nHat_H1n}
                \end{aligned}
			\end{equation}
            \end{small}
	where the third equality comes from
 \begin{equation}\label{dpi_dr}
	\begin{aligned}
	&\frac{1}{n}\sum_{t=1}^{n}\left\{\widehat \pi_K(X_t, Y_t, Z_t)-\pi_K^*(X_t,Y_t,Z_t)\right\}^2\left( \widehat{r}_{K_0}(Y_t,Z_t)-r_0(Y_t,Z_t)\right)\\
	=& 	\left( \widehat{\boldsymbol{\beta}}_{K,K_0}-\boldsymbol{\beta}^*_K\right) ^\top\frac{1}{n}\sum_{t=1}^{n}\left( \widehat{r}_{K_0}(Y_t,Z_t)-r_0(Y_t,Z_t)\right)v_K(X_t,Y_t,Z_t)v_K^\top(X_t,Y_t,Z_t)	\left( \widehat{\boldsymbol{\beta}}_{K,K_0}-\boldsymbol{\beta}^*_K\right)\\
	\leq&\left\Vert\frac{1}{n}\sum_{t=1}^{n}\left( \widehat{r}_{K_0}(Y_t,Z_t)-r_0(Y_t,Z_t)\right)v_K(X_t,Y_t,Z_t)v_K^\top(X_t,Y_t,Z_t)\right\Vert\cdot	\left\Vert \widehat{\boldsymbol{\beta}}_{K,K_0}-\boldsymbol{\beta}^*_K\right\Vert^2\\
	=&O_P\left( \xi_K\left\{\frac{K_0}{n}+K_0^{-2\omega_{r}}+\frac{K}{n}\right\}\left\{K_0^{-\omega_{r}}+\sqrt{\frac{K_0}{n}}\right\}\right)\ \text{(by (\ref{Hhat-H}) and (\ref{prop_beta}))}
	\end{aligned}
 \end{equation}
and
\begin{align*}
&\frac{1}{n}\sum_{t=1}^{n}\left\{\pi_K^*(X_t,Y_t,Z_t)-1\right\}^2\left( \widehat{r}_{K_0}(Y_t,Z_t)-r_0(Y_t,Z_t)\right)\\
\leq&\sup_{(x,y,z)\in\mathcal{X}\times\mathcal{Y}\times \mathcal{Z}}|\pi_K^*(X,Y,Z)-1|^2
\cdot\frac{1}{n}\sum_{t=1}^{n}\left( \widehat{r}_{K_0}(Y_t,Z_t)-r_0(Y_t,Z_t)\right)\\
=&O_P\left(d_n^2\left\{K_0^{-\omega_{r}}+\sqrt{\frac{K_0}{n}}\right\} \right) .
\end{align*}
Under $H_{1n}$, using Lemma \ref{prop:vtilde} with $\pi_0$ replaced by $ \pi_n$  , we get
		\begin{equation}	
			\begin{aligned}
\widehat{\boldsymbol{\beta}}_{K,K_0}-\boldsymbol{\beta}^*_K
				=&H_K^{-1}\cdot\frac{1}{n}\sum_{t=1}^{n}\widetilde{v}_K(X_t,Y_t,Z_t)+o_P\left(\frac{K^{1/4}}{\sqrt{n}}\right),
			\end{aligned}
		\end{equation}
where
    \begin{equation}
    \begin{aligned}\label{vtilde_H1n}
\widetilde{v}_K(X_t,Y_t,Z_t)=&\mathbb{E}\left[v_K(X_t,Y_t,Z_t)r_0(Y_t,Z_t)\pi_{n}(X_t,Y_t,Z_t)|X_t,Z_t\right]\\
&+\mathbb{E}\left[v_K(X_t,Y_t,Z_t)r_0(Y_t,Z_t)\pi_{n}(X_t,Y_t,Z_t)|Y_t\right]\\	&-\mathbb{E}\left[v_K(X_t,Y_t,Z_t)r_0(Y_t,Z_t)\pi_K^*(X_t,Y_t,Z_t)|Y_t\right]\\
&-\mathbb{E}\left[v_K(X_t,Y_t,Z_t)r_0(Y_t,Z_t)\pi_K^*(X_t,Y_t,Z_t)|Z_t\right]\\
	&+\mathbb{E}[v_{K}(X,Y,Z)r_0(Y,Z)\pi_K^*(X,Y,Z)]
	\\
	&	-v_{K}(X_t,Y_t,Z_t)r_0(Y_t,Z_t)\pi_K^*(X_t,Y_t,Z_t).	
    \end{aligned}
	\end{equation}
	Define
    \begin{align*}
        S(X_t,Y_t,Z_t;x,y,z)
=\sqrt{r_0(y,z)}v_{K}^\top(x,y,z)H_K^{-1}\widetilde{v}_K(X_t,Y_t,Z_t).
    \end{align*}
		  For a fix $K$, 	similar to (\ref{nhatI}), we have 
        \begin{small}
            \begin{align*}
	2n\times\widehat{I}_{K,K_0}
			=&\int\left\{\frac{1}{\sqrt{n}}\sum_{t=1}^{n}S(X_t, Y_t, Z_t;x,y,z)+\sqrt{n}\sqrt{r_0(y,z)}\left(\pi_K^*(x,y,z)-1\right)\right\}^2dF_{X,Y,Z}(x,y,z)\\
            &+o_P(1).
		\end{align*}
        \end{small}
		
		Then, similar to the
		proof of Theorem \ref{th_cH0_fixk} , we have the following weak
		convergence result 
		\begin{align}  \label{eq:StoG}
			\frac{1}{\sqrt{n}} \sum_{t=1}^nS(X_t, Y_t, Z_t;\cdot,\cdot,\cdot)\xRightarrow{w}%
			\mathbb{G}(\cdot,\cdot,\cdot),
		\end{align}
		where $	\mathbb{G}(\cdot)$ is a Gaussian process with mean zero and covariance function $%
		\{ V_{K}((x,y,z),\\(x^{\prime },y^{\prime },z^{\prime })):(x,y,z), (x^{\prime },y^{\prime },z^{\prime })\in 
		\mathbb{R}^{d_X}\times \mathbb{R}^{d_Y} \times \mathbb{R}^{d_Z}\}$ defined 
 in (\ref{cov_VK}) in which $\widetilde{v}_K(X_t,Y_t,Z_t)$ is replaced by (\ref{vtilde_H1n}).
		
		Next, 	let $S_{K}$ denote the linear space spanned by $v_K(x,y,z)$ and consider the weighted least-square projection (w.r.t. the norm $L^2(r_0(y,z)dF_{X,Y,Z}(x,y,z))$) of a function $\phi(x,y,z)$ on $S_{K}$, i.e. $\text{proj}^{wls}_{v_K}\phi(x,y,z)=v_K^\top(x,y,z)\boldsymbol{\beta}^{wls} $ and
		\begin{equation*}
			\boldsymbol{\beta}^{wls}:=\arg\min_{\boldsymbol{\beta} \in\mathbb{R}^K}\mathbb{E}\left[r_0(Y,Z)\left\{\phi (X,Y,Z)-v_K^\top(X,Y,Z)\boldsymbol{\beta}\right\}^2\right].
		\end{equation*}
		Then we have 
		\begin{equation*}
		    \begin{aligned}
		        &\text{proj}^{wls}_{v_K}\phi(X,Y,Z)\\
			=&\mathbb{E}\left[r_0(Y,Z)\phi(X,Y,Z)v_K^\top(X,Y,Z)\right]\left\{\mathbb{E}\left[r_0(Y,Z)v_K(X,Y,Z)v_K^\top(X,Y,Z)\right]\right\}^{-1}v_K(X,Y,Z).
		    \end{aligned}
		\end{equation*}
 From (\ref{proj}), we have
		$
	\text{proj}^{wls}_{v_K}\pi_K^*(X,Y,Z)=\text{proj}^{wls}_{v_K}\pi_n(X,Y,Z)$,
		and
		\begin{equation*}
			\begin{aligned}
				&\sqrt{n}\left(\pi_{K}^*(X,Y,Z)-1 \right)=\sqrt{n}\left(v_K^\top(X,Y,Z)\boldsymbol{\beta}_K^*-1 \right) =\sqrt{n}\left(v_K^\top(X,Y,Z)H_K^{-1}h_K-1 \right)\\
				=&\sqrt{n}\ \text{proj}^{wls}_{v_K}\pi_n(X,Y,Z)-\sqrt{n}
				=\sqrt{n}\ \text{proj}^{wls}_{v_K}\left( 1+d_n\Delta(X,Y,Z)\right) -\sqrt{n} 
				=\text{proj}^{wls}_{v_K}\Delta(X,Y,Z).
			\end{aligned}
		\end{equation*}
		Combining with (\ref{nHat_H1n}) and (\ref{eq:StoG}),  we can obtain
		\begin{equation*}
			2n\times \widehat I_{K,K_0}\xrightarrow{d}\int\left\{\mathbb{G}(x,y,z)+\sqrt{r_0(y,z)}\ \text{proj}^{wls}_{v_K}\Delta(x,y,z)\right\}^2dF_{X,Y,Z}(x,y,z)\text{ under }  H_{1n}.
		\end{equation*}
		
		\clearpage
		\section{Proof of Theorem $\ref{th_cH1n}$}
		\label{sec:th_cH1n}
	By 	\eqref{nHat_H1n}, we have
		\begin{equation}\label{IhatQ}
			\begin{aligned}
			2	\widehat{I}_{K,K_0}=&\frac{1}{n}\sum_{t=1}^{n}\{\widehat{\pi}_{K,K_0}(X_t, Y_t, Z_t)-1\}^2 \widehat{r}_{K_0}(Y_t,Z_t) \\
				=&\frac{1}{n}\sum_{t=1}^{n}\left\{\widehat{\pi}_{K,K_0}(X_t, Y_t, Z_t)-\pi_K^*(X_t, Y_t, Z_t)\right\}^2r_0(Y_t,Z_t)\\
				&+\frac{2}{n}\sum_{t=1}^{n}\{\widehat\pi_{K,K_0}(X_t, Y_t, Z_t)-\pi_K^*(X_t, Y_t, Z_t)\}\{\pi_K^*(X_t, Y_t, Z_t)-1\}r_0(Y_t,Z_t)\\
				&+\frac{1}{n}\sum_{t=1}^{n}\{\pi_K^*(X_t, Y_t, Z_t)-1\}^2r_0(Y_t,Z_t)+o_P\left(\frac{\sqrt{K}}{n} \right) \\
				\equiv& Q_{1n}+Q_{2n}+Q_{3n}+o_P\left(\frac{\sqrt{K}}{n} \right) 
			\end{aligned}
		\end{equation}
		where the definitions of $Q_{1n}$, $Q_{2n}$, and $Q_{3n}$ are obvious. Similar to the proof of Theorem \ref{th_cH0},  we can show that
	\begin{equation*}
		\frac{n\{Q_{1n}-2B_{K}\}}{\sigma_{K}}\xrightarrow{d}\mathcal{N}(0,1).
	\end{equation*}

		For $Q_{2n}$,   by Lemma \ref{lemma_inf} with $\phi(x,y,z)=\pi_K^*(x,y,z)-1$,  since $\mathbb{E}\left[\phi_K(X_t, Y_t, Z_t)\right]=0$, we have
	\begin{equation*}
		\begin{aligned}
	&	\mathbb{E}\left[\left|\frac{1}{n}\sum_{t=1}^{n}\phi_K(X_t, Y_t, Z_t)\right|^2\right]\\
	=&\frac{1}{n}\mathbb{E}\left[|\phi_K(X_t, Y_t, Z_t)|^2\right]+\frac{2}{n^2}\sum_{1\leq i<j\leq n}\mathbb{E}\left[\phi_K(X_i, Y_i, Z_i)\phi_K(X_j,Y_j,Z_j)\right]\\
=&\frac{1}{n}\mathbb{E}\left[|\phi_K(X_t, Y_t, Z_t)|^2\right]+\frac{2}{n^2}\sum_{t=1}^{n}\sum_{\tau =1}^{n-t}Cov\left(\phi_K(X_t, Y_t, Z_t), \phi_K(X_{t+\tau},Y_{t+\tau},Z_{t+\tau})\right)\\
\leq&\frac{1}{n}\mathbb{E}\left[|\phi_K(X_t, Y_t, Z_t)|^2\right]+\frac{8}{n^2}\sum_{t=1}^{n}\sum_{\tau =1}^{n-t}\sup_{(x,y,z)\in\mathcal{X}\times\mathcal{Y}\times \mathcal{Z}}|\phi_{K}(x,y,z)|^2\beta_{\tau}\\
\leq&O(1)\cdot\frac{1}{n}\mathbb{E}\left[|\phi(X_t, Y_t, Z_t)|^2\right]+O(1)\cdot\frac{1}{n}\sup_{(x,y,z)\in\mathcal{X}\times\mathcal{Y}\times \mathcal{Z}}|\phi(x,y,z)|^2\\
=&O\left( \frac{1}{n}\right) \cdot\sup_{(x,y,z)\in\mathcal{X}\times\mathcal{Y}\times \mathcal{Z}}|\phi(x,y,z)|^2
		\end{aligned}
	\end{equation*}
		and $$\frac{1}{n}\sum_{t=1}^{n}\phi_K(X_t, Y_t, Z_t)=O_P\left(\frac{1}{\sqrt{n}}\sup_{(x,y,z)\in\mathcal{X}\times\mathcal{Y}\times \mathcal{Z}}|\phi(x,y,z)| \right).$$ 
		Note that under $H_{1n}:  \pi_n(X,Y,Z)= 1+d_n\cdot\Delta(X,Y,Z)$, 
		\begin{align*}
			\mathbb{E}\left[\left\{\pi_K^*(X,Y,Z)-1\right\}^2\right]\leq&2\cdot\mathbb{E}\left[\left\{\pi_K^*(X,Y,Z)-\pi_n(X,Y,Z)\right\}^2\right]+2\cdot d_n^2\mathbb{E}\left[\Delta^2(X,Y,Z)\right]\\
			=&O\left(d_n^2\epsilon_K^2\right)+O\left(d_n^2\right) =O\left( d_n^2\right),
		\end{align*}
	and 	
	\begin{equation*}
	    \begin{aligned}
&\sup_{(x,y,z)\in\mathcal{X}\times\mathcal{Y}\times \mathcal{Z}}|\pi_K^*(X,Y,Z)-1|\\
\leq&\sup_{(x,y,z)\in\mathcal{X}\times\mathcal{Y}\times \mathcal{Z}}|\pi_K^*(X,Y,Z)-\pi_n(X,Y,Z)|+\sup_{(x,y,z)\in\mathcal{X}\times\mathcal{Y}\times \mathcal{Z}} |d_n\Delta(x,y,z)|\\
=&O\left(d_n\epsilon_K\right)+O\left(d_n\right) =O\left( d_n\right),
	    \end{aligned}
	\end{equation*}
		then  under Assumptions \ref{ass:cK}-\ref{ass:proj_error}  with $d_n=\sigma_{K}^{1/2}/n^{1/2}\asymp {K}^{1/4}/n^{1/2}$, we have
		\begin{align*}
				Q_{2n}=&\frac{2}{n}\sum_{t=1}^{n}\{\widehat\pi_{K,K_0}(X_t, Y_t, Z_t)-\pi_K^*(X_t, Y_t, Z_t)\}\{\pi_K^*(X_t, Y_t, Z_t)-1\}r_0(Y_t,Z_t)\\
			=&O_P\left(\left\{K_0^{-\omega_{r}}+\frac{\sqrt{\xi_K}K^{1/4}}{n^{(2-\epsilon)/2}}+\frac{\sqrt{\zeta_{K_0}}K_0^{3/4}}{n^{(2-\epsilon)/2}}+\frac{\xi_{K}K_0}{n}+ \frac{\xi_{K}K}{n}+\sqrt{K}K_0^{-\omega_0}+\frac{1}{\sqrt{n}} \right\}\cdot d_n\right)\\
            &+O_P\left(\sqrt{\frac{K}{n}}\left\{\sqrt{\frac{K_0}{n}}+K_0^{-\omega_r}+\sqrt{\frac{K}{n}}\right\} \cdot d_n\right)\\
			=&o_P\left(\frac{\sqrt{K}}{n} \right). 
		\end{align*}
		For $Q_{3n}$, with $d_n=\sigma_{K}^{1/2}/n^{1/2}\asymp {K}^{1/4}/n^{1/2}$, we have
			\begin{align*}
				Q_{3n}&=\frac{1}{n}\sum_{t=1}^{n}\left\{\pi_K^*(X_t, Y_t, Z_t)-1\right\}^2r_0(Y_t,Z_t)\\
				&=\frac{1}{n}\sum_{t=1}^{n}\left\{\pi_K^*(X_t, Y_t, Z_t)-\pi_n(X_t, Y_t, Z_t)+d_n \Delta (X_t, Y_t, Z_t)\right\}^2r_0(Y_t,Z_t)\\
				&=\frac{1}{n}\sum_{t=1}^{n}\left\{\pi_K^*(X_t, Y_t, Z_t)-\pi_n(X_t, Y_t, Z_t)\right\}^2r_0(Y_t,Z_t)\\
				&+\frac{2d_n}{n}\sum_{t=1}^{n}\left[\pi_K^*(X_t, Y_t, Z_t)-\pi_n(X_t, Y_t, Z_t)\right]\Delta(X_t, Y_t, Z_t)r_0(Y_t,Z_t)\\&+\frac{d_n^2}{n}\sum_{t=1}^{n}\Delta^2(X_t, Y_t, Z_t)r_0(Y_t,Z_t)\\
				&=O_p\left(d_n^2\varepsilon_K^2\right)+O_p\left(d_n^2\varepsilon_K\right)+d_n^2\mathbb{E}[\Delta^2(X,Y,Z)r_0(Y,Z)]+o_p\left(d_n^2 \right)\\
				&=\frac{\sigma_{K}}{n}\mathbb{E}[\Delta^2(X,Y,Z)r_0(Y,Z)]+O_p\left(d_n^2\varepsilon_K\right)+o_p\left(d_n^2 \right)\\
				&=\frac{\sigma_{K}}{n}\mathbb{E}[\Delta^2(X,Y,Z)r_0(Y,Z)]+o_p\left(\frac{\sqrt{K}}{n}\right).
			\end{align*}
		Therefore, by  \eqref{IhatQ} we have
		\begin{equation*}
			\begin{aligned}
				\frac{2n}{ \sigma_{K}}(\widehat{I}_{K,K_0}- B_{K})\xrightarrow{d}\mathbb{E}[\Delta^2(X,Y,Z)r_0(Y,Z)]+\mathcal N\left( 0,1\right). 
			\end{aligned}
		\end{equation*}

		\clearpage
		\section{Proof of Theorem 	\ref{thm:cglobal_fixK}}
		\label{app:cglobal_fixK}
		Similar with	\eqref{nHat_H1n}, we have
			\begin{align*}
			&2	\widehat{I}_{K,K_0}=\frac{1}{n}\sum_{t=1}^{n}\{\widehat{\pi}_{K,K_0}(X_t, Y_t, Z_t)-1\}^2 \widehat{r}_{K_0}(Y_t,Z_t) \\
   =&\frac{1}{n}\sum_{t=1}^{n}\{\widehat{\pi}_{K,K_0}(X_t, Y_t, Z_t)-\pi_K^*(X_t, Y_t, Z_t)+\pi_K^*(X_t, Y_t, Z_t)-1\}^2 \left\{ \widehat{r}_{K_0}(Y_t,Z_t) -r_0(Y_t,Z_t)\right\}	 \\
   &+\frac{1}{n}\sum_{t=1}^{n}\{\widehat{\pi}_{K,K_0}(X_t, Y_t, Z_t)-\pi_K^*(X_t, Y_t, Z_t)+\pi_K^*(X_t, Y_t, Z_t)-1\}^2 r_0(Y_t,Z_t) \\
				=&\frac{1}{n}\sum_{t=1}^{n}\left\{\widehat{\pi}_{K,K_0}(X_t, Y_t, Z_t)-\pi_K^*(X_t, Y_t, Z_t)\right\}^2r_0(Y_t,Z_t)\\
				&+\frac{2}{n}\sum_{t=1}^{n}\{\widehat\pi_{K,K_0}(X_t, Y_t, Z_t)-\pi_K^*(X_t, Y_t, Z_t)\}\{\pi_K^*(X_t, Y_t, Z_t)-1\}r_0(Y_t,Z_t)\\
				&+\frac{1}{n}\sum_{t=1}^{n}\{\pi_K^*(X_t, Y_t, Z_t)-1\}^2r_0(Y_t,Z_t)\\
    &+\frac{1}{n}\sum_{t=1}^{n}\{\pi_K^*(X_t, Y_t, Z_t)-1\}^2\left\{ \widehat{r}_{K_0}(Y_t,Z_t) -r_0(Y_t,Z_t)\right\}	\\
    &+ \frac{2}{n}\sum_{t=1}^n\{\widehat\pi_{K,K_0}(X_t, Y_t, Z_t)-\pi_K^*(X_t, Y_t, Z_t)\}\{\pi_K^*(X_t, Y_t, Z_t)-1\}\left\{ \widehat{r}_{K_0}(Y_t,Z_t) -r_0(Y_t,Z_t)\right\}\\
    &+O_P\left( \xi_K\left\{\frac{K_0}{n}+K_0^{-2\omega_{r}}+\frac{K}{n}\right\}\left\{K_0^{-\omega_{r}}+\sqrt{\frac{K_0}{n}}\right\}\right)\\
				=& \widehat I_{1K}+ \widehat I_{2K}+ \widehat I_{3K}+\widehat I_{4K}+\widehat I_{5K}+o_P\left(\frac{\sqrt{K}}{n} \right),
			\end{align*}
		where the third equality comes from (\ref{dpi_dr}) and the definitions of $\widehat I_{1K}$, $\widehat I_{2K}$, $\widehat I_{3K}$  $\widehat I_{4K}$ and $\widehat I_{5K}$ are obvious.
		\begin{itemize}
			\item If $\mathbb{P}\left(\pi _{K}^{\ast }(X,Y,Z)=1\right) =1,$ we have $ \widehat I_{2K}= \widehat I_{3K}=\widehat I_{4K}=\widehat I_{5K}=0$ and
			\begin{equation*}
				\begin{aligned}
				2	\widehat{I}_{K,K_0}=\frac{1}{n}\sum_{t=1}^{n}\{ \widehat \pi_{K,K_0}(X_t, Y_t, Z_t)- \pi_{K}^*(X_t, Y_t, Z_t)\}^2r_0(Y_t,Z_t).
				\end{aligned}
			\end{equation*}
			Similar to (\ref{nhatI}), under $H_1$  and $\pi_K^*(x,y,z)\equiv
			1$ a.s., for a fixed $K$, we  have
			\begin{equation*}
				\begin{aligned}
					2n\times \widehat{I}_{K,K_0}	&=\int\left\{\frac{1}{\sqrt{n}}\sum_{t=1}^{n}S_1(X_t, Y_t, Z_t;x,y,z)\right\}^2dF_{X,Y,Z}(x,y,z)+o_P(1)\\
					&\xrightarrow{d} \int |\mathbb{G}_1(x,y,z)|^2dF_{X,Y,Z}(x,y,z), 
				\end{aligned}
			\end{equation*}
			where 
				\begin{equation*}
				\begin{aligned}
					S_1(X_t,Y_t,Z_t;x,y,z)
=\sqrt{r_0(y,z)}v_{K}^\top(x,y,z)H_K^{-1}\widetilde{v}_K(X_t,Y_t,Z_t),
				\end{aligned}
			\end{equation*}
	with
			\begin{equation}
				\begin{aligned}\label{vtilde_H1}
			\widetilde v_{K}(X_t,Y_t,Z_t)=&\mathbb{E}\left[v_K(X_t,Y_t,Z_t)r_0(Y_t,Z_t)\pi_{0}(X_t,Y_t,Z_t)|X_t,Z_t\right]\\
   &+\mathbb{E}\left[v_K(X_t,Y_t,Z_t)r_0(Y_t,Z_t)\pi_{0}(X_t,Y_t,Z_t)|Y_t\right]\\	&-\mathbb{E}\left[v_K(X_t,Y_t,Z_t)r_0(Y_t,Z_t)\pi_K^*(X_t,Y_t,Z_t)|Y_t\right]\\
   &-\mathbb{E}\left[v_K(X_t,Y_t,Z_t)r_0(Y_t,Z_t)\pi_K^*(X_t,Y_t,Z_t)|Z_t\right]\\
			&+\mathbb{E}[v_{K}(X,Y,Z)r_0(Y,Z)\pi_K^*(X,Y,Z)]
			\\
			&-v_{K}(X_t,Y_t,Z_t)r_0(Y_t,Z_t)\pi_K^*(X_t,Y_t,Z_t),
	\end{aligned}
			\end{equation}
			and $\mathbb{G}_1(x,y,z)$ is a Gaussian process with mean zero and covariance function $%
			\{ V_{K}((x,y,z),\\(x^{\prime },y^{\prime },z^{\prime })):(x,y,z), (x^{\prime },y^{\prime },z^{\prime })\in 
			\mathbb{R}^{d_X}\times \mathbb{R}^{d_Y} \times \mathbb{R}^{d_Z}\}$  defined 
 in (\ref{cov_VK}) in which $\widetilde{v}_K(X_t,Y_t,Z_t)$ is replaced by (\ref{vtilde_H1}).
			
			\item If $\mathbb{P}\left(\pi _{K}^{\ast }(X,Y,Z)=1\right) < 1$,   since $\beta_m=O(\rho ^m)$ under Assumption \ref{ass:ciid},  using Corollary 5.1 in \cite{hall2014martingale}, we have
            \begin{small}
			\begin{align*}
				&\sqrt{n}\left\{\widehat{I}_{3K}-\mathbb{E}\left[\{ \pi_{K}^*(X,Y,Z)-1\}^2r_0(Y,Z)\right]
				\right\}\xrightarrow{d}\mathcal N\left(0,Var\left( \{ \pi_{K}^*(X,Y,Z)-1\}^2r_0(Y,Z)\right) \right).
			\end{align*}
        \end{small}
			With Lemma \ref{lemma_inf}, since $\text{proj}^{wls}_{v_K}\left\{\pi_K^*(X_t,Y_t,Z_t)-1\right\}=\pi_K^*(X_t,Y_t,Z_t)-1$, we have
				\begin{align*}
					\widehat I_{2K}=&\frac{2}{n}\sum_{t=1}^{n}\{ \pi_{K}^*(X_t, Y_t, Z_t)-1\}\{ \widehat \pi_{K,K_0}(X_t, Y_t, Z_t)- \pi_{K}^*(X_t, Y_t, Z_t)\}r_0(Y_t,Z_t)
					\\ =&\frac{2}{n}\sum_{t=1}^{n}\phi_{1,K}(X_t, Y_t, Z_t)\\
            &+O_P\left(\sqrt{\frac{K}{n}}\left\{\sqrt{\frac{K_0}{n}}+K_0^{-\omega_r}+\sqrt{\frac{K}{n}}\right\}\left\{\mathbb{E}\left[|\pi_K^*(X_t,Y_t,Z_t)-1|^{4+\eta}\right]\right\}^{\frac{1}{4+\eta}}\right)
				\\&+O_P\left(\left\{K_0^{-\omega_{r}}+\frac{\sqrt{\xi_K}K^{1/4}}{n^{(2-\epsilon)/2}}+\frac{\sqrt{\zeta_{K_0}}K_0^{3/4}}{n^{(2-\epsilon)/2}}+\frac{\xi_{K}K_0}{n}+ \frac{\xi_{K}K}{n}+\sqrt{K}K_0^{-\omega_0}\right\}\right.\\
                &\qquad\qquad \left.\times  \sqrt{\mathbb{E}\left[\left\{\pi_K^*(X,Y,Z)-1\right\}^2\right]}\right),
				\end{align*}
		where
			\begin{align*}
				\phi_{1,K}(X_t,Y_t,Z_t)
    =&\mathbb{E}\left[r_0(Y_t,Z_t)\pi_{0}(X_t,Y_t,Z_t)\left\{\pi_K^*(X_t,Y_t,Z_t)-1\right\}|X_t,Z_t\right]\\
&+\mathbb{E}\left[r_0(Y_t,Z_t)\pi_{0}(X_t,Y_t,Z_t)\left\{\pi_K^*(X_t,Y_t,Z_t)-1\right\}|Y_t\right]\\	&-\mathbb{E}\left[r_0(Y_t,Z_t)\pi_K^*(X_t,Y_t,Z_t)\left\{\pi_K^*(X_t,Y_t,Z_t)-1\right\}|Y_t\right]\\
				&-\mathbb{E}\left[r_0(Y_t,Z_t)\pi_K^*(X_t,Y_t,Z_t)\left\{\pi_K^*(X_t,Y_t,Z_t)-1\right\}|Z_t\right]\\
				&+\mathbb{E}[r_0(Y,Z)\pi_K^*(X,Y,Z)\left\{\pi_K^*(X,Y,Z)-1\right\}]
				\\
				&-r_0(Y_t,Z_t)\pi_K^*(X_t,Y_t,Z_t)\left\{\pi_K^*(X_t,Y_t,Z_t)-1\right\},
			\end{align*}
			and with (\ref{cpihat-pi*}) and Theorem \ref{th_crate}
		(i), 
        \begin{equation*}
				\begin{aligned}
					\widehat I_{1K}=&\frac{1}{n}\sum_{t=1}^{n}\{ \widehat \pi_{K,K_0}(X_t, Y_t, Z_t)- \pi_{K}^*(X_t, Y_t, Z_t)\}^2r_0(Y_t,Z_t)\\
                    =&O_P\left( \frac{K_0}{n}+K_0^{-2\omega_{r}}+\frac{K}{n}\right),
				\end{aligned}
                \end{equation*}
   \begin{small}
				\begin{align*}
    \widehat I_{5K}= & \frac{2}{n}\sum_{t=1}^n\{\widehat\pi_{K,K_0}(X_t, Y_t, Z_t)-\pi_K^*(X_t, Y_t, Z_t)\}\{\pi_K^*(X_t, Y_t, Z_t)-1\}\left\{ \widehat{r}_{K_0}(Y_t,Z_t) -r_0(Y_t,Z_t)\right\}\\
      \leq&2\sqrt{ \frac{1}{n}\sum_{t=1}^n\{\widehat\pi_{K,K_0}(X_t, Y_t, Z_t)-\pi_K^*(X_t, Y_t, Z_t)\}^2}\\
      &\times \sqrt{ \frac{1}{n}\sum_{t=1}^n\{\pi_K^*(X_t, Y_t, Z_t)-1\}^2\left\{ \widehat{r}_{K_0}(Y_t,Z_t) -r_0(Y_t,Z_t)\right\}^2}\\
      =&O_P\left(\left\{\sqrt{\frac{K_0}{n}}+K_0^{-\omega_r}+\sqrt{\frac{K}{n}}\right\}\left\{K_0^{-\omega_{r}}+\sqrt{\frac{K_0}{n}}\right\}\right).
 \end{align*}
  \end{small}
			Then under Assumptions \ref{ass:cbounded}-\ref{ass:cpi0-pi*} and \ref{ass:cK}-\ref{ass:proj_error},  we have
			\begin{equation*}
				\begin{aligned}
					\sqrt{n}	\widehat I_{2K}
					=\frac{2}{\sqrt{n}}\sum_{t=1}^{n}\phi_{1,K}(X_t, Y_t, Z_t)+o_P(1),
				\end{aligned}
			\end{equation*}
			\begin{equation*}
				\begin{aligned}
					\sqrt{n}	\widehat I_{1K}=O_P\left(\sqrt{n}\left\{ \frac{K_0}{n}+K_0^{-2\omega_{r}}+\frac{K}{n}\right\}\right) =o_P(1),
				\end{aligned}
			\end{equation*}
   and\begin{equation*}
				\begin{aligned}
					\sqrt{n}	\widehat I_{5K}=O_P\left(\sqrt{n}\left\{\sqrt{\frac{K_0}{n}}+K_0^{-\omega_r}+\sqrt{\frac{K}{n}}\right\}\left\{K_0^{-\omega_{r}}+\sqrt{\frac{K_0}{n}}\right\}\right)=o_P(1).
				\end{aligned}
			\end{equation*}
   Note that under Assumptions \ref{ass:ciid}, \ref{ass:eigen} and \ref{ass:cK},
       \begin{align*}
           \widehat I_{4K}=&\frac{1}{n}\sum_{t=1}^{n}\{\pi_K^*(X_t, Y_t, Z_t)-1\}^2\left\{ \widehat{r}_{K_0}(Y_t,Z_t) -r_{K_0}^*(Y_t,Z_t)\right\}\\
           &+\frac{1}{n}\sum_{t=1}^{n}\{\pi_K^*(X_t, Y_t, Z_t)-1\}^2\left\{ r_{K_0}^*(Y_t,Z_t)-r_0(Y_t,Z_t)\right\}\\
           =&\frac{1}{n}\sum_{t=1}^{n}\{\pi_K^*(X_t, Y_t, Z_t)-1\}^2u_{K_0}^\top(Y_t,Z_t)\left\{ \widehat{\gamma}_{K_0}-\gamma_{K_0}^*\right\}+O_P\left(K_0^{-\omega_r}\right)\\
           =&\mathbb{E}\left[\{\pi_K^*(X_t, Y_t, Z_t)-1\}^2u_{K_0}^\top(Y_t,Z_t)\right]\left\{ \widehat{\gamma}_{K_0} -\gamma_{K_0}^*\right\}\\
           &+O_P\left(\frac{\zeta_{K_0}\sqrt{K_0}}{n}\right)+O_P\left(K_0^{-\omega_r}\right)\\
           =&\mathbb{E}\left[\{\pi_K^*(X_t, Y_t, Z_t)-1\}^2u_{K_0}^\top(Y_t,Z_t)\right]\left\{\frac{1}{n}\sum_{t=1}^{n}	\int_{\mathcal{Z}}u_{K_0}(Y_t,z)f_Z(z)dz\right.\\
           &\left.+\frac{1}{n}\sum_{t=1}^{n}	\int_{\mathcal{Y}}u_{K_0}(y,Z_t)f_Y(y)dy -2\mathbb{E}\left[u_{K_0}(Y,Z)r_0(Y,Z)\right]\right\}\\
           &+O_P\left( \frac{\sqrt{\zeta_{K_0}}K_0^{1/4}}{n^{(2-\epsilon)/2}}\right)+O_P\left(\frac{\zeta_{K_0}\sqrt{K_0}}{n}\right)+O_P\left(K_0^{-\omega_r}\right)\ (\text{by} (\ref{gamma}))\\
          =&\frac{1}{n}\sum_{t=1}^{n}\phi_{2,K}(X_t, Y_t, Z_t)+o_P\left(\frac{1}{\sqrt{n}}\right), 
       \end{align*}
   where
   \begin{equation*}
       \begin{aligned}
           \phi_{2,K}(X_t, Y_t, Z_t)=&\mathbb{E}\left[\left(\pi_K^*(X_t,Y_t,Z_t)-1\right)^2r_0(Y_t,Z_t)|Y_t\right]\\
           &+\mathbb{E}\left[\left(\pi_K^*(X_t,Y_t,Z_t)-1\right)^2r_0(Y_t,Z_t)|Z_t\right]\\
           &-2\mathbb{E}\left[\left(\pi_K^*(X_t,Y_t,Z_t)-1\right)^2r_0(Y_t,Z_t)\right].       \end{aligned}
   \end{equation*}
   Taking $\phi_K(X_t,Y_t,Z_t) :=2\phi_{1,K}(X_t, Y_t, Z_t)+\phi_{2,K}(X_t, Y_t, Z_t)$, we have
   \begin{equation}
\begin{aligned}\label{phiK'}
           \phi_K(X_t,Y_t,Z_t) 
=&2\mathbb{E}\left[r_0(Y_t,Z_t)\pi_{0}(X_t,Y_t,Z_t)\{\pi_K^*(X_t,Y_t,Z_t)-1\}|X_t,Z_t\right]\\
     &+2\mathbb{E}\left[r_0(Y_t,Z_t)\pi_{0}(X_t,Y_t,Z_t)\{\pi_K^*(X_t,Y_t,Z_t)-1\}|Y_t\right]\\			
    &-\mathbb{E}\left[r_0(Y_t,Z_t)\{\pi_K^*(X_t,Y_t,Z_t)\}^2|Y_t\right]\\
				&-\mathbb{E}\left[r_0(Y_t,Z_t)\{\pi_K^*(X_t,Y_t,Z_t)\}^2|Z_t\right]\\
		&+2\mathbb{E}[r_0(Y,Z)\pi_K^*(X,Y,Z)]\\
        &-2r_0(Y_t,Z_t)\pi_K^*(X_t,Y_t,Z_t)\{\pi_K^*(X_t,Y_t,Z_t)-1\}.
       \end{aligned}
   \end{equation}
			It worth noting that $\mathbb{E}\left[\phi_K(X,Y,Z)\right]=0$.
			Therefore, we can obtain that
			\begin{equation*}
				\frac{\sqrt{n}\left\{ 2\widehat{I}_{K,K_0}-\mathbb{E}\left[ \left\{\pi_K^*(X,Y,Z)-1\right\}^2r_0(Y,Z) \right] \right\} }{\sqrt{Var\left( \left\{\pi_K^*(X,Y,Z)-1\right\}^2r_0(Y,Z) +\phi_{K}\left( X,Y,Z\right) \right) }}\xrightarrow{d}\mathcal N(0,1).
			\end{equation*}%
		\end{itemize}

		\clearpage
		\section{Proof of Theorem \ref{thm:cglobal}}
		\label{app:cglobal}
	
		As $K \rightarrow \infty$, we have $\pi_K^*(X,Y,Z) \rightarrow \pi_0(X,Y,Z)$  in $L^2(dF_{X,Y,Z})$ by (\ref{pi*-pi0_L2}). Theorem \ref{thm:cglobal} holds in light of this result, Theorem \ref{thm:cglobal_fixK}, and \eqref{phiK'} converges to $\phi_0(X_t,Y_t,Z_t) $ as $K \rightarrow \infty$, where
        \begin{small}
		\begin{equation}
			\begin{aligned}
				&\phi_0(X_t,Y_t,Z_t)
=2\mathbb{E}\left[r_0(Y_t,Z_t)\pi_{0}^2(X_t,Y_t,Z_t)|X_t,Z_t\right]+\mathbb{E}\left[r_0(Y_t,Z_t)\pi_{0}^2(X_t,Y_t,Z_t)|Y_t\right]\\
&\qquad-\mathbb{E}[r_0(Y_t,Z_t)\pi_0^2(X_t,Y_t,Z_t)|Z_t]-2r_0(Y_t,Z_t)\pi_0(X_t,Y_t,Z_t)\{\pi_0(X_t,Y_t,Z_t)-1\}-2.
			\end{aligned}
		\end{equation}
	\end{small}

	\end{document}